\providecommand{\tabularnewline}{\\}
\theoremstyle{plain}
\newtheorem{thm}{\protect\theoremname}
\theoremstyle{plain}
\newtheorem{prop}[thm]{\protect\propositionname}
\theoremstyle{plain}
\newtheorem{cor}[thm]{\protect\corollaryname}
\theoremstyle{plain}
\newtheorem{lem}[thm]{\protect\lemmaname}
\definecolor{lg}{gray}{0.8}
\newcommand{\PVSnew}{\color{black}}
\newcommand{\PVSedit}{\color{black}}
\numberwithin{thm}{section}
\numberwithin{equation}{section}
\numberwithin{table}{section}
\numberwithin{figure}{section}
\newtheoremstyle{boldremark}%                % Name
  {}%                                     % Space above
  {}%                                     % Space below
  {}%                             % Body font
  {}%                                     % Indent amount
  {\bfseries}%                            % Theorem head font
  {.}%                                    % Punctuation after theorem head
  { }%                                    % Space after theorem head, ' ', or \newline
  {}%  
\theoremstyle{boldremark}
\newtheorem{brem}{Remark} % remarks are numbered within sections
\numberwithin{brem}{section}
\newtheoremstyle{assumption}%                % Name
  {}%                                     % Space above
  {}%                                     % Space below
  {}%                             % Body font
  {}%                                     % Indent amount
  {\bfseries}%                            % Theorem head font
  {.}%                                    % Punctuation after theorem head
  { }%                                    % Space after theorem head, ' ', or \newline
  {}%  
\theoremstyle{assumption}
\newtheorem{assumption}{Assumption} % remarks are numbered within sections
\numberwithin{assumption}{section}
\patchcmd{\appendices}{\quad}{: }{}{}
\providecommand{\corollaryname}{Corollary}
\providecommand{\lemmaname}{Lemma}
\providecommand{\propositionname}{Proposition}
\providecommand{\theoremname}{Theorem}
\providecommand{\remarkname}{Remark}
\begin{document}
\title{Smart leverage? Rethinking the role of Leveraged Exchange Traded Funds
in constructing portfolios to beat a benchmark}
\author{Pieter M. van Staden\thanks{National Australia Bank, Melbourne, Victoria, Australia 3000. The
research results and opinions expressed in this paper are solely those
of the authors, are not investment recommendations, and do not reflect
the views or policies of the NAB Group. \texttt{pieter.vanstaden@gmail.com}} \and Peter A. Forsyth\thanks{Cheriton School of Computer Science, University of Waterloo, Waterloo
ON, Canada, N2L 3G1, \texttt{paforsyt@uwaterloo.ca}} \and Yuying Li\thanks{Cheriton School of Computer Science, University of Waterloo, Waterloo
ON, Canada, N2L 3G1, \texttt{yuying@uwaterloo.ca}} }
\date{\today}
\maketitle
\begin{abstract}
Leveraged Exchange Traded Funds (LETFs), while extremely controversial
in the literature, remain stubbornly popular with both institutional
and retail investors in practice. While the criticisms of LETFs are
certainly valid, we argue that their potential has been underestimated
in the literature due to the use of very simple investment strategies
involving LETFs. In this paper, we systematically investigate the
potential of including a broad stock market index LETF in long-term,
dynamically-optimal investment strategies designed to maximize the
outperformance over standard investment benchmarks in the sense of
the information ratio (IR). Our results exploit the observation that
positions in a LETF deliver call-like payoffs, so that the addition
of a LETF to a portfolio can be a convenient way to add inexpensive
leverage while providing downside protection. Under stylized assumptions,
we present and analyze closed-form IR-optimal investment strategies
using either a LETF or standard/vanilla ETF (VETF) on the same equity
index, which provides the necessary intuition for the potential and
benefits of LETFs. In more realistic settings of infrequent trading,
leverage restrictions and additional constraints, we use a neural
network-based approach to determine the IR-optimal strategies,
trained on bootstrapped historical data. We
find that IR-optimal strategies with a broad stock market LETF are
not only more likely to outperform the benchmark than IR-optimal strategies
derived using the corresponding VETF, but are able to achieve partial
stochastic dominance over the benchmark and VETF-based strategies
in terms of terminal wealth, even if investment in the VETF can be
leveraged. Our results help to explain the empirical appeal of LETFs
to investors, and encourage the reconsideration in academic research
of the role of broad stock market LETFs within the context of more
sophisticated investment strategies.
\end{abstract}
\medskip{}

\noindent \textbf{Keywords}: Asset allocation, leveraged investing,
portfolio optimization, neural network\medskip{}

\noindent \textbf{JEL classification}: G11, C61

\section{Introduction\label{sec:Introduction}}

Leveraged Exchange Traded Funds (LETFs) are exchange traded funds
(ETFs) with the stated objective of replicating some multiple $\beta$
of the daily returns of their underlying reference assets/indices
before costs, where values of $\beta$ of $+2$, $+3$, $-2$ and
$-3$ are commonly used. In contrast, standard/vanilla ETFs (VETFs)
aim simply to replicate the returns of their underlying assets/indices
before costs (i.e. $\beta=1$).

A review of the academic literature suggests that incorporating LETFs
in investment strategies are commonly regarded with at least some
suspicion, if not outright distrust. ``\textit{Just say no to leveraged
ETFs}'', the title of a recent article (\cite{BednarekPatel2022}),
provides perhaps the most succinct summary of the broadly negative
view of LETFs that permeates the literature. There are certainly good
reasons for these negative perceptions of LETFs. A common criticism
in the literature focuses on the ``compounding'' effect of LETF
returns, which arises since a LETF returning $\beta$ times the \textit{daily}
returns of the underlying index obviously does not imply that the
LETF also returns  $\beta$ times the \textit{quarterly}
returns of the underlying index. This observation, along with the
time decay and volatility decay \textcolor{black}{of LETF positions,
results in the potential wealth-destroying effects of LETF investments
which increase with the holding time horizon (\cite{Mackintosh2008},\cite{Carver2009},
\cite{Sullivan2009}, \cite{CharupatMiu2011}). The poor investment
outcomes using LETFs reported in the literature should therefore not
come as a surprise if an investor uses a basic buy-and-hold investment
strategy (}\cite{CharupatMiu2011}, \cite{BednarekPatel2022}, \cite{LeungSantoli2012}\textcolor{black}{),
or very simple }(if not outright naive) portfolio rebalancing rules
typically considered in conjunction with unrealistically frequent
rebalancing\footnote{For example, daily rebalancing. Note that in some cases, the general
investment literature actually advises investors not to hold LETFs
for longer than a single trading session. See for example Forbes Advisor,
accessed 10 March 2024. Michael Adams. \textit{Eight best leveraged
ETFs of March 2024}. https://www.forbes.com/advisor/investing/best-leveraged-etfs.} from the perspective of long-term investors (\cite{ChenMadhavan2009},
\cite{AvellanedaZhang2010}, \cite{BansalMarshall2015}, \cite{DeVaultEtAl2021})\textcolor{black}{.
}While there are a few studies observing that LETFs could have a role
within diversified portfolios, especially where the investor might
have relatively aggressive performance targets (\cite{BansalMarshall2015},
\cite{HillFoster2009}), wish to circumvent onerous leverage restrictions
or large margin rates on borrowing (\cite{DeVaultEtAl2021}),  {\PVSedit{or
wish to outperform broad market indices using very simple strategies
with frequent rebalancing (\citet{Knapp2023,TrainorEtAl2018})}},
these perspectives remain the exception to the mainstream academic
view and tend to leave questions regarding the formulation of practical
investment strategies unanswered.

However, the contrast between the general perceptions in the academic
literature and investment practice could not be more profound. LETFs
have consistently remained incredibly popular financial products since
their introduction in 2006, as emphasized by recent headlines such
as ``\textit{Investors pump record sums into leveraged ETFs}'' (Financial
Times, November 2022\footnote{Financial Times, November 14, 2022. Steven Johnson. \textit{Investors
pump record sums into leveraged ETFs}. https://www.ft.com/content/b98ab360-2506-44f2-8e08-9d434df5f15d}) and ``\textit{Retail investors snap up triple-leveraged US equity
ETFs}'' (Financial Times, May 2024\footnote{Financial Times, May 4, 2024. George Steer and Will Schmitt. \textit{Retail
investors snap up triple-leveraged US equity ETFs}. https://on.ft.com/3WsWTom}). LETFs consistently dominate the top 10 most popular ETFs listed
on US exchanges\footnote{For example, as at 9 October 2023, four out of the five most popular
ETFs as measured by the average daily trading volume over the preceding
three months were LETFs. https://etfdb.com/compare/volume. Accessed
9 October 2023}. 

Perhaps more significantly, LETFs also enjoy substantial popularity
among institutional investors. Analyzing the quarterly reports by
institutional investment managers with at least US\$100 million in
assets under management that were filed with the SEC from September
2006 to December 2016, \cite{DeVaultEtAl2021} finds that more than
20\% of the reports reference at least one LETF in the end-of-quarter
portfolio allocation. 

Leaving speculative trading aside, what could explain the appeal of
LETFs for institutional investors? Suppose an investor wants to leverage
returns in a cost-effective way which also offers some downside protection.
Since the requirement of downside protection rules out simple leverage,
such an institutional investor has broadly speaking at least two options,
namely (a) engage with a hedge fund or fund manager to use for example
managed futures strategies, or (b) follow a dynamic trading strategy
using for example LETFs as discussed in this paper. Since the expense
ratios of LETFs range typically between 80 and 150 basis points, whereas
standard leveraged positions are subject to substantial margin rates
of borrowing which can easily exceed 5\% for smaller institutional
investors even during periods of low interest rates (\cite{DeVaultEtAl2021})
while hedge funds charge hefty management fees, LETFs are certainly
cost effective. In addition, LETFs can offer great upside returns
in combination with limited liability without the need to manage short
positions, and as discussed in this paper, positions in LETFs can
be \textit{infrequently} rebalanced while still obtaining competitive
investment outcomes relative to standard investment benchmarks.

\subsection{Main Contributions}
While the academic literature offers a sophisticated and
careful treatment of optimal rebalancing for hedging purposes in the
case of LETFs (\cite{DaiEtAl2023}), or optimal replication policies
for LETF construction (\cite{GuasoniMayerhofer2023}), in this paper
we therefore aim to make progress towards closing the observed gap
between the literature and investment practice by showing that there
is a role for LETFs in \textit{infrequently rebalanced} (e.g. quarterly
rebalanced) portfolios designed for \textit{long-term} institutional
or retail investors wishing to outperform some investment benchmark.

{\PVSedit{
Since LETFs are a relatively recent invention, historical LETF returns
since 1926 are obtained by constructing a proxy LETF replicating\footnote{We make the standard assumption (see e.g. \citet{BansalMarshall2015},
\citet{LeungSircar2015}) that the LETF managers do not have challenges
in replicating the underlying index. Note that given improvements
in designing replication strategies for LETFs that remain robust even
during periods of market volatility (see for example \citet{GuasoniMayerhofer2023}),
this appears to be a reasonable assumption for ETFs (VETFs and LETFs)
written on major equity market indices as considered in this paper.} $\beta=2$ times the daily returns of a broad US equity market index,
deflating the returns by a typical LETF expense ratio and interest
rates (see Appendix \ref{sec:Appendix - Source-data}). All returns
time series are inflation-adjusted to reflect real returns.

To ensure the practical relevance of our conclusions, illustrative
investment results are based on data sets generated using (i) stochastic
differential equations calibrated to historical data since 1926 for closed-form
solutions and (ii) block bootstrap resampling of historical data since 1926 (\citet{politis1994,CogneauZakalmouline2013,Cederburg_2022})
for neural network-based numerical solutions. Note that the bootstrap
resampling of historical data captures \textit{all} empirical qualities of actual
returns, including potentially sophisticated volatility dynamics,
which may not be reflected in the stylized
settings of stochastic differential equations.
}}

In more detail, the main contributions of this paper are as follows:

\begin{enumerate}

\item We construct dynamic (multi-period) investment strategies which
maximize the information ratio (IR) of the active portfolio manager
(or simply ``investor'') relative to standard investment benchmarks
using a LETF or a VETF on the same underlying equity index as well
as bonds. The IR is chosen due to its popularity in investment practice
(\cite{HassineRoncalli2013,BolshakovChincarini2020,IsraelsenCogswell2006,BajeuxEtAl2013}),
so that the results are not just of academic interest but also of
practical relevance to institutional investors. 

\item Under stylized assumptions including parametric dynamics for
the underlying assets, we present closed-form IR-optimal dynamic investment
strategies which enable us to obtain intuition regarding the expected
behavior of IR-optimal investment strategies in more general settings.
Note that the closed-form solutions allow for jump-diffusion dynamics
of the equity index, which as noted above is crucial to consider in
the case of LETFs, so that our results contributes to the existing
literature which is almost exclusively based on diffusion dynamics
(see for example \cite{Giese2010}, \cite{Jarrow2010} \cite{LeungSantoli2012},
\cite{LeungEtAl2016}, \cite{LeungSircar2015}, \cite{Wagalath2014},
\cite{GuasoniMayerhofer2023}). However, in the context of $\mathbb{Q}$ measure option
pricing, \citet{Ahn_2015} consider jump processes  for LETFs.

\item {\PVSedit{Relaxing the stylized assumptions to allow for more general and practical
conclusions, we implement a data-driven neural network approach to obtain optimal investment strategies using
stationary block bootstrap resampled historical data (including
proxy LETF returns) since 1926. This investment setting considers multi-asset
portfolios, infrequent rebalancing, as well as multiple investment
constraints including leverage restrictions and borrowing premiums.
We analyze the dynamic IR-optimal investment strategies for different
scenarios, including: (i) investing in a VETF and bonds but with no
leverage allowed, (ii) investing in a VETF and bonds with different
levels of maximum leverage allowed and different levels of borrowing
premiums being applicable, and (iii) investing in a LETF on the same
equity market index as well as bonds with no leverage being allowed.}}

\item We find that IR-optimal investment strategies involving LETFs
are fundamentally \textit{contrarian}. This finding aligns to the
empirical asset allocation behavior observed by \cite{DeVaultEtAl2021}
in their analysis of the SEC filings by institutional fund managers,
whereby managers seem to decrease their holdings in LETFs after observing
strong recent investment performance. In terms of investment performance,
we find that IR-optimal strategies including the LETF are not only
more likely to outperform the benchmark than IR-optimal strategies
derived using the corresponding VETF, but are able to achieve partial
stochastic dominance over the investment benchmark in terms of portfolio
value\footnote{
For a definition of partial stochastic dominance, see
\citet{Atkinson1987,PvSDangForsyth2019_Distributions}
} (wealth).

\end{enumerate}

Our results therefore
encourage the reconsideration of the role of broad equity market LETFs
within more sophisticated dynamic investment strategies, and provide
a potential additional motivation regarding the enduring popularity
of LETFs observed in practice.

\subsection{Intuition}
In this section, we provide some insight into 
the potential advantages of including LETFs in optimal dynamic asset allocation.
We will give an overview here, leaving
the  technical details to
Section \ref{subsec:Intuition:-lump-sum investment scenario}.

Suppose an investor allocates their initial wealth $W\left(0\right)$
to US 30-day T-bills and either a LETF ($\beta=2$) or a VETF on a
broad US equity market index $S$ at time $t_{0}=0$, and does not
rebalance the portfolio over the holding time horizon $\Delta t>0$.
We discuss in more detail in Section \ref{sec:Closed-form-solutions} how the LETF
and VETF can be viewed as derivative contracts on underlying $S$ costing
$F_{\ell}\left(0\right)$ (LETF) and $F_{v}\left(0\right)$ (VETF)
to purchase,
with payoffs of $F_{\ell}\left(\Delta t\right)$ 
and $F_{v}\left(\Delta t\right)$,
respectively. 

Assume that the underlying index price follows a geometric Brownian motion, 
which implies that the price is always positive.  
In addition, assume a constant risk-free rate for short term bonds.
It can be easily shown that (see e.g. \citep{AvellanedaZhang2010})
%\begin{eqnarray}
%\frac{dF_{\ell}\left(t\right)}{F_{\ell}\left(t^{-}\right)} & = & \left[\beta\mu-\left(\beta-1\right)r-c_{\ell}\right]\cdot dt+\beta\sigma\cdot dZ
%\label{eq: LETF dynamicsBS}
%\end{eqnarray}
%Consequently
\begin{eqnarray}
\frac{F_{\ell}\left(\Delta t\right)}{F_{\ell}\left(0\right)} & = & \exp\left\{ -c_{\ell}\cdot\Delta t\right\} \cdot f_{\ell}\left(\Delta t;\beta\right)
\cdot\left(\frac{S\left(\Delta t\right)}{S\left(0\right)}\right)^{\beta},\label{eq: bsLETF *only* ratio discrete rebal}
\end{eqnarray}
where 
\begin{equation}
f_{\ell}\left(\Delta t;\beta\right)=\exp\left\{ -\left[\left(\beta-1\right)r+\frac{1}{2}\left(\beta-1\right)\beta\sigma^{2}\right]\cdot\Delta t\right\} 
\label{eq: bsf_ell and Y_ell_tilde}
\end{equation}
$c_{\ell}>0$ is the expense ratio, $r$ is the risk-free interest and $\sigma$ is the volatility of return.
In other words, payoff $F_{\ell}\left(\Delta t\right)$ is a 
deterministic function of the terminal value of the underlying index.
Of course, $S(\Delta t)$ itself is stochastic.

Since small values of maturity $\Delta t$ can be undesirable
due to frequent trading, and large values of $\Delta t$ emphasize
the time- and volatility decay of simply holding the LETF $F_{\ell}$,
suppose the investor chooses a convenient maturity of $\Delta t=0.25$
years (one quarter). Figure \ref{fig: Intuition - no jumps}
% and Figure
%\ref{fig: Intuition - with jumps} 
illustrates the payoff diagrams
for the investor's wealth at maturity $W\left(\Delta t\right)$ under
different combinations of T-bills and an ETF, as a function of the
value at maturity $S\left(\Delta t\right)$ 
of the underlying equity index.
Similar payoff diagrams can be seen in \citet{Knapp2023} and for the unleveraged
case in \citet{Bertrand2022}.

For simplicity, we assume parametric asset dynamics for the T-bills
and index $S$ ($S$ follows geometric Brownian motion) in Figure \ref{fig: Intuition - no jumps}
%but jump-diffusion dynamics in Figure \ref{fig: Intuition - with jumps})
calibrated to US market data over 1926 to 2023. We impose realistic ETF expense
ratios, and a borrowing premium of 3\% over the T-bill rate for short
positions. Note that the assumption of parametric asset dynamics is
for purposes of intuition only, since the investment results in Section
\ref{sec:Indicative-investment-results} do not rely on any parametric
assumptions. Leaving rigorous derivation for subsequent sections,
we make the following qualitative observations regarding the payoff
diagrams for purposes of intuition:
\begin{itemize}
\item Figures \ref{fig: Intuition - no jumps} 
illustrate that we can characterize the payoffs of LETF investments
as \textit{call-like}. This suggests that the addition of an LETF
to a portfolio can be a useful way to add inexpensive leverage while
preserving downside protection, much like a usual call option. Provided
that the investment in the LETF is itself not funded by borrowing
(i.e. the LETF position itself is not leveraged), the LETF payoff
is always non-negative due to limited liability even in the case of
significant drops in the value of the underlying equity index, in
contrast with leveraged VETF positions. 
\item For calibrated geometric Brownian motion (GBM) dynamics
for the equity index $S$, Figure \ref{fig: Intuition - no jumps}
illustrates that investing all wealth in the LETF with $\beta=2$
(Figure \ref{fig: Intuition - no jumps}(a)) dominates the 2x leveraged
investment in the VETF (200\% of wealth in VETF funded by borrowing
and amount equal to 100\% of wealth) almost everywhere, but underperforms
a 100\% investment in the VETF for negative underlying index returns
(i.e. when $S\left(\Delta t\right)/S\left(0\right)<1$). By contrast,
investing 50\% of wealth in the LETF (Figure \ref{fig: Intuition - no jumps}(a))
and the remaining 50\% in T-bills dominates the payoff of investing
100\% of wealth in the VETF almost everywhere, but underperforms a
2x leveraged investment in the VETF for positive underlying index
returns ($S\left(\Delta t\right)/S\left(0\right)>1$). 
Note that under
GBM dynamics for $S$, the terminal wealth $W\left(\Delta t\right)$
conditional on the terminal value $S\left(\Delta t\right)$ is deterministic
for both VETF and LETF investments (see Section \ref{subsec:Intuition:-lump-sum investment scenario}).

\end{itemize}

\noindent 
\begin{figure}[!tbh]
\noindent \begin{centering}
\subfloat[LETF: 100\%, T-bills: 0\% vs. VETF combinations]{\includegraphics[scale=0.7]{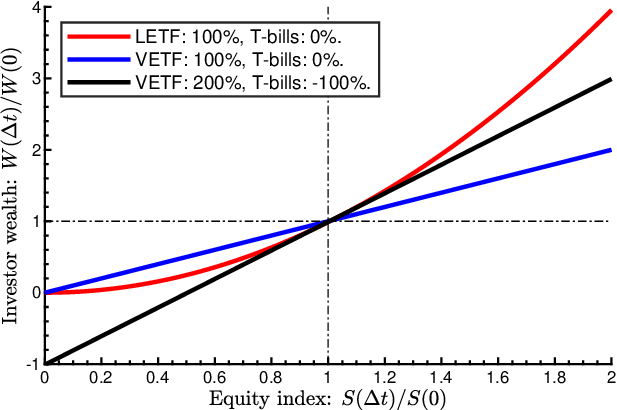}

}$\quad$$\quad$\subfloat[LETF: 50\%, T-bills: 50\% vs. VETF combinations]{\includegraphics[scale=0.7]{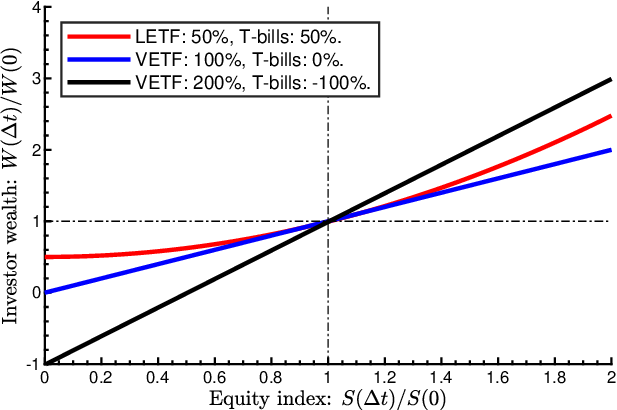}

}
\par\end{centering}
\caption{Payoffs when equity market index $S$ follows calibrated GBM dynamics:
Investor wealth gross return $W\left(\Delta t\right)/W\left(0\right)$
as a function of underlying equity index gross return $S\left(\Delta t\right)/S\left(0\right)$,
$\Delta t=0.25$ (1 quarter), for different proportions of initial
wealth $W\left(0\right)$ invested in the LETF, VETF and T-bills at
time $t_{0}=0$. Asset parameters are calibrated to US equity and
bond market data over the period 1926:01 to 2023:12 (Appendix \ref{sec:Appendix - Source-data}),
LETF and VETF expense ratios are assumed to be 0.89\% and 0.06\% respectively,
and a borrowing premium of 3\% over the T-bill rate is applicable
to short positions. See Section \ref{subsec:Intuition:-lump-sum investment scenario}
for a rigorous treatment of the illustrated relationships. \label{fig: Intuition - no jumps}}
\end{figure}

 When the underlying index is modelled by a jump process, due to limited liability,  a correction to the index price is needed so that LETF remains positive, see Section \ref{subsec:Intuition:-lump-sum investment scenario} for a more detailed discussion. This makes the payoff relation 
between LETF and the underlying index stochastic.
%Figure \ref{fig: Intuition - with jumps} illustrates the payoff diagrams
%in the case of jump-diffusion dynamics for the underlying equity index.
%While the terminal wealth $W\left(\Delta t\right)$ for VETF-based
%strategies are still deterministic conditional on the terminal value
%$S\left(\Delta t\right)$, this is no longer the case for LETF-based
%strategies (see Section \ref{subsec:Intuition:-lump-sum investment scenario}),
%so for illustrative purposes we simulate different paths for $S$
%over time horizon $\Delta t$ and compute path-dependent outcomes
%for $W\left(\Delta t\right)$ for LETF-based strategies. 
While qualitatively
similar observations as in the case of no jumps (Figure \ref{fig: Intuition - no jumps})
apply to the \textit{median} payoffs of the LETF investments,
% (see  Figure
%\ref{fig: Intuition - with jumps}), 
allowing for
jumps in the underlying asset dynamics can affect the LETF payoff
significantly, and jumps are therefore critical to incorporate in
the investor's strategy. However, most of the existing literature
on investment strategies with LETFs only allows for pure diffusion
processes for the equity index underlying the LETF (\cite{Giese2010}
\cite{Jarrow2010}, \cite{LeungSantoli2012}, \cite{LeungEtAl2016},
\cite{LeungSircar2015}, \cite{Wagalath2014}, \cite{GuasoniMayerhofer2023}).

We emphasize that while Figure \ref{fig: Intuition - no jumps}
%and \ref{fig: Intuition - with jumps} 
is for the purposes of intuition
only, it is  nevertheless based on asset dynamics calibrated to
empirical US market data. Since even long-term investments (e.g. 10
years) can be managed effectively using a \textit{dynamic} investment
strategy with for example quarterly rebalancing (i.e. at the beginning
of each quarter, the investor faces investment choices and associated
outcomes such as those in Figure \ref{fig: Intuition - no jumps}
%and \ref{fig: Intuition - with jumps}
), this suggests that the benefits
of LETFs could potentially be harnessed without being unduly affected
by the compounding effects as well as time- and volatility-decay.
Our results show that this is indeed the case, even if no parametric
form of the underlying dynamics is assumed.

\subsection{Organization}
The remainder of the paper is organized as follows. Section \ref{sec:Problem-formulation}
provides the general problem formulation, with Section \ref{sec:Closed-form-solutions}
presenting closed-form results obtained under stylized assumptions.
Section \ref{sec:Numerical-solutions} discusses a neural network-based
solution approach to obtain the optimal investment strategies numerically
under multiple investment constraints. Finally, Section \ref{sec:Indicative-investment-results}
presents indicative investment results and Section \ref{sec:Conclusion}
concludes the paper, with additional analytical and numerical results
presented in Appendices \ref{sec:Appendix Proofs of key results},
\ref{sec:Appendix - Source-data} and \ref{sec:Appendix - Additional numerical results}.

\section{General problem formulation\label{sec:Problem-formulation}}

In this section we formulate, in general terms, the dynamic portfolio
optimization problem to be solved by an active portfolio manager (simply
``investor'') over a given time horizon $\left[t_{0}=0,T\right]$,
where $T>0$ can be large (e.g. 10 years). We assume that the investment
performance of the investor is measured relative to that of a given
benchmark portfolio, as is typically the case for professional asset
managers (see for example \cite{KashyapEtAl2021,LehalleSimon2021,AlekseevSokolov2016,Zhao2007,KornLindberg2014}).
To this end, for any $t\in\left[t_{0}=0,T\right]$, let $W\left(t\right)$
and $\hat{W}\left(t\right)$ denote the portfolio value (or informally,
simply the ``wealth'') of the investor and benchmark portfolios,
respectively. The same initial wealth $w_{0}\coloneqq W\left(t_{0}\right)=\hat{W}\left(t_{0}\right)>0$
is assumed to ensure that the performance comparison remains fair.
The investor's strategy is based on investing in any of a set of $N_{a}$
candidate assets, while the benchmark is defined in terms of $\hat{N}_{a}$
potentially different underlying assets. 

In more detail, if $\hat{\boldsymbol{X}}\left(t\right)$ denotes the
state (or informally, the information) used in obtaining the benchmark
asset allocation strategy at time $t\in\left[t_{0},T\right]$, let
$\hat{p}_{j}\left(t,\hat{\boldsymbol{X}}\left(t\right)\right)$ denotes
the proportion of the benchmark wealth $\hat{W}\left(t\right)$ invested
in asset $j\in\left\{ 1,..,\hat{N}_{a}\right\} $. The vector $\hat{\boldsymbol{p}}\left(t,\hat{\boldsymbol{X}}\left(t\right)\right)=\left(\hat{p}_{j}\left(t,\hat{\boldsymbol{X}}\left(t\right)\right):j=1,..,\hat{N}_{a}\right)\in\mathbb{R}^{\hat{N}_{a}}$
then denotes the asset allocation or investment strategy of the benchmark
at time $t\in\left[t_{0},T\right]$. 

Similarly, if $\boldsymbol{X}\left(t\right)$ denotes the state or
information incorporated by the investor in making their asset allocation
decision, let $p_{i}\left(t,\boldsymbol{X}\left(t\right)\right)$
denote the proportion of the investor's wealth $W\left(t\right)$
invested in asset $i\in\left\{ 1,..,N_{a}\right\} $, with $\boldsymbol{p}\left(t,\boldsymbol{X}\left(t\right)\right)=\left(p_{i}\left(t,\boldsymbol{X}\left(t\right)\right):i=1,..,N_{a}\right)\in\mathbb{R}^{N_{a}}$
denoting the investor's asset allocation or investment strategy at
time $t\in\left[t_{0},T\right]$.

The set of portfolio rebalancing events is denoted by $\mathcal{T}\subseteq\left[t_{0},T\right]$,
where we consider $\mathcal{T}=\left[t_{0},T\right]$ in the case
of continuous rebalancing (Section \ref{sec:Closed-form-solutions}),
or a discrete subset $\mathcal{T}\subset\left[t_{0},T\right]$ in
the case of discrete rebalancing (Section \ref{sec:Numerical-solutions}).
Given the set $\mathcal{T}$, the investor and benchmark investment
strategies are respectively defined as
\begin{equation}
\mathcal{P}=\left\{ \boldsymbol{p}\left(t,\boldsymbol{X}\left(t\right)\right),~t\in\mathcal{T}\right\} ,\qquad\textrm{and }\qquad\hat{\mathcal{P}}=\left\{ \hat{\boldsymbol{p}}\left(t,\hat{\boldsymbol{X}}\left(t\right)\right),~t\in\mathcal{T}\right\} .\label{eq:Investor and benchmark investment strategies}
\end{equation}

It is typical for the investor to be subject to investment constraints,
which are encoded by the set $\mathcal{A}$ of admissible controls.
In the simplest case, admissible investor strategies $\mathcal{P}\in\mathcal{A}$
are such that $\mathcal{P}=\left\{ \boldsymbol{p}\left(t,\boldsymbol{X}\left(t\right)\right)\in\mathcal{Z}:t\in\mathcal{T}\right\} $,
with $\mathcal{Z}$ denoting the admissible control space. More complex
constraints require a more careful formulation of $\mathcal{A}$ and
$\mathcal{Z}$, see for example Section \ref{sec:Numerical-solutions}.

Since the investor aims to construct $\mathcal{P}$ to \textit{outperform}
the benchmark strategy $\hat{\mathcal{P}}$, Assumption \ref{assu: General benchmark assumptions}
below outlines some general assumptions regarding the investment benchmark
$\hat{\mathcal{P}}$. Note that Assumption \ref{assu: General benchmark assumptions}
aligns with investment practice and is important for assessing the
relevance of LETFs when constructing portfolios for outperforming
a benchmark - see further discussion in Remark \ref{rem: Clarifying general assumptions benchmarks}
below. 

\begin{assumption}\label{assu: General benchmark assumptions}(General
assumptions regarding the benchmark strategy $\hat{\mathcal{P}}$)
We make the following general assumptions regarding the benchmark
strategies considered in this paper:
\begin{enumerate}
\item The investor can observe the asset allocation $\hat{\boldsymbol{p}}\left(t,\hat{\boldsymbol{X}}\left(t\right)\right)$
of the benchmark strategy at each $t\in\mathcal{T}$.
\item The sets of investable assets available to the investor and benchmark,
respectively, do not necessarily correspond. In particular, the benchmark
strategy may invest in assets which the investor is unwilling or unable
to invest in, or the investor might consider investing in a much larger
universe of investable assets than those included in the benchmark.\qed
\end{enumerate}
\end{assumption}

Remark \ref{rem: Clarifying general assumptions benchmarks} highlights
some key observations regarding Assumption \ref{assu: General benchmark assumptions}.

\begin{brem}

\label{rem: Clarifying general assumptions benchmarks}(Clarification
of benchmark assumptions) With regards to Assumption \ref{rem: Clarifying general assumptions benchmarks},
we observe the following:
\begin{enumerate}
\item Observable benchmarks play a key role in performance reporting for
many institutional investors, since active portfolio managers often
explicitly pursue the outperformance of a \textit{predetermined} investment
benchmark (see for example \cite{KashyapEtAl2021,LehalleSimon2021,AlekseevSokolov2016,Zhao2007,KornLindberg2014}).
As a result, the benchmark is clearly defined and transparent in the
sense of the underlying asset allocation, which often incorporates
broad market indices and bonds. In the case of pension funds, the
benchmark (or ``reference'') portfolios are usually constructed
using traded assets in fixed proportions. Examples include the Canadian
Pension Plan (CPP) with a base reference portfolio of 15\% Canadian
government bonds and 85\% global equity (\cite{cpp_site}), or the
Norwegian government pension plan (``Government Pension Fund Global'',
or GPFG) using a benchmark portfolio of 70\% equities and 30\% bonds
(\cite{Norway_plan}).
\item Active portfolio managers often consider not only different but indeed
larger/broader sets of assets than the benchmark. For example, pension
funds might include private equity whereas the benchmark might be
based on publicly traded assets only (see for example \cite{cpp_site}).
In the assessment of the effect of replacing VETFs with LETFs discussed
in Section \ref{sec:Closed-form-solutions}, we consider scenarios
where the benchmark strategy is defined in terms of a broad stock
market index, but the investor might not be able to invest directly
in the index itself, and invests instead in an ETF (VETF or LETF)
replicating the index returns. Since the ETFs only replicate (a multiple
of) the index returns \textit{before} costs, the existence of a non-zero
ETF expense ratios implies that investing in ETFs is not exactly the
same as investing in the underlying index, i.e. an ETF and its underlying
index can be viewed as different assets. Assumption \ref{rem: Clarifying general assumptions benchmarks}(ii)
is therefore relevant to an assessment of the role of a VETF or LETF
within portfolios designed to beat a broad equity index-based investment
benchmark. \qed
\end{enumerate}
\end{brem}

Let $E_{\mathcal{P}}^{t_{0},w_{0}}\left[\cdot\right]$ and $Var_{\mathcal{P}}^{t_{0},w_{0}}\left[\cdot\right]$
denote the expectation and variance, respectively, given initial wealth
$w_{0}=W\left(t_{0}\right)=\hat{W}\left(t_{0}\right)$ at time $t_{0}=0$
and using admissible investor strategy $\mathcal{P}\in\mathcal{A}$
over $\left[t_{0},T\right]$. As a result of Assumption \ref{rem: Clarifying general assumptions benchmarks},
the benchmark strategy $\hat{\mathcal{P}}$ remains implicit and fixed
for notational simplicity. 

As discussed in the Introduction, for an investment objective measuring
outperformance, we wish to maximize the information ratio (IR) of
the investor relative to the benchmark. In the context of dynamic
trading with strategies of the form (\ref{eq:Investor and benchmark investment strategies}),
the information ratio (IR) is defined as (\cite{GoetzmannEtAl2002,BajeuxEtAl2013})

\begin{eqnarray}
\mathcal{IR}_{\mathcal{P}}^{t_{0},w_{0}} & = & \frac{E_{\mathcal{P}}^{t_{0},w_{0}}\left[W\left(T\right)-\hat{W}\left(T\right)\right]}{Stdev_{\mathcal{P}}^{t_{0},w_{0}}\left[W\left(T\right)-\hat{W}\left(T\right)\right]}.\label{eq: IR goetzmann}
\end{eqnarray}
Maximizing the IR (\ref{eq: IR goetzmann}) can be achieved by solving
a mean-variance (MV) optimization problem (\cite{BajeuxEtAl2013})
,

\begin{equation}
\sup_{\mathcal{P}\in\mathcal{A}}\left\{ E_{\mathcal{P}}^{t_{0},w_{0}}\left[W\left(T\right)-\hat{W}\left(T\right)\right]-\rho\cdot Var_{\mathcal{P}}^{t_{0},w_{0}}\left[W\left(T\right)-\hat{W}\left(T\right)\right]\right\} ,
\quad\rho>0,
    \label{eq: IR MV formulation}
\end{equation}
where $\rho$ denotes a scalarization parameter.

Using the embedding technique of \cite{LiNg2000,ZhouLi2000}, we solve
(\ref{eq: IR MV formulation}) by formulating the equivalent problem
(\cite{PvSForsythLi2022_stochbm})

\begin{eqnarray}
\left(IR\left(\gamma\right)\right): &  & \inf_{\mathcal{P}\in\mathcal{A}}E_{\mathcal{P}}^{t_{0},w_{0}}\left[\left(W\left(T\right)-\left[\hat{W}\left(T\right)+\gamma\right]\right)^{2}\right],\qquad\gamma>0,\label{eq: IR objective USED}
\end{eqnarray}
where $\gamma>0$ denotes the embedding parameter. As discussed in
\cite{PvSForsythLi2022_stochbm}, the parameter $\gamma$ can be viewed
as the investor's (implicit) target for benchmark outperformance formulated
in terms of terminal wealth.

\begin{brem}\label{time_cons_remark}(Time consistency.)
Note that the control for problem (\ref{eq: IR MV formulation}) is formally the
pre-commitment policy, i.e. not time consistent.  However, the pre-commitment policy
solution of Problem (\ref{eq: IR MV formulation}) is identical to the strategy for an induced time
consistent policy \citep{Strub_2019,Forsyth2019CVaR}, and 
hence it is implementable.\footnote{An implementable strategy 
has the property that the investor has no incentive to deviate from the strategy
computed at time zero at later times \citep{Forsyth2019CVaR}.} 
The induced time consistent strategy in this  case
is the target based Problem (\ref{eq: IR objective USED}), with a fixed value of $\gamma, \forall t > 0$.
The relationship between pre-commitment
and implementable target-based schemes in the mean-variance context is discussed in  
\citet{Vigna_efficiency2014,Vigna2016TC,Vigna2017TailOptimality,MenoncinVigna2013}.
We consider the policy followed by the investor for $t>0$ to be the implementable
solution of Problem (\ref{eq: IR objective USED}) with a fixed value of $\gamma$.  This is identical to the
solution of Problem (\ref{eq: IR MV formulation}) as seen at $t=0$.
\end{brem}

\section{Closed-form solutions\label{sec:Closed-form-solutions}}

To obtain the valuable intuition regarding the characteristics of
IR-optimal investment strategies incorporating a LETF or VETF on a
broad equity market index, we present closed-form solutions to the
IR problem (\ref{eq: IR objective USED}) under stylized assumptions.
Remark \ref{rem: Relaxing closed-form assumptions} emphasizes that
these assumptions are required for the derivation of closed-form solutions
in this section only.

\begin{brem}

\label{rem: Relaxing closed-form assumptions}(Relaxing closed-form
assumptions) The closed-form results presented in this section require
stylized assumptions (Assumption \ref{assu: Stylized-assumptions-for closed-form}
and Assumption \ref{assu: Extra stylized-assumptions-for continuous rebalancing}
below), but we will use numerical techniques (Section \ref{sec:Numerical-solutions})
and present indicative investment results (Section \ref{sec:Indicative-investment-results})
where these assumptions are not required. The investment problem
is solved numerically in a setting where the following is allowed:
(i) no restrictions on the number of underlying assets, (ii) no parametric
assumptions are required for the dynamics of the underlying assets,
(iii) discrete portfolio rebalancing is used, (iv) leverage is restricted
and in some scenarios not allowed at all, (v) nonzero borrowing premiums
over the risk-free rate are applicable when funding leveraged positions,
(vi) no trading in the event of insolvency can occur, and (vii) more
general benchmark strategies are allowed, though for illustrative
purposes we will use constant proportion strategies due to their popularity
in practical applications.\qed

\end{brem}

The first set of general assumptions for the derivation of the closed-form
solution in this section is outlined in Assumption \ref{assu: Stylized-assumptions-for closed-form}.

\begin{assumption}

\label{assu: Stylized-assumptions-for closed-form}(Stylized assumptions
for closed-form solutions) For the purposes of obtaining closed-form
solutions in this section, we assume the following: 
\begin{enumerate}
\item The benchmark investment strategy (asset allocation) is a deterministic
function of time defined in terms of the 30-day T-bills (``risk-free''
asset) denoted by $B$ and a broad equity market index (``risky''
asset) denoted by $S$. Note that any known deterministic benchmark
strategy clearly satisfies Assumption \ref{assu: General benchmark assumptions},
and includes as a special case the constant proportion strategies
which are popular benchmarks used in practice (see for example \cite{cpp_site},\cite{Norway_plan}). 
\item We consider two investors, each optimizing their respective portfolios
relative to the same benchmark. Both investors are assumed to be unable
or unwilling to invest directly the underlying broad equity market
index itself (i.e. replicate the index with individual stocks), and
instead invests in ETFs referencing the index. The first investor,
informally referred to as the ``VETF investor'', allocates wealth
to two underlying assets, namely 30-day T-bills $B$ and a VETF $F_{v}$
with expense ratio $c_{v}>0$, where the VETF simply replicates the
instantaneous returns of the index $S$ before costs. The second investor,
informally referred to as the ``LETF investor'', allocates wealth
to two underlying assets, namely 30-day T-bills $B$ and a LETF $F_{\ell}$
with expense ratio $c_{\ell}>0$, where the LETF returns $\beta>1$
times the instantaneous returns of the index $S$ before costs.
\item Parametric dynamics for all underlying assets are assumed, including
jump-diffusion dynamics for the broad equity market index $S$ - see
(\ref{eq: B dynamics})-(\ref{eq: S dynamics}), (\ref{eq: VETF dynamics})
and (\ref{eq: LETF dynamics}) below. \qed
\end{enumerate}
\end{assumption}

Table \ref{tab: Closed-form solns - Candidate assets and benchmark}
provides an example of an investment scenario consistent with Assumption
\ref{assu: Stylized-assumptions-for closed-form}(i)-(ii), which will
be used for the illustration of the closed-form solutions of this
section. 

\noindent 
\begin{table}[!tbh]
\caption{Closed-form solutions - Candidate assets and benchmark: Example of
the investment scenario outlined in Assumption \ref{assu: Stylized-assumptions-for closed-form}(i)-(ii),
which will be used when illustrating the closed-form solutions in
this section. The constant proportion benchmark has been chosen to
align with typical benchmarks used by pension funds, while the indicative
expense (or cost) ratios are chosen from the range of expense ratios
of VETFs and LETFs on broad equity market indices typically observed
in practice. \label{tab: Closed-form solns - Candidate assets and benchmark}}

\noindent \centering{}{\footnotesize{}}%
\begin{tabular}{|c|>{\centering}p{1cm}|>{\raggedright}p{7.5cm}|>{\centering}m{1.5cm}||>{\centering}p{1.8cm}|>{\centering}p{1.8cm}|}
\hline 
\multicolumn{3}{|c|}{{\footnotesize{}Underlying assets}} & \multirow{2}{1.5cm}{{\footnotesize{}Benchmark}} & \multicolumn{2}{>{\centering}p{3.6cm}|}{{\footnotesize{}Investor candidate assets}}\tabularnewline
\cline{1-3} \cline{2-3} \cline{3-3} \cline{5-6} \cline{6-6} 
{\footnotesize{}Label} & {\footnotesize{}Value} & {\footnotesize{}Asset description} &  & {\footnotesize{}Using VETF} & {\footnotesize{}Using LETF}\tabularnewline
\hline 
\hline 
{\footnotesize{}T30} & {\footnotesize{}$B\left(t\right)$} & {\footnotesize{}30-day Treasury bill} & {\footnotesize{}30\%} & {\footnotesize{}$\checkmark$} & {\footnotesize{}$\checkmark$}\tabularnewline
\hline 
{\footnotesize{}Market} & {\footnotesize{}$S\left(t\right)$} & {\footnotesize{}Market portfolio (broad equity market index)} & {\footnotesize{}70\%} & {\footnotesize{}-} & {\footnotesize{}-}\tabularnewline
\hline 
{\footnotesize{}VETF} & {\footnotesize{}$F_{v}\left(t\right)$} & {\footnotesize{}Vanilla or standard/unleveraged ETF (VETF) replicating
the returns of the market portfolio $S\left(t\right)$, with expense
ratio $c_{v}=0.06\%$} & {\footnotesize{}0\%} & {\footnotesize{}$\checkmark$} & {\footnotesize{}-}\tabularnewline
\hline 
{\footnotesize{}LETF} & {\footnotesize{}$F_{\ell}\left(t\right)$} & {\footnotesize{}Leveraged ETF (LETF) with daily returns replicating
$\beta=2$ times the daily returns of the market portfolio $S\left(t\right)$,
with expense ratio $c_{\ell}=0.89\%$} & {\footnotesize{}0\%} & {\footnotesize{}-} & {\footnotesize{}$\checkmark$}\tabularnewline
\hline 
\end{tabular}{\footnotesize\par}
\end{table}

We assume that the underlying index $S$ can follow any of the commonly-encountered
jump diffusion processes in finance (see for example \cite{KouOriginal,MertonJumps1976}),
resulting in the following assumed dynamics for $B$ and $S$, respectively,
\begin{eqnarray}
\frac{dB\left(t\right)}{B\left(t\right)} & = & r\cdot dt,\label{eq: B dynamics}\\
\frac{dS\left(t\right)}{S\left(t^{-}\right)} & = & \left(\mu-\lambda\kappa_{1}^{s}\right)dt+\sigma\cdot dZ+d\left(\sum_{i=1}^{\pi\left(t\right)}\left(\xi_{i}^{s}-1\right)\right).\label{eq: S dynamics}
\end{eqnarray}
In (\ref{eq: B dynamics})-(\ref{eq: S dynamics}), $r$ denotes the
continuously compounded risk-free rate, $\pi\left(t\right)$ denotes
a Poisson process with intensity $\lambda\geq0$, while $\mu$ and
$\sigma$ denote the drift and volatility coefficients, respectively,
under the objective (real-world) probability measure. $\xi_{i}^{s}$
are i.i.d. random variables with the same distribution as $\xi^{s}$,
which represents the jump multiplier associated with the $S$-dynamics,
and we define 
\begin{equation}
\kappa_{1}^{s}=\mathbb{E}\left[\xi^{s}-1\right],\qquad\kappa_{2}^{s}=\mathbb{E}\left[\left(\xi^{s}-1\right)^{2}\right],\label{eq: Kappa and Kappa2 for S dynamics}
\end{equation}
which can be obtained using a given probability density function (pdf)
of $\xi^{s}$, denoted by $G\left(\xi^{s}\right)$. Finally, for any
functional $\psi\left(t\right),t\in\left[t_{0},T\right]$, we use
$\psi\left(t^{-}\right)$ and $\psi\left(t^{+}\right)$ as shorthand
notation for the one-sided limits $\psi\left(t^{-}\right)=\lim_{\epsilon\downarrow0}\psi\left(t-\epsilon\right)$
and $\psi\left(t^{+}\right)=\lim_{\epsilon\downarrow0}\psi\left(t+\epsilon\right)$,
respectively. Note that we can recover the assumption of geometric
Brownian motion (GBM) dynamics for $S$ by simply setting the intensity
$\lambda\equiv0$ in (\ref{eq: S dynamics}).

Since the VETF $F_{v}$ with expense ratio $c_{v}$ is a vanilla/standard
ETF simply replicating the returns of $S$ before costs, it has dynamics
given by
\begin{eqnarray}
\frac{dF_{v}\left(t\right)}{F_{v}\left(t^{-}\right)} & = & \frac{dS\left(t\right)}{S\left(t^{-}\right)}-c_{v}\cdot dt\nonumber \\
 & = & \left(\mu-\lambda\kappa_{1}^{s}-c_{v}\right)\cdot dt+\sigma\cdot dZ+d\left(\sum_{i=1}^{\pi\left(t\right)}\left(\xi_{i}^{s}-1\right)\right).\label{eq: VETF dynamics}
\end{eqnarray}

In contrast, the LETF $F_{\ell}$ with expense ratio $c_{\ell}>0$
aims at replicating $\beta>1$ times the instantaneous returns of
the underlying broad stock market index $S$ before costs, and therefore
has dynamics \textit{approximately} given by 
\begin{eqnarray}
\frac{dF_{\ell}\left(t\right)}{F_{\ell}\left(t^{-}\right)} & \simeq & \beta\frac{dS\left(t\right)}{S\left(t^{-}\right)}-\left[\left(\beta-1\right)r+c_{\ell}\right]dt.\label{eq: Approx LETF dynamics ito S dynamics}
\end{eqnarray}
It should be emphasized that (\ref{eq: Approx LETF dynamics ito S dynamics})
is only an approximation. Since the investor
in an LETF has limited liability,  exact equality in (\ref{eq: Approx LETF dynamics ito S dynamics})
only holds in the case where there are no jumps in the dynamics of
$S$. In the case of pure GBM dynamics, $F_{\ell}$ can never become
negative, hence limited liability is irrelevant. As a result, (\ref{eq: Approx LETF dynamics ito S dynamics})
is indeed used to model LETF dynamics in the literature where the
analysis is limited to GBM dynamics for $S$ (see for example \cite{AvellanedaZhang2010},
\cite{Jarrow2010}, \cite{GuasoniMayerhofer2023}), with the notable exception
of \citet{Ahn_2015} in the context of $\mathbb{Q}$ measure dynamics.

However, in the case of jump-diffusion dynamics for $S$, (\ref{eq: Approx LETF dynamics ito S dynamics})
is not quite correct since the LETF investor is protected by limited liability.
In more detail, if there is a jump with multiplier $\xi^{s}$ in the
underlying index (\ref{eq: S dynamics}) at a specific time $t$,
then the approximation (\ref{eq: Approx LETF dynamics ito S dynamics})
suggests that the value of the LETF would jump to $F_{\ell}\left(t\right)=F_{\ell}\left(t^{-}\right)\cdot\left[1+\beta\left(\xi^{s}-1\right)\right]$.

Therefore, in the case of a large downward jump in the underlying
index, in particular where the jump multiplier satisfies $\xi^{s}<\left(\beta-1\right)/\beta$,
the approximation (\ref{eq: Approx LETF dynamics ito S dynamics})
implies that $F_{\ell}\left(t\right)<0$, which cannot occur due to
limited liability of the LETF. Instead, if it is indeed the case that
$\xi^{s}\leq\left(\beta-1\right)/\beta$, the value of the LETF simply
drops to zero, i.e. $F_{\ell}\left(t\right)\equiv0$. 

We can therefore model the limited liability of the LETF $F_{\ell}$
by observing that $F_{\ell}$ therefore experiences jumps which are
related to, but not necessarily exactly the same as the jumps experienced
by the underlying index $S$. To this end, we define a jump multiplier
$\xi^{\ell}$ for the $F_{\ell}$-dynamics in terms of the jump multiplier
$\xi^{s}$ in the $S$-dynamics as \begin{eqnarray} \xi^{\ell} & = & \begin{dcases} \xi^{s} & \textrm{if }\xi^{s}>\left(\beta-1\right)/\beta,\\ \frac{\left(\beta-1\right)}{\beta} & \textrm{if }\xi^{s}\leq\left(\beta-1\right)/\beta. \end{dcases}\label{eq: LETF jump multiplier} \end{eqnarray}The
second case in (\ref{eq: LETF jump multiplier}) enforces the limited
liability of the LETF investor in the case of large downward jumps,
i.e. $F_{\ell}\left(t\right)\equiv0$ if $\xi^{s}\leq\left(\beta-1\right)/\beta$.
For subsequent use, we also define the following quantities involving
the LETF jump multiplier $\xi^{\ell}$, 
\begin{equation}
\kappa_{1}^{\ell}=\mathbb{E}\left[\xi^{\ell}-1\right],\qquad\kappa_{2}^{\ell}=\mathbb{E}\left[\left(\xi^{\ell}-1\right)^{2}\right],\qquad\kappa_{\chi}^{\ell,s}=\mathbb{E}\left[\left(\xi^{\ell}-1\right)\left(\xi^{s}-1\right)\right].\label{eq: Kappas for LETF dynamics}
\end{equation}
Given (\ref{eq: S dynamics}), (\ref{eq: Approx LETF dynamics ito S dynamics})
and (\ref{eq: LETF jump multiplier}), the LETF dynamics correctly
incorporating jumps is therefore given by 
\begin{eqnarray}
\frac{dF_{\ell}\left(t\right)}{F_{\ell}\left(t^{-}\right)} & = & \left[\beta\left(\mu-\lambda\kappa_{1}^{s}\right)-\left(\beta-1\right)r-c_{\ell}\right]\cdot dt+\beta\sigma\cdot dZ+\beta\cdot d\left(\sum_{i=1}^{\pi\left(t\right)}\left(\xi_{i}^{\ell}-1\right)\right),\label{eq: LETF dynamics}
\end{eqnarray}
where $\xi_{i}^{\ell}$ are i.i.d. random variables with the same
distribution as $\xi^{\ell}$, which represents the jump multiplier
associated with the $F_{\ell}$-dynamics (\ref{eq: LETF jump multiplier}).
We highlight the following observations regarding the dynamics (\ref{eq: S dynamics}),
(\ref{eq: VETF dynamics}) and (\ref{eq: LETF dynamics}).
\begin{enumerate}
\item The jumps in the underlying index $S$, the VETF $F_{v}$ and LETF
$F_{\ell}$ occur at the same time, so the Poisson process $\pi\left(t\right)$
and intensity $\lambda$ are the same in (\ref{eq: S dynamics}),
(\ref{eq: VETF dynamics}) and (\ref{eq: LETF dynamics}). More formally,
the processes (\ref{eq: S dynamics}), (\ref{eq: VETF dynamics})
and (\ref{eq: LETF dynamics}) have the same Poisson random measures,
but the compensated Poisson random measure is different for the $F_{\ell}$-dynamics
since the LETF jump sizes are slightly different due to (\ref{eq: LETF jump multiplier}). 
\item The dynamics (\ref{eq: VETF dynamics}) and (\ref{eq: LETF dynamics})
implicitly assume that the ETFs have negligible tracking errors, while
having non-negligible expense ratios. While ETF expense ratios can
indeed be material, especially for LETFs, the assumption that tracking
errors are negligible are often employed in the literature (see for
example \cite{BansalMarshall2015}, \cite{LeungSircar2015}). Given
the recent developments in designing replication strategies for LETFs
that remain robust even during periods of market volatility (see for
example \cite{GuasoniMayerhofer2023}), this appears to be a reasonable
assumption especially in the case of the most popular VETFs and LETFs
written on the major stock market indices.
\item As noted in the Introduction, we limit the analysis to the case of
LETFs where $\beta>1$ (i.e. ``bullish'' LETFs). However, the dynamics
(\ref{eq: LETF dynamics}) could also be applicable to inverse or
``bearish'' ETFs where $\beta<1$ (see for example \cite{Jarrow2010}),
but adjustments are usually required to incorporate the time-dependent
borrowing cost involved in short-selling the particular components
of the underlying replication basket each time $t$ (see for example
\cite{AvellanedaZhang2010}).
\end{enumerate}

\begin{brem}
(Relationship to \citet{Ahn_2015})
In \citet{Ahn_2015}, the authors model the limited liability of the LETF by taking the
point of view of the manager of the LETF.  The manager must purchase insurance
to handle the cases where the manager's position becomes negative.  In this work, we simply take
the point of view of the holder the LETF (not the manager), who has no exposure
to the possible negative value of the manager's position.  The cost of this
insurance (to the manager) is assumed to be passed
on to the LETF investor as part of the fee $c_{\ell}$ charged by the
manager, which is easily observable.
\end{brem}

\subsection{Intuition: lump sum investment scenario\label{subsec:Intuition:-lump-sum investment scenario}}

As a simple and intuitive illustration of the potential and risks
of using LETFs vs. VETFs, we consider a simple version of the general
formulation of the problem as outlined in Section \ref{sec:Problem-formulation}.
Specifically, we consider a lump sum investment scenario as discussed
in the Introduction (see Figures \ref{fig: Intuition - no jumps}
and \ref{fig: Intuition - with jumps}), where the initial wealth
$w_{0}=W\left(t_{0}\right)=\hat{W}\left(t_{0}\right)>0$ is invested
at time $t_{0}=0$ with no intermediate intervention/rebalancing until
the terminal time $T=\Delta t=0.25$ years (i.e. one quarter). In
the notation of Section \ref{sec:Problem-formulation}, we therefore
have a trivial set of rebalancing events $\mathcal{T}=\left[t_{0}\right]$. 

First, we consider the implications of the underlying asset dynamics
without referencing the investment strategy (i.e. wealth allocation
to assets). The dynamics (\ref{eq: B dynamics})-(\ref{eq: S dynamics})
imply that 
\begin{eqnarray}
\frac{B\left(\Delta t\right)}{B\left(0\right)} & = & \exp\left\{ r\cdot\Delta t\right\} ,\label{eq: B ratio discrete rebal}\\
\frac{S\left(\Delta t\right)}{S\left(0\right)} & = & \exp\left\{ \left(\mu-\lambda\kappa_{1}^{s}-\frac{1}{2}\sigma^{2}\right)\cdot\Delta t+\sigma\cdot Z\left(\Delta t\right)+\sum_{i=1}^{\pi\left(\Delta t\right)}\log\xi_{i}^{s}\right\} ,\label{eq: S ratio discrete rebal}
\end{eqnarray}
where the values $B\left(0\right)$ and $S\left(0\right)$ are observable
at time $t_{0}=0$. In the case of the VETF, we simply have 
\begin{align}
\frac{F_{v}\left(\Delta t\right)}{F_{v}\left(0\right)}= & \exp\left\{ -c_{v}\cdot\Delta t\right\} \cdot\left(\frac{S\left(\Delta t\right)}{S\left(0\right)}\right).\label{eq: VETF ratio discrete rebal}
\end{align}
In the case of the LETF, we have
\begin{eqnarray}
\frac{F_{\ell}\left(\Delta t\right)}{F_{\ell}\left(0\right)} & = & \exp\left\{ -c_{\ell}\cdot\Delta t\right\} \cdot f_{\ell}\left(\Delta t;\beta\right)\cdot\widetilde{Y}_{\ell}\left(\Delta t;\beta\right)\cdot\left(\frac{S\left(\Delta t\right)}{S\left(0\right)}\right)^{\beta},\label{eq: LETF *only* ratio discrete rebal}
\end{eqnarray}
where 
\begin{equation}
f_{\ell}\left(\Delta t;\beta\right)=\exp\left\{ -\left[\left(\beta-1\right)r+\frac{1}{2}\left(\beta-1\right)\beta\sigma^{2}\right]\cdot\Delta t\right\} ,\quad\textrm{and }\quad\widetilde{Y}_{\ell}\left(\Delta t;\beta\right)=\prod_{i=1}^{\pi\left(\Delta t\right)}\left[\frac{1+\beta\left(\xi_{i}^{\ell}-1\right)}{\left(\xi_{i}^{s}\right)^{\beta}}\right].\label{eq: f_ell and Y_ell_tilde}
\end{equation}
Expression (\ref{eq: LETF *only* ratio discrete rebal}) 
for the case where there are no
jumps, is given in \citet{AvellanedaZhang2010}.

For purposes of intuition, consider the \textit{special case} where
the underlying index $S$ experiences \textit{zero growth/decline}
over the time horizon $\Delta t$. If $S\left(\Delta t\right)=S\left(0\right)$,
it is clear that the LETF will perform worse than the VETF, i.e. 
assuming $\beta > 1$, $F_{\ell}\left(\Delta t\right)<F_{v}\left(\Delta t\right)$,
due to the following:
\begin{itemize}
\item Decay due to volatility: The term $f_{\ell}\left(\Delta t;\beta\right)<1$,
which only affects the LETF, is dominated by the diffusive volatility
$\sigma$ in the underlying $S$-dynamics. All else being equal, the
larger volatility of $S$, the worse the performance of the LETF relative
to the VETF, with the limiting case $\lim_{\sigma\rightarrow\infty}f_{\ell}\left(\Delta t;\beta\right)=0$.
\item Time decay: Even if there is no change in the value of the underlying
index, $S\left(\Delta t\right)=S\left(0\right)$, the value of the
LETF tends to zero if held for a long time, since $\lim_{\Delta t\rightarrow\infty}f_{\ell}\left(\Delta t;\beta\right)=0$. 
\item Costs and interest rates: Expense ratios for LETFs are typically substantially
higher than for VETFs, $0<c_{v}\ll c_{\ell}$. In addition, all else
being equal, increasing interest rates $r>0$ also decreases $f_{\ell}\left(\Delta t;\beta\right)$.
However, while these effects further decrease the value of the VETF
relative to the LETF, they are expected to be comparatively small
compared to the other effects.
\item Decay due to jumps: It can be shown that the term $\widetilde{Y}_{\ell}\left(\Delta t;\beta\right)$,
which only affects the LETF, satisfies $\widetilde{Y}_{\ell}\left(\Delta t;\beta\right)\leq1$,
where the maximum value ($\widetilde{Y}_{\ell}\left(\Delta t;\beta\right)=1$)
is achieved either when there are no jumps in the value of the underlying
over $\left[0,\Delta t\right]$ (i.e. $\pi\left(\Delta t\right)=0$),
or if there are jumps but they all satisfy $\xi_{i}^{s}=1$ which
has probability of almost surely zero. In other words, when $S\left(\Delta t\right)=S\left(0\right)$,
the mere presence of jumps in the underlying $S$ decreases the value
of the LETF relative to the VETF via the term $\widetilde{Y}_{\ell}\left(\Delta t;\beta\right)$.
This is illustrated in Figure \ref{fig: Intuition - with jumps} at
the point where $S\left(\Delta t\right)/S\left(0\right)=1$.
\end{itemize}
The preceding observations summarize what are effectively the standard
objections to LETFs that can be found in the literature, the only
addition being the rigorous treatment of jumps in the LETF dynamics
and the associated jump decay component.

Next, we discuss lump sum investment strategies, i.e. wealth allocation
to assets at time $t_{0}=0$ with no subsequent rebalancing prior
to maturity $T=\Delta t$. For simplicity, we consider a constant
proportion benchmark strategy $\hat{\mathcal{P}}=\hat{\boldsymbol{p}}\left(t_{0}\right)\coloneqq\left(1-\hat{p}_{s},\hat{p}_{s}\right)$,
where $\hat{p}_{s}$ denotes the proportion of benchmark wealth $\hat{W}\left(t_{0}\right)=w_{0}$
invested in the broad equity market index $S$ at time $t_{0}$, with
the remaining proportion $\left(1-\hat{p}_{s}\right)$ invested in
30-day T-bills. To emphasize that the benchmark wealth at the end
of the investment time horizon depends on $\hat{p}_{s}$, we use the
notation $\hat{W}\left(\Delta t;\hat{p}_{s}\right)$, and observe
that
\begin{eqnarray}
\frac{\hat{W}\left(\Delta t;\hat{p}_{s}\right)}{w_{0}} & = & \left(1-\hat{p}_{s}\right)\cdot\exp\left\{ r\cdot\Delta t\right\} +\hat{p}_{s}\cdot\frac{S\left(\Delta t\right)}{S\left(0\right)}.\label{eq: W_hat ratio discrete rebal}
\end{eqnarray}

{\color{red}
\noindent 
\begin{figure}[!tbh]
\noindent \begin{centering}
\subfloat[LETF: 100\%, T-bills: 0\% vs. VETF combinations]{\includegraphics[scale=0.7]{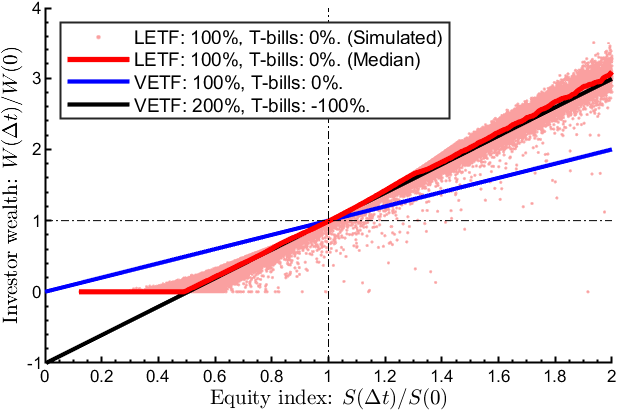}

}$\quad$$\quad$\subfloat[LETF: 50\%, T-bills: 50\% vs. VETF combinations]{\includegraphics[scale=0.7]{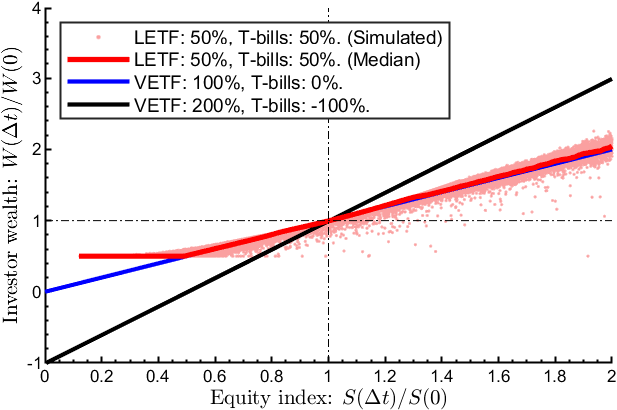}

}
\par\end{centering}
\caption{Payoffs when equity market index $S$ follows calibrated jump-diffusion
dynamics (\cite{KouOriginal} model): Investor wealth gross return
$W\left(\Delta t\right)/W\left(0\right)$ as a function of underlying
equity index gross return $S\left(\Delta t\right)/S\left(0\right)$,
$\Delta t=0.25$ (1 quarter), for different proportions of initial
wealth $W\left(0\right)$ invested in the LETF, VETF and T-bills at
time $t_{0}=0$. 
Asset parameters are calibrated to US equity and
bond market data over the period 1926:01 to 2023:12 (Appendix \ref{sec:Appendix - Source-data}),
LETF and VETF expense ratios are assumed to be 0.89\% and 0.06\% respectively,
and a borrowing premium of 3\% over the T-bill rate is applicable
to short positions.
\label{fig: Intuition - with jumps}}
\end{figure}
}

The investor, being unable to invest directly in $S$, can combine
an ETF investment with T-bills. We will assume that the investor does
not short-sell the LETF or VETF\footnote{As can be seen from the relationship between the objective functions
(\ref{eq: IR MV formulation}) and (\ref{eq: IR objective USED}),
optimizing the IR essentially places us within a variant of the Mean-Variance
(MV) framework with a constant risk aversion parameter. In MV optimization
(see for example \cite{PvSDangForsyth2018_TCMV,BensoussanEtAl2014})
with a constant risk aversion parameter, it is never optimal to short-sell
the risky asset. The subsequent results of Section \ref{sec:Closed-form-solutions}
and the results of \cite{NiLiForsyth2023_LFNN} suggest that this
observation also holds our investment scenario, i.e. it is never expected
to be IR-optimal to short-sell high return/high volatility assets
given the existence of low return/low volatility assets.}, but might short-sell the T-bills (i.e. borrow funds) to leverage
their investment in the ETF, in which case a constant borrowing premium
$b\geq0$ is added to the T-bill returns. In more detail, if $p$
denotes the fraction of wealth $W\left(t_{0}\right)=w_{0}$ that the
LETF or VETF investors invest in their respective ETFs, an investment
fraction $p>1$ in the ETF is funded by borrowing the amount $\left(1-p\right)\cdot w_{0}$
at an interest rate of $\left(r+b\right)$, so the T-bill dynamics
applicable to the investors can be modified as
\begin{equation}
\frac{\overline{B}\left(\Delta t\right)}{\overline{B}\left(0\right)}=\exp\left\{ \overline{r}\left(p\right)\cdot\Delta t\right\} ,\qquad\textrm{where \qquad}\overline{r}\left(p\right)=r+b\cdot\mathbb{I}_{\left[p>1\right]},\label{eq: T-bill dynamics for investors discrete}
\end{equation}
with $\mathbb{I}_{\left[A\right]}$ denoting the indicator of the
event $A$.

The VETF investor (see Assumption \ref{assu: Stylized-assumptions-for closed-form}(ii))
specifies an investment strategy $\mathcal{P}_{v}=\boldsymbol{p}_{v}\left(t_{0}\right)=\left(1-p_{v},p_{v}\right)$,
where $p_{v}$ denotes the fraction of wealth $W\left(t_{0}\right)=w_{0}$
invested in the VETF $F_{v}$ at time $t_{0}=0$, and the remaining
fraction of wealth $\left(1-p_{v}\right)$ invested in 30-day T-bills.
The VETF investor's wealth at the end of the investment time horizon,$W_{v}\left(\Delta t;p_{v}\right)$
therefore satisfies
\begin{eqnarray}
\frac{W_{v}\left(\Delta t;p_{v}\right)}{w_{0}} & = & \left(1-p_{v}\right)\cdot\exp\left\{ \overline{r}\left(p_{v}\right)\cdot\Delta t\right\} +p_{v}\cdot\exp\left\{ -c_{v}\cdot\Delta t\right\} \cdot\left(\frac{S\left(\Delta t\right)}{S\left(0\right)}\right).\label{eq: VETF W ratio discrete rebal}
\end{eqnarray}

Similarly, the LETF investor specifies investment strategy $\mathcal{P}_{\ell}=\boldsymbol{p}_{\ell}\left(t_{0}\right)=\left(1-p_{\ell},p_{\ell}\right)$,
where $p_{\ell}$ is the fraction of wealth $W\left(t_{0}\right)=w_{0}$
invested in the LETF $F_{\ell}$ at time $t_{0}=0$, and the remaining
fraction of wealth $\left(1-p_{\ell}\right)$ invested in 30-day T-bills.
Using (\ref{eq: LETF *only* ratio discrete rebal}), the LETF investor's
wealth at the end of the investment time horizon,$W_{\ell}\left(\Delta t;p_{\ell}\right)$
therefore satisfies
\begin{equation}
\frac{W_{\ell}\left(\Delta t;p_{\ell}\right)}{w_{0}}=\left(1-p_{\ell}\right)\cdot\exp\left\{ \overline{r}\left(p_{\ell}\right)\cdot\Delta t\right\} +p_{\ell}\cdot\exp\left\{ -c_{\ell}\cdot\Delta t\right\} \cdot f_{\ell}\left(\Delta t;\beta\right)\cdot\widetilde{Y}_{\ell}\left(\Delta t;\beta\right)\cdot\left(\frac{S\left(\Delta t\right)}{S\left(0\right)}\right)^{\beta}.\label{eq: LETF W ratio discrete rebal}
\end{equation}

Varying $p_{v}$ and $p_{\ell}$ in (\ref{eq: VETF W ratio discrete rebal})
and (\ref{eq: LETF W ratio discrete rebal}) therefore trace out different
payoffs for $W_{v}\left(\Delta t;p_{v}\right)$ and $W_{\ell}\left(\Delta t;p_{\ell}\right)$
as a function of the (random) underlying index outcome $S\left(\Delta t\right)$,
with specific choices of $p_{v}$ and $p_{\ell}$ illustrated in Figures
\ref{fig: Intuition - no jumps} and \ref{fig: Intuition - with jumps}.
Note however that the wealth of the VETF investor $W_{v}\left(\Delta t;p_{v}\right)$
is linear in $S\left(\Delta t\right)$, and conditional on $S\left(\Delta t\right)$
the outcome $W_{v}\left(\Delta t;p_{v}\right)$ is deterministic.
However, this is not the case for the wealth $W_{\ell}\left(\Delta t;p_{\ell}\right)$
of the LETF investor, which has a power call-like payoff due to the
$\left[S\left(\Delta t\right)\right]^{\beta}$ term of (\ref{eq: LETF W ratio discrete rebal})
in conjunction with limited liability. Note that even if we condition
on the value of $S\left(\Delta t\right)$, the wealth outcome $W_{\ell}\left(\Delta t;p_{\ell}\right)$
is \textit{not} deterministic due to the presence of the jump term
$\widetilde{Y}_{\ell}\left(\Delta t;\beta\right)$in (\ref{eq: LETF W ratio discrete rebal}).
However, if no jumps are present then $W_{\ell}\left(\Delta t;p_{\ell}\right)$
conditional on $S\left(\Delta t\right)$ is linear in $\left[S\left(\Delta t\right)\right]^{\beta}$,
compare Figures \ref{fig: Intuition - no jumps} and \ref{fig: Intuition - with jumps}.

Suppose the LETF and VETF investors want to choose values $p_{v}^{\ast}$
and $p_{\ell}^{\ast}$, respectively, to maximize the IR (\ref{eq: IR goetzmann})
subject to an implicit target $\gamma>0$ in (\ref{eq: IR objective USED}).
In this setting of parametric asset dynamics, we can simulate $N_{d}$
paths of the underlying equity index using (\ref{eq: S ratio discrete rebal})
use each path's information together. 

{\PVSnew{First, we discretize possible values of the fractions $p_{v}$ and $p_{\ell}$ using a fine grid, so that using each discretized value of $p_{v}$ and $p_{\ell}$, we can obtain the corresponding values of $W_{v}^{\left(j\right)}\left(\Delta t;p_{v}\right)$
and $W_{\ell}^{\left(j\right)}\left(\Delta t;p_{\ell}\right)$ respectively,
along each path $j=1,...,N_{d}$. Next, using a discretization
of the objective (\ref{eq: IR objective USED}) in this setting, we can find the approximate IR-optimal values $p_{v}^{\ast}$ and $p_{\ell}^{\ast}$ by exhaustive search over the grid by solving
\begin{eqnarray}
p_{k}^{\ast} & = & \arg\min_{p_{k}}\left\{ \frac{1}{N_{d}}\sum_{j=1}^{N_{d}}\left(W_{k}^{\left(j\right)}\left(\Delta t;p_{k}\right)-\left[\hat{W}^{\left(j\right)}\left(\Delta t;\hat{p}_{s}\right)+\gamma\right]\right)^{2}\right\} ,\qquad k\in\left\{ v,\ell\right\} .\label{eq: IR optimization numerical discrete lump sum}
\end{eqnarray} 
}}
Figure \ref{fig: Single period discrete rebalancing} illustrates
the results of this procedure for two different values of the implicit target, $\gamma=20$
and $\gamma=50$, where we observe the following:
\begin{itemize}
\item In the case of $\gamma=20$ in Figure \ref{fig: Single period discrete rebalancing}(a), the VETF investor simply invests all wealth in the VETF ($p_{v}^{\ast}=100\%$),
whereas the LETF investor invests slightly less than half of total
wealth in the LETF ($p_{\ell}^{\ast}=48.3\%$). Note that the IR-optimal strategies
in this case satisfy $p_{v}^{\ast}/p_{\ell}^{\ast}=2.070$.

\item With a significantly more aggressive benchmark outperformance target
of $\gamma=50$ in Figure \ref{fig: Single period discrete rebalancing}(b),
the LETF investor now invests $p_{\ell}^{\ast}=70.1\%$ in the LETF,
i.e. there is no need to leverage the LETF investment itself, whereas
the VETF investor borrows 20\% of wealth to invest $p_{v}^{\ast}=120\%$
in the VETF. This leverage is costly for the VETF investor due to
the lack of downside protection and borrowing premiums, which can
be seen in both the upside and extreme downside outcomes of Figure
\ref{fig: Single period discrete rebalancing}(b). In this case, we
have $p_{v}^{\ast}/p_{\ell}^{\ast}=1.712$.
\end{itemize}
Comparing IR-optimal investment strategies implemented using a LETF
or VETF with the same benchmark outperformance target $\gamma>0$,
the observation that $p_{v}^{\ast}/p_{\ell}^{\ast}\approx\beta=2$
holds in the cases illustrated in Figure \ref{fig: Single period discrete rebalancing}
is not an accident. This relationship is more rigorously discussed
in the subsequent results of this section, but for now we note that
exact equality $p_{v}^{\ast}/p_{\ell}^{\ast}\equiv\beta$, where $\beta$
is the returns multiplier of the LETF, only holds for the IR-optimal
investment strategies in the case of continuous rebalancing ($\Delta t\downarrow0$),
zero expense ratios ($c_{v}=c_{\ell}=0$), zero borrowing premium
over the risk-free rate $r$ ($b=0$), and when no leverage restrictions
are applicable. While this is a very restrictive set of assumptions,
$p_{v}^{\ast}/p_{\ell}^{\ast}\approx\beta$ is nevertheless a useful
rule-of-thumb to keep in mind when comparing IR-optimal investment
strategies in more general cases, a simple example being Figure \ref{fig: Single period discrete rebalancing}.
However, it should be emphasized that while the \textit{strategies}
might satisfy $p_{v}^{\ast}/p_{\ell}^{\ast}\approx\beta$, this does
\textit{not} mean that the ultimate investment \textit{outcomes} for
the LETF and VETF investors (such as investor wealth, benchmark outperformance
etc.) have straightforward relationships except under very restrictive
conditions, since the outcomes are affected by limited liability and
the path-dependent role of jumps (see Figure \ref{fig: Single period discrete rebalancing}).

\noindent 
\begin{figure}[!tbh]
\noindent \begin{centering}
\subfloat[$\gamma=20$: $p_{\ell}^{\ast}=48.3\%$, $p_{v}^{\ast}=100\%$]{\includegraphics[scale=0.75]{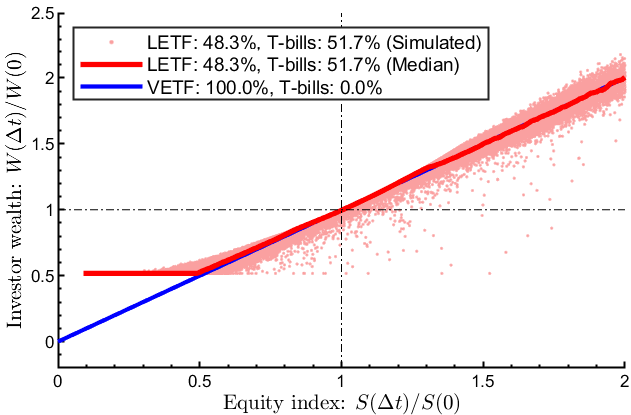}

}$\quad$$\quad$\subfloat[$\gamma=50$: $p_{\ell}^{\ast}=70.1\%$, $p_{v}^{\ast}=120\%$]{\includegraphics[scale=0.75]{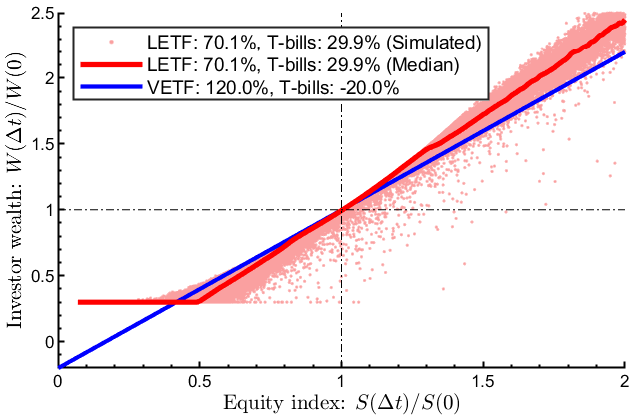}

}
\par\end{centering}
\caption{Payoffs when equity market index $S$ follows calibrated jump-diffusion
dynamics (\cite{KouOriginal} model): Investor wealth gross return
$W\left(\Delta t\right)/W\left(0\right)$ as a function of underlying
equity index gross return $S\left(\Delta t\right)/S\left(0\right)$,
$\Delta t=0.25$ (1 quarter), for different proportions of initial
wealth $W\left(0\right)$ invested in the LETF, VETF and T-bills at
time $t_{0}=0$. Asset parameters are calibrated to US equity and
bond market data over the period 1926:01 to 2023:12 (Appendix \ref{sec:Appendix - Source-data}),
LETF and VETF expense ratios are assumed to be 0.89\% and 0.06\% respectively,
and a borrowing premium of 3\% over the T-bill rate is applicable
to short positions.\label{fig: Single period discrete rebalancing}}
\end{figure}

\subsection{Dynamically-optimal strategies under continuous rebalancing\label{subsec: Optimal strategies CONTINUOUS rebal}}

Section \ref{subsec:Intuition:-lump-sum investment scenario} treated
the lump-sum investment scenario with no subsequent intervention over
$\left(t_{0},T\right]$, i.e. the set of rebalancing times being simply
$\mathcal{T}=\left[t_{0}\right]$. We now consider the other extreme,
namely that of continuous rebalancing, where the set of rebalancing
times is given by $\mathcal{T}=\left[t_{0},T\right]$.

Derivation of closed-form optimal strategies necessarily requires
stylized assumptions, in this case outlined in Assumption \ref{assu: Extra stylized-assumptions-for continuous rebalancing}.
As per Remark \ref{rem: Relaxing closed-form assumptions}, we emphasize
that these assumptions are not required for the subsequent results
discussed in Section \ref{sec:Indicative-investment-results}.

\begin{assumption}

\label{assu: Extra stylized-assumptions-for continuous rebalancing}(Stylized
assumptions - continuous rebalancing) In the case of continuous rebalancing,
for the purposes of obtaining closed-form solutions in this section,
we assume the following: 
\begin{enumerate}
\item Assumption \ref{assu: Stylized-assumptions-for closed-form} holds,
including parametric dynamics (\ref{eq: B dynamics})-(\ref{eq: S dynamics}),
(\ref{eq: VETF dynamics}) and (\ref{eq: LETF dynamics}) for the
underlying assets.
\item The investor injects cash into the portfolio at a constant rate of
$q\geq0$ per year. To ensure the wealth processes remain comparable,
the identical rate of cash injection is assumed for the benchmark
portfolio.
\item We assume continuous portfolio rebalancing ($\mathcal{T}=\left[t_{0},T\right]$),
no investment constraints (i.e. no leverage limits or short-selling
constraints), zero borrowing premium so that both borrowing and lending
occurs at the risk-free rate $r$, and trading is allowed to continue
in the event of insolvency. Note that these assumptions are standard
in the derivation of closed-form solutions of multi-period portfolio
optimization problems (see for example \cite{ZhouLi2000}). This implies
that the investment in the LETF can itself be leveraged, which is
plausible since even retail investors can borrow and invest in LETFs.
However, the degree to which leverage is required by either the LETF
or VETF investors depends on the aggressiveness of the outperformance
target $\gamma$, as shown by the subsequent results. \qed
\end{enumerate}
\end{assumption}

Since Assumption \ref{assu: Stylized-assumptions-for closed-form}
remains applicable (see Assumption \ref{assu: Extra stylized-assumptions-for continuous rebalancing}(i)),
the deterministic benchmark strategy allocates wealth to two assets,
namely the T-bills $B$ and the broad equity market index $S$. For
the special case of continuous rebalancing, let $t\rightarrow\hat{\varrho}_{s}\left(t\right)$
be a deterministic function of time denoting the proportional allocation
to $S$ at time $t\in\mathcal{T}=\left[t_{0},T\right]$, with $\left(1-\hat{\varrho}_{s}\left(t\right)\right)$
denoting the corresponding allocation to T-bills. The benchmark strategy
in this section is therefore given by 
\begin{eqnarray}
\hat{\mathcal{P}} & = & \left\{ \hat{\boldsymbol{p}}\left(t,\hat{W}\left(t\right)\right)=\left(1-\hat{\varrho}_{s}\left(t\right),\;\hat{\varrho}_{s}\left(t\right)\right):t\in\left[t_{0},T\right]\right\} .\label{eq: Benchmark strategy -  analytical solutions}
\end{eqnarray}

By Assumption \ref{assu: Stylized-assumptions-for closed-form}(ii),
in the case of the VETF investor, let $\varrho_{v}\left(t,\boldsymbol{X}_{v}\left(t\right)\right)$
denote the proportional allocation of wealth $W_{v}\left(t\right)$
at time $t$ to the VETF $F_{v}$ in the case of continuous rebalancing,
with $\boldsymbol{X}_{v}\left(t\right)=\left(W_{v}\left(t\right),\hat{W}\left(t\right),\hat{\varrho}_{s}\left(t\right)\right)$.
The VETF investor strategy is therefore of the form 
\begin{eqnarray}
\mathcal{P}_{v} & = & \left\{ \boldsymbol{p}_{v}\left(t,\boldsymbol{X}_{v}\left(t\right)\right)=\left(1-\varrho_{v}\left(t,\boldsymbol{X}_{v}\left(t\right)\right),\;\varrho_{v}\left(t,\boldsymbol{X}_{v}\left(t\right)\right)\right):t\in\left[t_{0},T\right]\right\} .\label{eq: VETF Investor strategy - analytical solutions}
\end{eqnarray}

In the case of the LETF investor, let $\varrho_{\ell}\left(t,\boldsymbol{X}_{\ell}\left(t\right)\right)$
denote the proportional allocation of wealth $W_{\ell}\left(t\right)$
at time $t$ to the LETF $F_{\ell}$ in the case of continuous rebalancing,
with $\boldsymbol{X}_{\ell}\left(t\right)=\left(W_{\ell}\left(t\right),\hat{W}\left(t\right),\hat{\varrho}_{s}\left(t\right)\right)$,
to obtain the LETF investor strategy as
\begin{eqnarray}
\mathcal{P}_{\ell} & = & \left\{ \boldsymbol{p}_{\ell}\left(t,\boldsymbol{X}_{\ell}\left(t\right)\right)=\left(1-\varrho_{\ell}\left(t,\boldsymbol{X}_{\ell}\left(t\right)\right),\;\varrho_{\ell}\left(t,\boldsymbol{X}_{\ell}\left(t\right)\right)\right):t\in\left[t_{0},T\right]\right\} .\label{eq: LETF investor strategy - analytical solutions}
\end{eqnarray}

Given investment strategies of the form (\ref{eq: Benchmark strategy -  analytical solutions})
and (\ref{eq: LETF investor strategy - analytical solutions}), as
well as dynamics (\ref{eq: B dynamics}), (\ref{eq: S dynamics}),
(\ref{eq: VETF dynamics}) and (\ref{eq: LETF dynamics}), we therefore
have the following wealth dynamics in the case of continuous rebalancing:
\begin{eqnarray}
d\hat{W}\left(t\right) & = & \left\{ \hat{W}\left(t^{-}\right)\cdot\left[r+\hat{\varrho}_{s}\left(t\right)\left(\mu-\lambda\kappa_{1}^{s}-r\right)\right]+q\right\} \cdot dt\nonumber \\
 &  & +\hat{W}\left(t^{-}\right)\hat{\varrho}_{s}\left(t\right)\sigma\cdot dZ\left(t\right)+\hat{W}\left(t^{-}\right)\hat{\varrho}_{s}\left(t\right)\cdot d\left(\sum_{i=1}^{\pi\left(t\right)}\left(\xi_{i}^{s}-1\right)\right),\label{eq: SDE W_hat}
\end{eqnarray}
\begin{eqnarray}
dW_{v}\left(t\right) & = & \left\{ W_{v}\left(t^{-}\right)\cdot\left[r+\varrho_{v}\left(t,\boldsymbol{X}_{v}\left(t^{-}\right)\right)\left\{ \left(\mu-\lambda\kappa_{1}^{s}-r\right)-c_{v}\right\} \right]+q\right\} \cdot dt\nonumber \\
 &  & +W_{v}\left(t^{-}\right)\varrho_{v}\left(t,\boldsymbol{X}_{v}\left(t^{-}\right)\right)\sigma\cdot dZ\left(t\right)+W_{v}\left(t^{-}\right)\varrho_{v}\left(t,\boldsymbol{X}_{v}\left(t^{-}\right)\right)\cdot d\left(\sum_{i=1}^{\pi\left(t\right)}\left(\xi_{i}^{s}-1\right)\right),\label{eq: SDE W VETF}
\end{eqnarray}
\begin{eqnarray}
dW_{\ell}\left(t\right) & = & \left\{ W_{\ell}\left(t^{-}\right)\cdot\left[r+\varrho_{\ell}\left(t,\boldsymbol{X}_{\ell}\left(t^{-}\right)\right)\left\{ \beta\left(\mu-\lambda\kappa_{1}^{s}-r\right)-c_{\ell}\right\} \right]+q\right\} \cdot dt\nonumber \\
 &  & +W_{\ell}\left(t^{-}\right)\varrho_{\ell}\left(t,\boldsymbol{X}_{\ell}\left(t^{-}\right)\right)\beta\sigma\cdot dZ\left(t\right)+W_{\ell}\left(t^{-}\right)\varrho_{\ell}\left(t,\boldsymbol{X}_{\ell}\left(t^{-}\right)\right)\beta\cdot d\left(\sum_{i=1}^{\pi\left(t\right)}\left(\xi_{i}^{\ell}-1\right)\right),\label{eq: SDE W LETF}
\end{eqnarray}
for $t\in\left(t_{0},T\right]$, with initial wealth $W_{v}\left(t_{0}\right)=W_{\ell}\left(t_{0}\right)=\hat{W}\left(t_{0}\right)=w_{0}$.
As a reminder, $q\geq0$ denotes the constant rate per year at which
cash is contributed to each portfolio (see Assumption \ref{assu: Extra stylized-assumptions-for continuous rebalancing}(ii)),
and $\beta>1$ in (\ref{eq: SDE W LETF}) denotes the multiplier of
the LETF.

Due to Assumption \ref{assu: Extra stylized-assumptions-for continuous rebalancing}(iii),
the set of admissible investor strategies is given by $\varrho_{k}\left(t,\boldsymbol{X}_{k}\left(t\right)\right)\in\mathcal{A}_{0}$
for $k\in\left\{ v,\ell\right\} $, where
\begin{eqnarray}
\mathcal{A}_{0} & = & \left\{ \left.\varrho_{k}\left(t,w,\hat{w},\hat{\varrho}_{s}\left(t\right)\right)\right|\varrho_{k}:\left[t_{0},T\right]\times\mathbb{R}^{3}\rightarrow\mathbb{R}\right\} ,\qquad k\in\left\{ v,\ell\right\} .\label{eq: A admissible controls - NO constraints}
\end{eqnarray}
The IR optimization problem (\ref{eq: IR objective USED}) in this
setting can therefore be written as
\begin{eqnarray}
\left(IR\left(\gamma\right)\right): &  & \inf_{\varrho_{k}\in\mathcal{A}_{0}}E_{\varrho_{k}}^{t_{0},w_{0}}\left[\left(W_{k}\left(T\right)-\left[\hat{W}\left(T\right)+\gamma\right]\right)^{2}\right],\qquad\gamma>0,\quad\textrm{for }k\in\left\{ v,\ell\right\} ,\label{eq: IR problem for analytical}
\end{eqnarray}
with wealth dynamics (\ref{eq: SDE W_hat}),(\ref{eq: SDE W VETF})
and (\ref{eq: SDE W LETF}) respectively. 

The following theorem describes the HJB partial integro-differential
equation (PIDE) satisfied by the value function of (\ref{eq: IR problem for analytical})
for the LETF investor.
\begin{thm}
\label{thm: Verification theorem}(IR optimization for the LETF investor:
Verification theorem) Let $\gamma>0$, and assume a given benchmark
strategy $t\rightarrow\hat{\varrho}_{s}\left(t\right)$ that is deterministic
and integrable. Suppose that for all $\left(t,w,\hat{w},\hat{\varrho}_{s}\right)\in\left[t_{0},T\right]\times\mathbb{R}^{3}$,
there exist functions $V\left(t,w,\hat{w},\hat{\varrho}_{s}\right):\left[t_{0},T\right]\times\mathbb{R}^{3}\rightarrow\mathbb{R}$
and $\varrho_{\ell}^{\ast}\left(t,w,\hat{w},\hat{\varrho}_{s};\gamma\right):\left[t_{0},T\right]\times\mathbb{R}^{3}\rightarrow\mathbb{R}$
such that: (i) $V$ and $\varrho_{\ell}^{\ast}$ are sufficiently
smooth and solve the HJB PIDE (\ref{eq: IR pide})-(\ref{eq: IR pide terminal condition}),
and (ii) the pointwise supremum in (\ref{eq: IR pide}) is attained
by the function $\varrho_{\ell}^{\ast}\left(t,w,\hat{w},\hat{\varrho}_{s};\gamma\right)$.
\begin{eqnarray}
\frac{\partial V}{\partial t}+\inf_{\varrho_{\ell}\in\mathbb{R}}\left\{ \begin{array}{c}
\\
\\
\end{array}\!\!\!\mathcal{H}\left(\varrho_{\ell};t,w,\hat{w},\hat{\varrho}_{s}\right)\begin{array}{c}
\\
\\
\end{array}\!\!\!\right\}  & = & 0,\label{eq: IR pide}\\
V\left(T,w,\hat{w},\hat{\varrho}_{s}\right) & = & \left(w-\hat{w}-\gamma\right)^{2},\label{eq: IR pide terminal condition}
\end{eqnarray}
where 
\begin{eqnarray}
\mathcal{H}\left(\varrho_{\ell};t,w,\hat{w},\hat{\varrho}_{s}\right) & = & \left(w\cdot\left[r+\left\{ \beta\left(\mu-\lambda\kappa_{1}^{s}-r\right)-c_{\ell}\right\} \cdot\varrho_{\ell}\right]+q\right)\cdot\frac{\partial V}{\partial w}\nonumber \\
 &  & +\left(\hat{w}\cdot\left[r+\left(\mu-\lambda\kappa_{1}^{s}-r\right)\cdot\hat{\varrho}_{s}\right]+q\right)\cdot\frac{\partial V}{\partial\hat{w}}\nonumber \\
 &  & +\frac{1}{2}\left(\varrho_{\ell}\cdot w\beta\sigma\right)^{2}\cdot\frac{\partial^{2}V}{\partial w^{2}}+\frac{1}{2}\left(\hat{\varrho}_{s}\hat{w}\sigma\right)^{2}\cdot\frac{\partial^{2}V}{\partial\hat{w}^{2}}+\left(\varrho_{\ell}\cdot w\beta\sigma\right)\left(\hat{\varrho}_{s}\hat{w}\sigma\right)\cdot\frac{\partial^{2}V}{\partial w\partial\hat{w}}\nonumber \\
 &  & -\lambda\cdot V+\lambda\cdot\int_{0}^{\infty}V\left(w+\varrho_{\ell}\cdot w\beta\left(\xi^{\ell}-1\right),\hat{w}+\hat{\varrho}_{s}\hat{w}\left(\xi^{s}-1\right),t\right)G\left(\xi^{s}\right)d\xi^{s}.\label{eq: Ham_IR}
\end{eqnarray}

Then given Assumption \ref{assu: Extra stylized-assumptions-for continuous rebalancing}
and wealth dynamics (\ref{eq: SDE W_hat}) and (\ref{eq: SDE W LETF}),
$V$ is the value function and $\varrho_{\ell}^{\ast}$ is the optimal
control (i.e. optimal proportion of the investor's wealth to be invested
in the LETF with $\beta>1$) for the $IR\left(\gamma\right)$ problem
(\ref{eq: IR problem for analytical}) for the LETF investor. 
\end{thm}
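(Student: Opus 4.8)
The plan is to run a standard verification argument for controlled jump-diffusions. Fix an arbitrary admissible control $\varrho_\ell\in\mathcal{A}_0$ and consider the process $t\mapsto V\bigl(t,W_\ell(t),\hat W(t),\hat\varrho_s(t)\bigr)$, where $W_\ell$ evolves according to (\ref{eq: SDE W LETF}) and $\hat W$ according to (\ref{eq: SDE W_hat}) (the map $t\mapsto\hat\varrho_s(t)$ being deterministic and integrable, it contributes no stochastic terms). Applying the It\^o formula for jump-diffusion processes to $V$ over $[t_0,T]$ splits $dV$ into (a) a finite-variation part equal to $\bigl(\partial_t V+\mathcal{L}^{\varrho_\ell}V\bigr)\,dt$, where $\mathcal{L}^{\varrho_\ell}$ is precisely the operator such that $\partial_t V+\mathcal{L}^{\varrho_\ell}V=\partial_t V+\mathcal{H}(\varrho_\ell;t,W_\ell(t^-),\hat W(t^-),\hat\varrho_s(t))$ with $\mathcal{H}$ as in (\ref{eq: Ham_IR}), and (b) a local-martingale part built from the Brownian integrals against $dZ$ and the compensated-Poisson integral $\int\,\widetilde{N}(dt,d\xi)$. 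Since $V$ solves the HJB PIDE (\ref{eq: IR pide}), pointwise $\partial_t V+\mathcal{H}(\varrho_\ell;\,\cdot\,)\ge 0$ for every $\varrho_\ell$, with equality along $\varrho_\ell=\varrho_\ell^*(t,w,\hat w,\hat\varrho_s;\gamma)$ by hypothesis (ii).

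Next I would integrate from $t_0$ to $T$, introduce a localizing sequence of stopping times $\tau_n\uparrow T$ that reduces the local-martingale part to a true martingale on $[t_0,\tau_n]$, take expectations, and use the nonnegativity of the drift to obtain $V\bigl(t_0,w_0,w_0,\hat\varrho_s(t_0)\bigr)\le E_{\varrho_\ell}^{t_0,w_0}\bigl[V\bigl(\tau_n,W_\ell(\tau_n),\hat W(\tau_n),\hat\varrho_s(\tau_n)\bigr)\bigr]$. Passing to the limit $n\to\infty$ and invoking the terminal condition (\ref{eq: IR pide terminal condition}) gives $V(t_0,\ldots)\le E_{\varrho_\ell}^{t_0,w_0}\bigl[(W_\ell(T)-\hat W(T)-\gamma)^2\bigr]$, which is the objective of (\ref{eq: IR problem for analytical}). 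As $\varrho_\ell$ was arbitrary, $V(t_0,\ldots)$ bounds the infimum from below; repeating the computation with $\varrho_\ell=\varrho_\ell^*$ makes the drift vanish, yielding equality, so the infimum is attained and equals $V(t_0,\ldots)$. Running the identical argument from a generic initial point $(t,w,\hat w,\hat\varrho_s)$ instead of $(t_0,w_0,\ldots)$ then identifies $V$ as the value function of (\ref{eq: IR problem for analytical}) on all of $[t_0,T]\times\mathbb{R}^3$ and $\varrho_\ell^*$ as the optimal feedback control.

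The main obstacle — and essentially the only place genuine estimates are needed — is justifying the limit in the localization: one must show $V\bigl(\tau_n\wedge\cdot,\ldots\bigr)\to V(T,\ldots)$ with enough uniform integrability to commute limit and expectation (dominated or monotone convergence), and that the compensated jump integral is a genuine martingale, not merely a local one, on each stopped interval. This rests on a priori growth control: $V$ is assumed sufficiently smooth and, consistent with the quadratic terminal data, of at most quadratic growth; the benchmark allocation $\hat\varrho_s$ is deterministic and integrable, so $\hat W(T)$ has finite moments of all orders; and the LETF jump multiplier satisfies $\xi^\ell\ge(\beta-1)/\beta$ by (\ref{eq: LETF jump multiplier}), which keeps $W_\ell$ from exploding downward and secures the needed integrability of $W_\ell(T)$. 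A secondary, purely bookkeeping point is to confirm that the integral term in (\ref{eq: Ham_IR}), written against the law $G(\xi^s)\,d\xi^s$ of $\xi^s$, reproduces exactly the expectation of $V$ over the induced law of $\xi^\ell$ after the substitution (\ref{eq: LETF jump multiplier}); this is already encoded in the passage from (\ref{eq: LETF jump multiplier}) to the dynamics (\ref{eq: LETF dynamics}) and requires only tracking the change of variables, not new analysis.
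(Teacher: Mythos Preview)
Your proposal is correct and follows the standard verification route---apply It\^o's formula for jump-diffusions to the candidate $V$ along the controlled state, use the HJB inequality to sign the drift, localize, and pass to the limit---which is exactly the machinery the paper invokes. The paper's own argument is in fact more heuristic than yours: it applies It\^o to the objective functional $J$ to \emph{derive} the PIDE via dynamic programming, then explicitly labels this as informal and defers the rigorous verification step to the standard references (\O ksendal--Sulem, Applebaum); your write-up supplies precisely that deferred step.
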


\begin{proof}
See Appendix \ref{subsec: Appendix proof of Verification thm}. Note
that since $\xi^{\ell}$ is a function of $\xi^{s}$ (see (\ref{eq: LETF jump multiplier})),
the integral in (\ref{eq: Ham_IR}) is only written with respect to
values of $\xi^{s}$ with associated PDF $G\left(\xi^{s}\right)$.
\end{proof}
Solving the HJB PIDE (\ref{eq: IR pide})-(\ref{eq: IR pide terminal condition}),
we obtain the IR-optimal investment strategy for the LETF investor
as per Proposition \ref{prop: Closed-form Optimal control for LETF}. 
\begin{prop}
\label{prop: Closed-form Optimal control for LETF}(IR-optimal investment
strategy using the LETF) Let $\gamma>0$ be fixed. Suppose that Assumption
\ref{assu: Extra stylized-assumptions-for continuous rebalancing}
and wealth dynamics (\ref{eq: SDE W_hat}) and (\ref{eq: SDE W LETF})
apply. Let $W_{\ell}^{\ast}\left(t\right)$ denote the LETF investor's
wealth process (\ref{eq: SDE W LETF}) under the optimal strategy
$\varrho_{\ell}^{\ast}$, and let $\boldsymbol{X}_{\ell}^{\ast}\left(t\right)=\left(W_{\ell}^{\ast}\left(t\right),\hat{W}\left(t\right),\hat{\varrho}_{s}\left(t\right)\right)$.
Then the IR-optimal fraction of the investor's wealth invested in
the LETF, $\varrho_{\ell}^{\ast}$, satisfies
\begin{eqnarray}
 &  & \varrho_{\ell}^{\ast}\left(t,\boldsymbol{X}_{\ell}^{\ast}\left(t^{-}\right)\right)\cdot W_{\ell}^{\ast}\left(t^{-}\right)\nonumber \\
 & = & \left(\frac{\beta\left[\mu+\lambda\left(\kappa_{1}^{\ell}-\kappa_{1}^{s}\right)-r\right]-c_{\ell}}{\beta^{2}\left(\sigma^{2}+\lambda\kappa_{2}^{\ell}\right)}\right)\cdot\left[h_{\ell}\left(t\right)+\gamma e^{-r\left(T-t\right)}-\left(W_{\ell}^{\ast}\left(t^{-}\right)-g_{\ell}\left(t\right)\cdot\hat{W}\left(t^{-}\right)\right)\right]\nonumber \\
 &  & +\frac{1}{\beta}g_{\ell}\left(t\right)\left(\frac{\sigma^{2}+\lambda\kappa_{\chi}^{\ell,s}}{\sigma^{2}+\lambda\kappa_{2}^{\ell}}\right)\cdot\hat{\varrho}_{s}\left(t\right)\hat{W}\left(t^{-}\right),\label{eq: LETF optimal strategy}
\end{eqnarray}
where $g_{\ell}$ and $h_{\ell}$ are the following deterministic
functions,
\begin{eqnarray}
g_{\ell}\left(t\right) & = & \exp\left\{ K_{\beta}^{\ell,s}\cdot\int_{t}^{T}\hat{\varrho}_{s}\left(u\right)du\right\} ,\label{eq: function g for LETF}\\
h_{\ell}\left(t\right) & = & -\frac{q}{r}\left(1-e^{-r\left(T-t\right)}\right)+qe^{-r\left(T-t\right)}\cdot\int_{t}^{T}\exp\left\{ r\left(T-y\right)+K_{\beta}^{\ell,s}\cdot\int_{y}^{T}\hat{\varrho}_{s}\left(u\right)du\right\} dy,\label{eq: function h for LETF}
\end{eqnarray}
with constant $K_{\beta}^{\ell,s}$ given by 
\begin{eqnarray}
K_{\beta}^{\ell,s} & = & \mu-r-\frac{\left(\beta\left[\mu+\lambda\left(\kappa_{1}^{\ell}-\kappa_{1}^{s}\right)-r\right]-c_{\ell}\right)\left(\sigma^{2}+\lambda\kappa_{\chi}^{\ell,s}\right)}{\beta\left(\sigma^{2}+\lambda\kappa_{2}^{\ell}\right)}.\label{eq: Constant K beta for LETF}
\end{eqnarray}
\end{prop}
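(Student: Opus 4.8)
The plan is to solve the HJB PIDE (\ref{eq: IR pide})--(\ref{eq: IR pide terminal condition}) of Theorem \ref{thm: Verification theorem} in closed form by exploiting its linear--quadratic structure, and then to invoke that verification theorem to conclude that the resulting control is optimal. Since the terminal condition $\left(w-\hat{w}-\gamma\right)^{2}$ is quadratic in $\left(w,\hat{w}\right)$ while every coefficient appearing in $\mathcal{H}$ in (\ref{eq: Ham_IR}) is affine in $w$ and $\hat{w}$, I would posit the ansatz
\begin{equation*}
V\left(t,w,\hat{w},\hat{\varrho}_{s}\right)=a\left(t\right)\left(w-g_{\ell}\left(t\right)\hat{w}-h_{\ell}\left(t\right)-\gamma e^{-r\left(T-t\right)}\right)^{2}+b\left(t\right),
\end{equation*}
for deterministic functions $a,g_{\ell},h_{\ell},b$ of $t$ only (the dependence on $\hat{\varrho}_{s}$ enters solely through the deterministic, integrable map $t\mapsto\hat{\varrho}_{s}\left(t\right)$), with terminal data $a\left(T\right)=1$, $g_{\ell}\left(T\right)=1$, $h_{\ell}\left(T\right)=0$, $b\left(T\right)=0$ forced by (\ref{eq: IR pide terminal condition}).

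The first step is the pointwise minimization over $\varrho_{\ell}$ in (\ref{eq: IR pide}). Collecting the $\varrho_{\ell}$-dependent terms of $\mathcal{H}$ and writing $V_{w},V_{ww},V_{w\hat{w}}$ for the corresponding partial derivatives, one finds that $\mathcal{H}$ is quadratic in $\varrho_{\ell}$: the quadratic coefficient comes from the diffusion term $\tfrac{1}{2}\left(\varrho_{\ell}w\beta\sigma\right)^{2}V_{ww}$ together with the quadratic part of the jump integral, whose weight is $\lambda\kappa_{2}^{\ell}$ because the LETF jumps are governed by $\xi^{\ell}$ rather than $\xi^{s}$ (cf. (\ref{eq: LETF jump multiplier})), so that coefficient is $\tfrac{1}{2}w^{2}\beta^{2}\left(\sigma^{2}+\lambda\kappa_{2}^{\ell}\right)V_{ww}$; the linear coefficient assembles the drift term, the cross second-order term $\varrho_{\ell}w\beta\sigma^{2}\hat{\varrho}_{s}\hat{w}V_{w\hat{w}}$, and the linear pieces of the jump integral (which produce $\kappa_{1}^{\ell}$ and $\kappa_{\chi}^{\ell,s}$), and equals $w\bigl[\bigl(\beta[\mu+\lambda(\kappa_{1}^{\ell}-\kappa_{1}^{s})-r]-c_{\ell}\bigr)V_{w}+\beta(\sigma^{2}+\lambda\kappa_{\chi}^{\ell,s})\hat{\varrho}_{s}\hat{w}V_{w\hat{w}}\bigr]$. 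The first-order condition then gives
\begin{equation*}
\varrho_{\ell}^{\ast}\,w=-\frac{\bigl(\beta[\mu+\lambda(\kappa_{1}^{\ell}-\kappa_{1}^{s})-r]-c_{\ell}\bigr)V_{w}+\beta(\sigma^{2}+\lambda\kappa_{\chi}^{\ell,s})\,\hat{\varrho}_{s}\hat{w}\,V_{w\hat{w}}}{\beta^{2}(\sigma^{2}+\lambda\kappa_{2}^{\ell})\,V_{ww}}.
\end{equation*}
Under the ansatz $V_{ww}=2a$, $V_{w}=2a\left(w-g_{\ell}\hat{w}-h_{\ell}-\gamma e^{-r\left(T-t\right)}\right)$ and $V_{w\hat{w}}=-2ag_{\ell}$, so that $-V_{w}/V_{ww}=h_{\ell}+\gamma e^{-r\left(T-t\right)}-\left(w-g_{\ell}\hat{w}\right)$ and $-V_{w\hat{w}}/V_{ww}=g_{\ell}$; substituting these and replacing $\left(w,\hat{w}\right)$ by $\left(W_{\ell}^{\ast}\left(t^{-}\right),\hat{W}\left(t^{-}\right)\right)$ reproduces (\ref{eq: LETF optimal strategy}) exactly, including the factor $1/\beta$ on the $\hat{\varrho}_{s}\hat{W}$ term. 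One also records the second-order condition: the critical point is a minimum because $w^{2}\beta^{2}(\sigma^{2}+\lambda\kappa_{2}^{\ell})V_{ww}>0$, which holds once $a\left(t\right)>0$ on $\left[t_{0},T\right]$, and $a\left(t\right)>0$ follows from the scalar ODE that $a$ satisfies together with $a\left(T\right)=1$.

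The second step is to substitute $\varrho_{\ell}^{\ast}$ back into (\ref{eq: IR pide}); using the ansatz, the PIDE collapses to a polynomial identity in $\left(w,\hat{w}\right)$. Matching the coefficient of the quadratic form $\left(w-g_{\ell}\hat{w}-h_{\ell}-\gamma e^{-r\left(T-t\right)}\right)^{2}$ yields an ODE for $a$, while matching the remaining $\hat{w}$- and constant-in-$\left(w,\hat{w}\right)$ contributions forces $g_{\ell}$ and $h_{\ell}$ to solve decoupled linear first-order ODEs (and $b$ to absorb the residual, unhedgeable variance, which does not affect the control). The equation for $g_{\ell}$ has the form $g_{\ell}'\left(t\right)=-K_{\beta}^{\ell,s}\,\hat{\varrho}_{s}\left(t\right)g_{\ell}\left(t\right)$ with $g_{\ell}\left(T\right)=1$, where the constant $K_{\beta}^{\ell,s}$ in (\ref{eq: Constant K beta for LETF}) arises precisely as the expected excess return $\mu-r$ of the benchmark's equity component (including its jump-mean contribution $\lambda\kappa_{1}^{s}$) net of the covariance adjustment induced by the optimal LETF holding, namely $\tfrac{(\beta[\mu+\lambda(\kappa_{1}^{\ell}-\kappa_{1}^{s})-r]-c_{\ell})(\sigma^{2}+\lambda\kappa_{\chi}^{\ell,s})}{\beta(\sigma^{2}+\lambda\kappa_{2}^{\ell})}$; integrating gives (\ref{eq: function g for LETF}). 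The equation for $h_{\ell}$ is inhomogeneous, driven by the cash-injection rate $q$ (the $\gamma e^{-r\left(T-t\right)}$ piece is split off so that it exactly absorbs the $r$-discounting of the target), and solving it with an integrating factor built from $g_{\ell}$ yields (\ref{eq: function h for LETF}); one then checks $h_{\ell}\left(T\right)=0$ and that all terminal conditions are consistent.

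The main obstacle is bookkeeping rather than conceptual. One must expand the jump integral $\lambda\int_{0}^{\infty}V\bigl(w+\varrho_{\ell}w\beta(\xi^{\ell}-1),\hat{w}+\hat{\varrho}_{s}\hat{w}(\xi^{s}-1),t\bigr)G\left(\xi^{s}\right)d\xi^{s}$ for the quadratic $V$, carefully tracking which moments are taken with respect to $\xi^{\ell}$ (the truncated multiplier (\ref{eq: LETF jump multiplier})) and which with respect to $\xi^{s}$ --- this is the origin of the asymmetric quantities $\kappa_{1}^{\ell},\kappa_{2}^{\ell},\kappa_{\chi}^{\ell,s}$ in (\ref{eq: Kappas for LETF dynamics}) --- and then verify that, after inserting the minimizer, every $w$- and $\hat{w}$-dependent term reassembles into the single quadratic form plus $b\left(t\right)$, so that the coefficient-matching is internally consistent. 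Once the explicit $\bigl(V,\varrho_{\ell}^{\ast}\bigr)$ pair is in hand and the smoothness, admissibility ($\varrho_{\ell}^{\ast}\in\mathcal{A}_{0}$) and integrability requirements are checked --- which is the standard verification argument handled by Theorem \ref{thm: Verification theorem}, as in \cite{PvSForsythLi2022_stochbm} --- the conclusion (\ref{eq: LETF optimal strategy})--(\ref{eq: Constant K beta for LETF}) follows.
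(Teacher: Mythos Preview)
Your approach mirrors the paper's: posit a quadratic ansatz for $V$, compute the first-order condition to get $\varrho_\ell^{\ast}$ in terms of the ansatz coefficients, substitute back, match coefficients to obtain ODEs, and then invoke Theorem~\ref{thm: Verification theorem}. Your derivation of the minimizer is correct and indeed reproduces (\ref{eq: LETF optimal strategy}), and the ODEs you write down for $g_\ell$ and $h_\ell$ are the right ones.

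There is, however, a genuine gap in the ansatz itself. The paper works with the \emph{general} quadratic
\[
V(t,w,\hat w)=A(t)w^{2}+\hat A(t)\hat w^{2}+D(t)w\hat w+F(t)w+\hat F(t)\hat w+C(t),
\]
and only needs the ODEs for $A$, $D$, $F$ (equivalently your $a$, $g_\ell$, $h_\ell$) to pin down the control. Your perfect-square form $a(t)\bigl(w-g_\ell\hat w-h_\ell-\gamma e^{-r(T-t)}\bigr)^{2}+b(t)$ implicitly imposes $\hat A=D^{2}/(4A)$ and $\hat F=-FD/(2A)$, and these constraints fail once jumps make $\xi^{\ell}\neq\xi^{s}$ with positive probability. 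Concretely, after substituting $\varrho_\ell^{\ast}$ back into (\ref{eq: IR pide}), the coefficient of $\hat w^{2}$ that does \emph{not} multiply $Y=w-g_\ell\hat w-h_\ell-\gamma e^{-r(T-t)}$ is
\[
a\,g_\ell^{2}\,\hat\varrho_s^{2}\cdot
\frac{(\sigma^{2}+\lambda\kappa_{2}^{s})(\sigma^{2}+\lambda\kappa_{2}^{\ell})-(\sigma^{2}+\lambda\kappa_{\chi}^{\ell,s})^{2}}{\sigma^{2}+\lambda\kappa_{2}^{\ell}},
\]
and this is strictly positive whenever $E\bigl[(\xi^{s}-\xi^{\ell})^{2}\bigr]>0$, i.e.\ precisely in the limited-liability regime (\ref{eq: LETF jump multiplier}) that the paper emphasizes. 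A function $b(t)$ of time alone cannot absorb a $\hat w^{2}$ term, so your $V$ does not solve the PIDE and Theorem~\ref{thm: Verification theorem} cannot be invoked for it. The fix is small: either revert to the paper's six-function quadratic, or replace $b(t)$ by $b_{2}(t)\hat w^{2}+b_{1}(t)\hat w+b_{0}(t)$. The control formula you already obtained is unaffected, since it depends only on $V_{w}/V_{ww}$ and $V_{w\hat w}/V_{ww}$.
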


\begin{proof}
See Appendix \ref{subsec: Appendix proof of Optimal control for LETF}. 
\end{proof}
Note that values of $\kappa_{1}^{\ell}$, $\kappa_{2}^{\ell}$ and
$\kappa_{\chi}^{\ell,s}$ (see (\ref{eq: Kappas for LETF dynamics})),
which depend on the multiplier $\beta$ through the LETF jumps (\ref{eq: LETF jump multiplier}),
are required by the optimal strategy (\ref{eq: LETF optimal strategy}).
Expressions for these quantities can be derived in terms of the calibrated
parameters of the underlying asset dynamics (\ref{eq: B dynamics})-(\ref{eq: S dynamics})
without difficulty. As an illustration, Lemma \ref{lem: Deriving LETF kappas for Kou model}
in Appendix \ref{subsec: Appendix expressions for kappas} presents
expressions for $\kappa_{1}^{\ell}$, $\kappa_{2}^{\ell}$ and $\kappa_{\chi}^{\ell,s}$
in the case of the double-exponential Kou model (\cite{KouOriginal})
used for illustrating the results of this section, but we note that
this can also be done similarly for other jump diffusion models (e.g.
\cite{MertonJumps1976}).

The IR-optimal investment strategy for the VETF investor is given
in Corollary \ref{cor: Closed-form Optimal control for VETF}.
\begin{cor}
\label{cor: Closed-form Optimal control for VETF}(IR-optimal investment
strategy using the VETF) Let $\gamma>0$ be fixed. Suppose that Assumption
\ref{assu: Extra stylized-assumptions-for continuous rebalancing}
and wealth dynamics (\ref{eq: SDE W_hat}) and (\ref{eq: SDE W VETF})
apply. Let $W_{v}^{\ast}\left(t\right)$ denote the VETF investor's
wealth process (\ref{eq: SDE W VETF}) under the optimal strategy
$\varrho_{v}^{\ast}$, and let $\boldsymbol{X}_{v}^{\ast}\left(t\right)=\left(W_{v}^{\ast}\left(t\right),\hat{W}\left(t\right),\hat{\varrho}_{s}\left(t\right)\right)$.
Then the IR-optimal fraction of the investor's wealth invested in
the VETF, $\varrho_{v}^{\ast}$, satisfies
\begin{eqnarray}
\varrho_{v}^{\ast}\left(t,\boldsymbol{X}_{v}^{\ast}\left(t^{-}\right)\right)\cdot W_{v}^{\ast}\left(t^{-}\right) & = & \left(\frac{\mu-r-c_{v}}{\sigma^{2}+\lambda\kappa_{2}^{s}}\right)\cdot\left[h_{v}\left(t\right)+\gamma e^{-r\left(T-t\right)}-\left(W_{v}^{\ast}\left(t^{-}\right)-g_{v}\left(t\right)\cdot\hat{W}\left(t^{-}\right)\right)\right]\nonumber \\
 &  & +g_{v}\left(t\right)\cdot\hat{\varrho}_{s}\left(t\right)\hat{W}\left(t^{-}\right),\label{eq: VETF optimal strategy}
\end{eqnarray}
where $g_{v}$ and $h_{v}$ are the following deterministic functions,
\begin{eqnarray}
g_{v}\left(t\right) & = & \exp\left\{ c_{v}\cdot\int_{t}^{T}\hat{\varrho}_{s}\left(u\right)du\right\} ,\label{eq: function g for VETF}\\
h_{v}\left(t\right) & = & -\frac{q}{r}\left(1-e^{-r\left(T-t\right)}\right)+qe^{-r\left(T-t\right)}\cdot\int_{t}^{T}\exp\left\{ r\left(T-y\right)+c_{v}\cdot\int_{y}^{T}\hat{\varrho}_{s}\left(u\right)du\right\} dy.\label{eq: function h for VETF}
\end{eqnarray}
\end{cor}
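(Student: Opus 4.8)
The plan is to obtain Corollary~\ref{cor: Closed-form Optimal control for VETF} as the specialization $\beta = 1$ of Proposition~\ref{prop: Closed-form Optimal control for LETF}, after replacing the LETF expense ratio $c_\ell$ by the VETF expense ratio $c_v$. Comparing the wealth dynamics (\ref{eq: SDE W VETF}) and (\ref{eq: SDE W LETF}), the VETF wealth SDE is exactly the LETF wealth SDE with $\beta = 1$, with $c_\ell$ replaced by $c_v$, and with the LETF jump multiplier $\xi^\ell$ replaced by $\xi^s$; hence it suffices to verify that the formulas in Proposition~\ref{prop: Closed-form Optimal control for LETF} reduce correctly under exactly these substitutions, and that no step in the derivation of that Proposition (equivalently, in the verification Theorem~\ref{thm: Verification theorem} and its Appendix proof) genuinely required $\beta > 1$.

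First I would check that the limited-liability correction becomes vacuous. When $\beta = 1$ we have $(\beta-1)/\beta = 0$, so from (\ref{eq: LETF jump multiplier}) $\xi^\ell = \xi^s$ whenever $\xi^s > 0$; since $\xi^s$ is a multiplicative price jump factor it is strictly positive almost surely, so $\xi^\ell = \xi^s$ a.s. Consequently the moment quantities in (\ref{eq: Kappas for LETF dynamics}) collapse to $\kappa_1^\ell = \kappa_1^s$, $\kappa_2^\ell = \kappa_2^s$ and $\kappa_\chi^{\ell,s} = \kappa_2^s$. With these identities the Hamiltonian $\mathcal{H}$ of Theorem~\ref{thm: Verification theorem}, the terminal condition, and the quadratic-in-$w$ ansatz used in the Appendix are all well defined with the VETF dynamics, so the verification argument goes through verbatim --- the only place $\beta > 1$ was used was precisely to make the correction (\ref{eq: LETF jump multiplier}) nontrivial, and admissibility/smoothness/integrability of the hypotheses of Theorem~\ref{thm: Verification theorem} are obviously inherited by the strictly simpler VETF dynamics.

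Next I would substitute into (\ref{eq: LETF optimal strategy})--(\ref{eq: Constant K beta for LETF}). The leading coefficient $\bigl(\beta[\mu + \lambda(\kappa_1^\ell - \kappa_1^s) - r] - c_\ell\bigr)/\bigl(\beta^2(\sigma^2 + \lambda\kappa_2^\ell)\bigr)$ becomes $(\mu - r - c_v)/(\sigma^2 + \lambda\kappa_2^s)$; the constant $K_\beta^{\ell,s}$ in (\ref{eq: Constant K beta for LETF}) becomes $\mu - r - (\mu - r - c_v) = c_v$, so $g_\ell$ and $h_\ell$ reduce to $g_v$ and $h_v$ of (\ref{eq: function g for VETF})--(\ref{eq: function h for VETF}); and the last term $\tfrac{1}{\beta} g_\ell(t)\,(\sigma^2 + \lambda\kappa_\chi^{\ell,s})/(\sigma^2 + \lambda\kappa_2^\ell)\,\hat\varrho_s(t)\hat W(t^-)$ becomes $g_v(t)\,\hat\varrho_s(t)\hat W(t^-)$. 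Collecting these gives precisely (\ref{eq: VETF optimal strategy}).

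There is no genuinely hard analytic step here; the only thing requiring care is the bookkeeping of the reduction --- ensuring $c_\ell$ is consistently replaced by $c_v$ everywhere (including inside $K_\beta^{\ell,s}$) and confirming that $\xi^\ell = \xi^s$ a.s. legitimately trivializes every appearance of the jump correction. Should one prefer a self-contained derivation instead of invoking the reduction, the alternative is to re-run the HJB PIDE of Theorem~\ref{thm: Verification theorem} with $\beta = 1$, $\xi^\ell = \xi^s$, $c_\ell = c_v$ and the same ansatz (value function quadratic in $w - g_v(t)\hat w$ plus a deterministic shift governed by $h_v$), which reproduces the same linear ODEs for $g_v$ and $h_v$ and hence the same optimal control.
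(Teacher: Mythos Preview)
Your proposal is correct and mirrors the paper's own argument: the paper likewise observes that letting $\beta\to1$ (with $c_\ell$ replaced by $c_v$) forces $\xi^\ell=\xi^s$ a.s., hence $\kappa_1^\ell=\kappa_1^s$, $\kappa_2^\ell=\kappa_\chi^{\ell,s}=\kappa_2^s$ and $K_\beta^{\ell,s}=c_v$, so (\ref{eq: LETF optimal strategy})--(\ref{eq: Constant K beta for LETF}) collapse to (\ref{eq: VETF optimal strategy})--(\ref{eq: function h for VETF}); and it too notes that the rigorous route is simply to re-run the HJB argument of Theorem~\ref{thm: Verification theorem} with the VETF dynamics (\ref{eq: SDE W VETF}).
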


\begin{proof}
See Appendix \ref{subsec: Appendix proof of Optimal control for VETF}.
\end{proof}
The following remark relates the results of Proposition \ref{prop: Closed-form Optimal control for LETF}
and Corollary \ref{cor: Closed-form Optimal control for VETF} to
the available results in the literature.

\begin{brem}

\label{rem: Relationship to results in literature} (Relationship
of Proposition \ref{prop: Closed-form Optimal control for LETF} and
Corollary \ref{cor: Closed-form Optimal control for VETF} to results
in the literature). In the special case of a VETF with zero expense
ratio $c_{v}=0$, the results of Corollary \ref{cor: Closed-form Optimal control for VETF}
imply that $g_{v}\left(t\right)=1$ and $h_{v}\left(t\right)=0$ for
all $t\in\left[t_{0}=0,T\right]$, so that (\ref{eq: VETF optimal strategy})
simplifies considerably to 
\begin{eqnarray}
\varrho_{v}^{\ast}\left(t,\boldsymbol{X}_{v}^{\ast}\left(t^{-}\right)\right)\cdot W_{v}^{\ast}\left(t^{-}\right) & = & \left(\frac{\mu-r}{\sigma^{2}+\lambda\kappa_{2}^{s}}\right)\cdot\left[\gamma e^{-r\left(T-t\right)}-\left(W_{v}^{\ast}\left(t^{-}\right)-\hat{W}\left(t^{-}\right)\right)\right]+\hat{\varrho}_{s}\left(t\right)\hat{W}\left(t^{-}\right),\label{eq: VETF zero cost IR-optimal strategy}
\end{eqnarray}
which corresponds to the IR-optimal investment strategy where direct
investment in the underlying equity market index $S$ is possible.
This special case (\ref{eq: VETF zero cost IR-optimal strategy})
can be found in \cite{PvSForsythLi2022_stochbm}, where the results
of \cite{GoetzmannEtAl2002,GoetzmannEtAl2007} are extended to the
case of jumps in the risky asset processes. Corollary \ref{cor: Closed-form Optimal control for VETF}
therefore extends this to the case of investing in the equity index
\textit{indirectly} via a VETF with a non-negligible expense ratio,
whereas Proposition \ref{prop: Closed-form Optimal control for LETF}
extends these results further to the case a LETF with multiplier $\beta>1$
and expense ratio $c_{\ell}$ referencing an equity index $S$ with
jump-diffusion dynamics.\qed

\end{brem}

To gain some intuition regarding the behavior of the IR-optimal investment
strategies in Proposition \ref{prop: Closed-form Optimal control for LETF}
and Corollary \ref{cor: Closed-form Optimal control for VETF}, we
compare the illustrative investment results from implementing these
strategies over a 10-year time horizon. We use a benchmark and  assets
as in Table \ref{tab: Closed-form solns - Candidate assets and benchmark},
illustrative investment parameters as in Table \ref{tab: Closed-form solns - investment params},
and a Kou model (\cite{KouOriginal}) is assumed for the jump diffusion
dynamics with calibrated parameters as in Appendix \ref{sec:Appendix - Source-data}
(Table \ref{tab:Calibrated asset parameters for analytical solutions}).
Since LETFs are a relatively recent invention, we follow the example
of \cite{BansalMarshall2015} in constructing a proxy LETF replicating
$\beta=2$ times the daily returns of a broad stock market index,
in this case using the CRSP VWD index, which is a capitalization-weighted
index consisting of all domestic stocks trading on major US exchanges,
with historical data available since January 1926. As in for example
\cite{BansalMarshall2015} and \cite{LeungSircar2015}, we assume
that the managers of the LETF do not have challenges in replicating
the underlying index, which is reasonable given the possibility of
designing replication strategies for LETFs that remain robust even
during periods of market volatility (see for example \cite{GuasoniMayerhofer2023}).
For more information on the source data and calibrated, inflation-adjusted
parameters, please refer to Appendix \ref{sec:Appendix - Source-data}.

\noindent 
\begin{table}[!tbh]
\caption{Closed-form solutions - Investment parameters for illustrating the
results. Note that the calibrated parameters for the jump-diffusion
process are given in Appendix \ref{sec:Appendix - Source-data} (Table
\ref{tab:Calibrated asset parameters for analytical solutions}),
while the underlying assets, benchmark and ETF expense ratios are
given in Table \ref{tab: Closed-form solns - Candidate assets and benchmark}.
\label{tab: Closed-form solns - investment params}}

\centering{}%
\begin{tabular}{|>{\centering}p{1.7cm}|>{\centering}p{1.7cm}|>{\centering}p{1.7cm}|>{\centering}p{1.7cm}|>{\centering}p{1.7cm}|}
\hline 
{\footnotesize{}Parameter} & {\footnotesize{}$T$} & {\footnotesize{}$w_{0}$} & {\footnotesize{}$q$} & {\footnotesize{}$\gamma$}\tabularnewline
\hline 
{\footnotesize{}Value} & {\footnotesize{}10 years} & {\footnotesize{}\$ 100} & {\footnotesize{}\$ 5 per year} & {\footnotesize{}125}\tabularnewline
\hline 
\end{tabular}
\end{table}

Figure \ref{fig: Matlab_heatmaps - Analytical solutions} illustrates
the IR-optimal proportion of wealth invested in the ETF as a function
of  time $t$ and the wealth difference $W_{k}^{\ast}\left(t\right)-\hat{W}\left(t\right)$,
$k\in\left\{ v,\ell\right\} $. In the case of the LETF investor,
Figure \ref{fig: Matlab_heatmaps - Analytical solutions}(a) illustrates
$\varrho_{\ell}^{\ast}$, whereas for the VETF investor, Figure \ref{fig: Matlab_heatmaps - Analytical solutions}(b)
illustrates $\varrho_{v}^{\ast}$, where both strategies use the same
target $\gamma=125$.

\noindent 
\begin{figure}[!tbh]
\noindent \begin{centering}
\subfloat[IR-optimal investment in LETF ($\beta=2$), $\varrho_{\ell}^{\ast}$]{\includegraphics[scale=0.8]{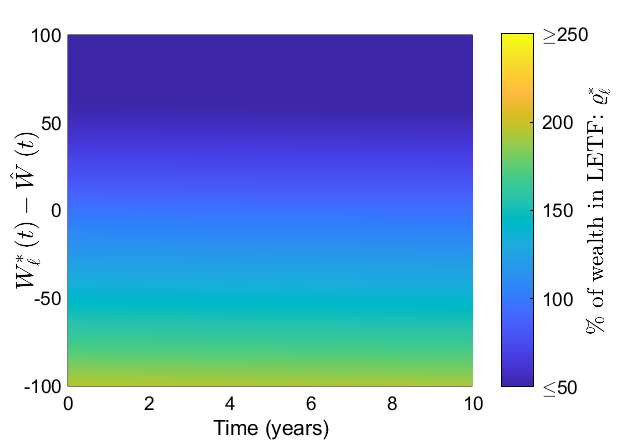}

}$\quad$\subfloat[IR-optimal investment in VETF, $\varrho_{v}^{\ast}$]{\includegraphics[scale=0.8]{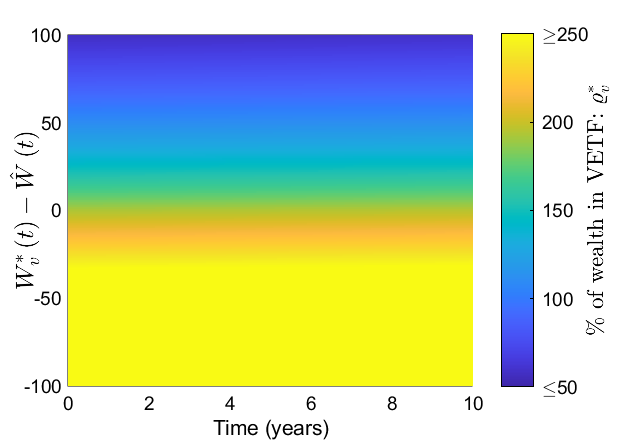}

}
\par\end{centering}
\caption{Closed-form IR-optimal investment strategies using the LETF ($\varrho_{\ell}^{\ast}$
as per (\ref{eq: LETF optimal strategy})) or the VETF ($\varrho_{v}^{\ast}$
as per (\ref{eq: VETF optimal strategy})) as a function of time $t$
and the wealth difference $W_{k}^{\ast}\left(t\right)-\hat{W}\left(t\right)$,
$k\in\left\{ v,\ell\right\} $, given the same implicit benchmark
outperformance target $\gamma$. The underlying assets, benchmark,
investment parameters and calibrated process parameters are as in
Table \ref{tab: Closed-form solns - Candidate assets and benchmark},
Table \ref{tab: Closed-form solns - investment params} and Table
\ref{tab:Calibrated asset parameters for analytical solutions}, respectively.
Note that the same color scale is used in both figures for comparison
purposes. \label{fig: Matlab_heatmaps - Analytical solutions}}
\end{figure}

Using the same strategies as illustrated in Figure \ref{fig: Matlab_heatmaps - Analytical solutions},
Figure \ref{fig: Matlab_heatmap - Ratio of optimal strategies (analytical)}
shows the \textit{ratio} of IR-optimal proportions of wealth invested
in the VETF relative to the investment in the LETF, $\varrho_{v}^{\ast}/\varrho_{\ell}^{\ast}$,
given an otherwise identical wealth difference $W_{k}^{\ast}\left(t\right)-\hat{W}\left(t\right),k\in\left\{ v,\ell\right\} $
at time $t$. We emphasize that both Figure \ref{fig: Matlab_heatmaps - Analytical solutions}
and Figure \ref{fig: Matlab_heatmap - Ratio of optimal strategies (analytical)}
treat the IR-optimal strategies from Proposition \ref{prop: Closed-form Optimal control for LETF}
and Corollary \ref{cor: Closed-form Optimal control for VETF} simply
as functions of time and the wealth difference relative to the benchmark.

\noindent 
\begin{figure}[!tbh]
\noindent \begin{centering}
\includegraphics[scale=0.8]{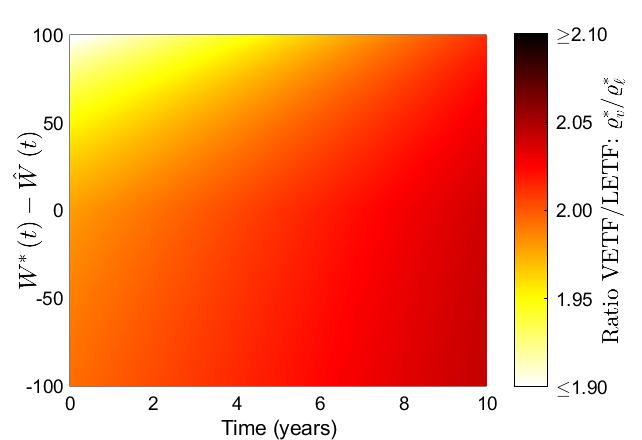}
\par\end{centering}
\caption{Ratio of IR-optimal proportions of wealth in the VETF vs. the LETF,
$\varrho_{v}^{\ast}/\varrho_{\ell}^{\ast}$, given identical wealth
differences relative to the benchmark $W^{\ast}\left(t\right)-\hat{W}\left(t\right)\equiv W_{v}^{\ast}\left(t\right)-\hat{W}\left(t\right)=W_{\ell}^{\ast}\left(t\right)-\hat{W}\left(t\right)$
at each time $t$ and same target $\gamma$. The underlying assets,
benchmark, investment parameters and calibrated process parameters
are as in Table \ref{tab: Closed-form solns - Candidate assets and benchmark},
Table \ref{tab: Closed-form solns - investment params} and Table
\ref{tab:Calibrated asset parameters for analytical solutions}, respectively.
\label{fig: Matlab_heatmap - Ratio of optimal strategies (analytical)}}
\end{figure}

With regards to Figure \ref{fig: Matlab_heatmaps - Analytical solutions}
and Figure \ref{fig: Matlab_heatmap - Ratio of optimal strategies (analytical)},
we make the following observations regarding the IR-optimal investment
strategies of the LETF investor vs. the VETF investor:
\begin{enumerate}
\item The IR-optimal investment strategy using a LETF (Figure \ref{fig: Matlab_heatmaps - Analytical solutions}(a))
is, like the strategy using a VETF (Figure \ref{fig: Matlab_heatmaps - Analytical solutions}(b)),
fundamentally \textit{contrarian}. Specifically, in the case of the
LETF, we observe that the IR-optimal proportion of wealth $\varrho_{\ell}^{\ast}$
in the LETF decreases as the wealth difference $W_{\ell}^{\ast}\left(t\right)-\hat{W}\left(t\right)$
increases, which happens after a period of strong LETF return performance.
In their analysis of reports to the SEC by institutional fund managers,
\cite{DeVaultEtAl2021} show that institutional investors indeed empirically
tend to decrease their holdings in LETFs following periods of strong
investment performance. While \cite{DeVaultEtAl2021} concludes that
this behavior might be explained as being a result of compensation-based
incentives, our results show that strategies based on maximizing the
IR (a widely-used investment metric) relative to a standard investment
benchmark could also be related to this empirical investment behavior.
\item Figure \ref{fig: Matlab_heatmap - Ratio of optimal strategies (analytical)}
shows that the IR-optimal strategies under stylized assumptions (Assumption
\ref{assu: Stylized-assumptions-for closed-form}) and identical implicit
benchmark outperformance target $\gamma$ satisfy $\varrho_{v}^{\ast}/\varrho_{\ell}^{\ast}\approx\beta=2$.
Informally, the LETF and VETF investors therefore take on nearly identical
``risk'' exposure to the movements of the underlying index, which
provides valuable intuition when interpreting the results Section
\ref{sec:Indicative-investment-results} where Assumption \ref{assu: Extra stylized-assumptions-for continuous rebalancing}
is relaxed.
\end{enumerate}
While Figure \ref{fig: Matlab_heatmaps - Analytical solutions} and
Figure \ref{fig: Matlab_heatmap - Ratio of optimal strategies (analytical)}
illustrate the IR-optimal strategy as a function of time and the wealth
difference relative to the benchmark, implementing this strategy in
a Monte Carlo simulation provides an additional perspective. Figure
\ref{fig: AnSolns_pctiles_risky_asset} shows the median and 95th
percentiles of the IR-optimal proportion of wealth invested in the
LETF and VETF, given the same benchmark outperformance target $\gamma$.
Note that under the stylized assumptions (Assumption \ref{assu: Extra stylized-assumptions-for continuous rebalancing}),
the LETF and VETF positions can be leveraged without restriction,
with leverage constraints only subsequently introduced in Section
\ref{sec:Numerical-solutions}. We observe that the corresponding
ETF exposure percentiles tend to approximately satisfy $\varrho_{v}^{\ast}/\varrho_{\ell}^{\ast}\approx\beta=2$,
with decreasing exposure over time due to the contrarian nature of
both strategies. 

\noindent 
\begin{figure}[!tbh]
\noindent \begin{centering}
\subfloat[$\varrho_{\ell}^{\ast}$ and $\varrho_{v}^{\ast}$: 95th percentiles
over time]{\includegraphics[scale=0.75]{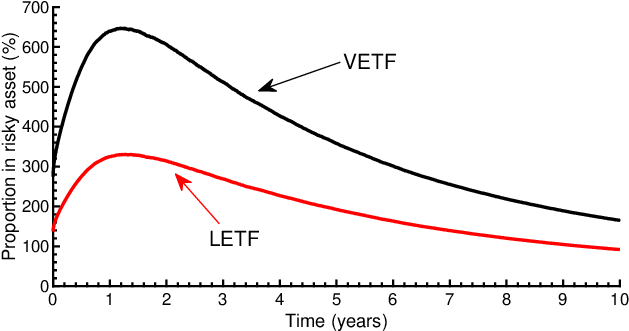}

}$\quad$$\quad$\subfloat[$\varrho_{\ell}^{\ast}$ and $\varrho_{v}^{\ast}$: 50th percentiles
over time]{\includegraphics[scale=0.75]{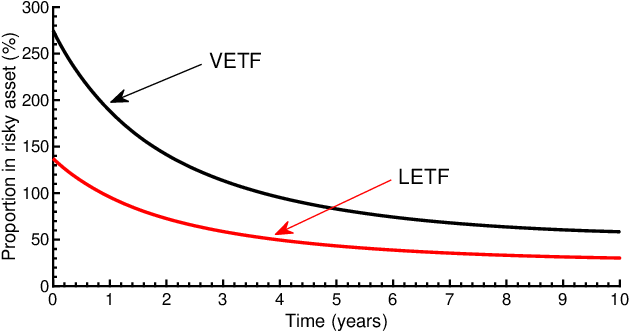}

}
\par\end{centering}
\caption{Closed-form IR-optimal investment strategies: 95th and 50th percentiles
over time of the IR-optimal proportion of wealth invested in the LETF
($\varrho_{\ell}^{\ast}$ as per (\ref{eq: LETF optimal strategy}))
or the VETF ($\varrho_{v}^{\ast}$ as per (\ref{eq: VETF optimal strategy}))
obtained using Monte Carlo simulation of the underlying dynamics (\ref{eq: SDE W_hat})-(\ref{eq: SDE W LETF}),
and same target $\gamma$. The underlying assets, benchmark, investment
parameters and calibrated process parameters are as in Table \ref{tab: Closed-form solns - Candidate assets and benchmark},
Table \ref{tab: Closed-form solns - investment params} and Table
\ref{tab:Calibrated asset parameters for analytical solutions}, respectively.
\label{fig: AnSolns_pctiles_risky_asset}}
\end{figure}

Figure \ref{fig: AnSolns_CDFs} compares the simulated CDFs of IR-optimal
terminal wealth $W_{k}^{\ast}\left(T\right),k\in\left\{ v,\ell\right\} $
and CDFs of the terminal wealth ratio relative to the benchmark $W_{k}^{\ast}\left(T\right)/\hat{W}\left(T\right),k\in\left\{ v,\ell\right\} $
for the LETF and VETF investors, respectively.

\noindent 
\begin{figure}[!tbh]
\noindent \begin{centering}
\subfloat[CDFs of $\hat{W}\left(T\right),W_{k}^{\ast}\left(T\right)$, $k\in\left\{ v,\ell\right\} $]{\includegraphics[scale=0.75]{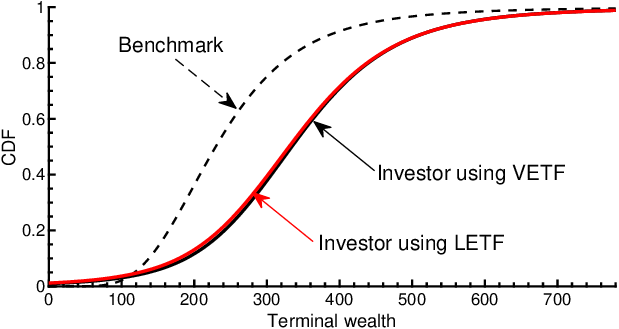}

}$\quad$$\quad$\subfloat[CDFs of ratio $W_{k}^{\ast}\left(T\right)/\hat{W}\left(T\right)$,
$k\in\left\{ v,\ell\right\} $ ]{\includegraphics[scale=0.75]{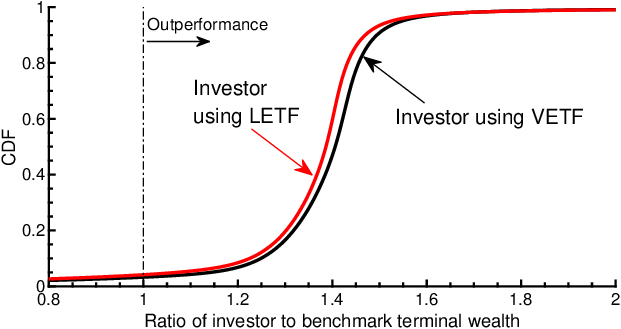}

}
\par\end{centering}
\caption{Closed-form IR-optimal investment strategies: CDFs of the IR-optimal
terminal wealth for the same target $\gamma$ obtained using Monte
Carlo simulation of the underlying dynamics and investing according
to optimal strategies (\ref{eq: LETF optimal strategy}) and (\ref{eq: VETF optimal strategy}).
The underlying assets, benchmark, investment parameters and calibrated
process parameters are as in Table \ref{tab: Closed-form solns - Candidate assets and benchmark},
Table \ref{tab: Closed-form solns - investment params} and Table
\ref{tab:Calibrated asset parameters for analytical solutions}, respectively.
\label{fig: AnSolns_CDFs}}
\end{figure}

Figure \ref{fig: AnSolns_CDFs} shows that due to implicitly similar
exposure levels to movements in the underlying equity index (since
$\varrho_{v}^{\ast}\simeq\beta\cdot\varrho_{\ell}^{\ast}$) given
identical targets $\gamma$ for the LETF and VETF investors, implementing
the strategies illustrated in Figure \ref{fig: Matlab_heatmaps - Analytical solutions}
result in nearly identical terminal wealth and outperformance outcomes.
In fact, under the stylized assumptions of this section, Proposition
\ref{prop: Special case ZERO costs} shows that in the special case
of (i) zero expense ratios and (ii) no jumps in the $S$-dynamics,
we have $\varrho_{v}^{\ast}\equiv\beta\cdot\varrho_{\ell}^{\ast}$,
the IR-optimal investor should be entirely indifferent as to whether
the LETF-based strategy (\ref{eq: LETF optimal strategy}) or VETF-based
strategy (\ref{eq: VETF optimal strategy}) is used for investment
purposes.
\begin{prop}
\label{prop: Special case ZERO costs}(Special case: Zero expense
ratios, no jumps) Let $\gamma>0$ be fixed, and let Assumption \ref{assu: Extra stylized-assumptions-for continuous rebalancing}
and wealth dynamics (\ref{eq: SDE W_hat})-(\ref{eq: SDE W LETF})
apply. If (i) both LETF ($\beta>1$) and the VETF have zero expense
ratios, i.e. $c_{v}=c_{\ell}=0$, and (ii) there are no jumps in the
underlying $S$-dynamics (i.e. $\lambda=0$ in (\ref{eq: S dynamics})),
then following results hold:

\begin{enumerate}

\item The IR-optimal proportion of wealth invested in the VETF (\ref{eq: VETF optimal strategy})
is equal to $\beta$ times the IR-optimal proportion of wealth invested
in the LETF (\ref{eq: LETF optimal strategy}),
\begin{eqnarray}
\varrho_{v}^{\ast}\left(t,\boldsymbol{X}_{v}^{\ast}\left(t\right)\right) & = & \beta\cdot\varrho_{\ell}^{\ast}\left(t,\boldsymbol{X}_{\ell}^{\ast}\left(t\right)\right),\qquad\forall t\in\left[t_{0},T\right],\label{eq: prop zero cost - LETF control a beta multiple}
\end{eqnarray}
where $\boldsymbol{X}_{v}^{\ast}\left(t\right)=\left(W_{v}^{\ast}\left(t\right),\hat{W}\left(t\right),\hat{\varrho}_{s}\left(t\right)\right)$
and $\boldsymbol{X}_{\ell}^{\ast}\left(t\right)=\left(W_{\ell}^{\ast}\left(t\right),\hat{W}\left(t\right),\hat{\varrho}_{s}\left(t\right)\right)$.

\item An IR-optimal investor with given target $\gamma>0$ would
be indifferent as to whether the optimal strategy is executed with
a LETF (\ref{eq: LETF optimal strategy}) or VETF (\ref{eq: VETF optimal strategy}),
since wealth outcomes are identical, 
\begin{eqnarray}
W_{\ell}^{\ast}\left(t\right) & = & W_{v}^{\ast}\left(t\right),\qquad\forall t\in\left[t_{0},T\right],\label{eq: prop zero cost - W equal}
\end{eqnarray}
and the same Information Ratio is obtained, 

\begin{equation}
\frac{E_{\varrho_{\ell}^{\ast}}^{t_{0},w_{0}}\left[W_{\ell}^{\ast}\left(T\right)-\hat{W}\left(T\right)\right]}{Stdev_{\varrho_{\ell}^{\ast}}^{t_{0},w_{0}}\left[W_{\ell}^{\ast}\left(T\right)-\hat{W}\left(T\right)\right]}=\frac{E_{\varrho_{v}^{\ast}}^{t_{0},w_{0}}\left[W_{v}^{\ast}\left(T\right)-\hat{W}\left(T\right)\right]}{Stdev_{\varrho_{v}^{\ast}}^{t_{0},w_{0}}\left[W_{v}^{\ast}\left(T\right)-\hat{W}\left(T\right)\right]}=\left(\exp\left\{ \left(\frac{\mu-r}{\sigma}\right)^{2}\cdot T\right\} -1\right)^{1/2}.\label{eq: Prop zero cost - IR optimal equal}
\end{equation}

\end{enumerate}
\end{prop}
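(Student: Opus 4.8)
The plan is to derive the proposition directly from Proposition~\ref{prop: Closed-form Optimal control for LETF} and Corollary~\ref{cor: Closed-form Optimal control for VETF}, specialized to the case $\lambda=0$, $c_v=c_\ell=0$. The first step is to simplify the deterministic auxiliary functions. For the VETF investor, Remark~\ref{rem: Relationship to results in literature} already records that $c_v=0$ forces $g_v(t)\equiv 1$ and $h_v(t)\equiv 0$, so (\ref{eq: VETF optimal strategy}) collapses to (\ref{eq: VETF zero cost IR-optimal strategy}), and with $\lambda=0$ the term $\kappa_2^s$ disappears. For the LETF investor, putting $c_\ell=0$ and $\lambda=0$ in (\ref{eq: Constant K beta for LETF}) gives $K_\beta^{\ell,s}=(\mu-r)-\beta(\mu-r)\sigma^{2}/(\beta\sigma^{2})=0$, hence $g_\ell(t)\equiv 1$ by (\ref{eq: function g for LETF}); evaluating the integral in (\ref{eq: function h for LETF}) then shows the two contributions to $h_\ell$ cancel, so $h_\ell(t)\equiv 0$. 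Moreover every $\kappa^{\ell}$-quantity in (\ref{eq: Kappas for LETF dynamics}) enters (\ref{eq: LETF optimal strategy}) only multiplied by $\lambda$ and so drops out. The upshot is that, expressed in dollar terms, both $\beta\cdot\varrho_\ell^{\ast}(t,\cdot)\,W_\ell^{\ast}(t^{-})$ and $\varrho_v^{\ast}(t,\cdot)\,W_v^{\ast}(t^{-})$ equal the \emph{same} affine function of the current state, namely $\pi^{\ast}(t,w,\hat w,\hat\varrho_s)=\frac{\mu-r}{\sigma^{2}}\left(\gamma e^{-r(T-t)}-(w-\hat w)\right)+\hat\varrho_s(t)\,\hat w$.

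Next I would compare the controlled wealth dynamics. Rewriting (\ref{eq: SDE W LETF}) and (\ref{eq: SDE W VETF}) under the same specializations in terms of the dollar exposures $\pi_\ell(t)=\beta\,\varrho_\ell^{\ast}\,W_\ell^{\ast}$ and $\pi_v(t)=\varrho_v^{\ast}\,W_v^{\ast}$, \emph{both} SDEs take the identical form $dW=\left(rW+(\mu-r)\pi+q\right)dt+\sigma\pi\,dZ$. By the first step, $\pi_\ell$ and $\pi_v$ are given by the same feedback map $\pi^{\ast}(t,W,\hat W(t),\hat\varrho_s(t))$, which is affine in $W$, while $\hat W$ evolves by (\ref{eq: SDE W_hat}), whose coefficients do not involve the investor's control. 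Consequently the pairs $(W_\ell^{\ast},\hat W)$ and $(W_v^{\ast},\hat W)$ solve one and the same affine (hence Lipschitz, linear-growth) two-dimensional SDE system, driven by the same Brownian motion and started from the same $(w_0,w_0)$. Standard existence--uniqueness for affine SDEs gives $W_\ell^{\ast}(t)=W_v^{\ast}(t)$ a.s. for all $t\in[t_0,T]$, which is (\ref{eq: prop zero cost - W equal}). Dividing the common dollar-exposure identity of the first step by this now-common wealth yields (\ref{eq: prop zero cost - LETF control a beta multiple}) --- the identity holding unconditionally at the level of dollar positions, and as a ratio wherever $W^{\ast}\neq 0$.

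Finally, for the Information Ratio statement (\ref{eq: Prop zero cost - IR optimal equal}), the equality of the two ratios is immediate once $W_\ell^{\ast}(T)=W_v^{\ast}(T)$. To evaluate the common value, I would introduce the wealth difference $Y(t)=W^{\ast}(t)-\hat W(t)$ and the auxiliary process $U(t)=\gamma e^{-r(T-t)}-Y(t)$; subtracting (\ref{eq: SDE W_hat}) from the common wealth SDE and using $\pi^{\ast}-\hat\varrho_s\hat W=\frac{\mu-r}{\sigma^{2}}U$ shows that $U$ solves $dU/U=(r-\theta^{2})\,dt-\theta\,dZ$ with $\theta=(\mu-r)/\sigma$, i.e. $U$ is a geometric Brownian motion with $U(t_0)=\gamma e^{-rT}$. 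Computing $E[U(T)]=\gamma e^{-\theta^{2}T}$ and $\mathrm{Var}[U(T)]=\gamma^{2}e^{-2\theta^{2}T}(e^{\theta^{2}T}-1)$, and substituting $Y(T)=\gamma-U(T)$, gives $E[Y(T)]=\gamma(1-e^{-\theta^{2}T})$ and $\mathrm{Stdev}[Y(T)]=\gamma e^{-\theta^{2}T}(e^{\theta^{2}T}-1)^{1/2}$, whose ratio is $(e^{\theta^{2}T}-1)^{1/2}=\left(\exp\{((\mu-r)/\sigma)^{2}T\}-1\right)^{1/2}$; this is also the classical direct-index IR value of \cite{GoetzmannEtAl2002,GoetzmannEtAl2007}, consistent with Remark~\ref{rem: Relationship to results in literature}. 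The whole proposition is thus essentially a corollary of the two preceding closed-form results, and I expect no deep obstacle; the one point needing genuine care is the specialization in the first step --- one must check that the limited-liability truncation (\ref{eq: LETF jump multiplier}) becomes vacuous when $\lambda=0$ (so the distinction between $\xi^{\ell}$ and $\xi^{s}$ is irrelevant) and that the cost-induced corrections $g$ and $h$ all collapse when $c_v=c_\ell=0$, since together these are precisely what turn the merely approximate $\beta$-scaling of the general case into an exact one.
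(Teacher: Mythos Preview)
Your proof is correct and follows essentially the same route as the paper. The paper likewise first specializes the auxiliary functions to obtain $g_\ell=g_v=1$, $h_\ell=h_v=0$ and the common dollar-exposure formula, then establishes $W_\ell^{\ast}=W_v^{\ast}$ (via the auxiliary process $Q(t)=e^{-rt}\bigl(W_\ell^{\ast}(t)-W_v^{\ast}(t)\bigr)$, which satisfies a linear SDE with $Q(t_0)=0$ and hence vanishes identically --- your direct appeal to affine-SDE uniqueness is an equivalent and arguably cleaner way to reach the same conclusion), and finally deduces the control identity. The only substantive difference is that the paper simply cites \cite{PvSForsythLi2022_stochbm} for the explicit IR value in (\ref{eq: Prop zero cost - IR optimal equal}), whereas you carry out the computation via the GBM for $U(t)=\gamma e^{-r(T-t)}-Y(t)$; your calculation is correct and makes the proof self-contained.
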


\begin{proof}
See Appendix \ref{subsec: Appendix proof of optimal strategies ZERO COSTS}.
\end{proof}
We summarize the closed-form solutions results presented in this section
as follows. In the case of continuous rebalancing (i.e. rebalancing
times $\mathcal{T}=\left[t_{0},T\right]$) with no investment constraints,
the IR-optimal investor should be largely indifferent whether a VETF
or LETF is used on the same underlying equity index to execute the
IR-optimal investment strategy involving the ETF and T-bills for a
given outperformance target $\gamma$. Since the IR-optimal investment
strategies satisfy the approximate relationship $\varrho_{v}^{\ast}\simeq\beta\cdot\varrho_{\ell}^{\ast}$
(Figure \ref{fig: Matlab_heatmap - Ratio of optimal strategies (analytical)}),
when continuously rebalancing the LETF and VETF investors effectively
maintain similar implicit risk exposures at each time instant to movements
of the underlying index, resulting in broadly similar investment outcomes
(Figure \ref{fig: AnSolns_CDFs}), with differences between outcomes
entirely driven by different ETF expense ratios and the presence of
jumps (see Proposition \ref{prop: Special case ZERO costs}). 

At the other extreme, namely the lump-sum investment scenario with
no subsequent intervention (i.e. $\mathcal{T}=\left[t_{0}\right]$)
and one-quarter time horizon ($\Delta t=0.25$), we still observe
the approximate relationship $p_{v}^{\ast}\approx\beta\cdot p_{\ell}^{\ast}$
between corresponding IR-optimal strategies (Figure \ref{fig: Single period discrete rebalancing}).
However, in this case the power call-like payoff of the LETF can clearly
be observed, whereby the LETF investor benefits from an upside due
to comparatively inexpensive leverage while simultaneously enjoying
downside protection due to limited liability (Figures \ref{fig: Intuition - no jumps},
\ref{fig: Intuition - with jumps} and \ref{fig: Single period discrete rebalancing}). 

Far from ignoring the standard criticisms of LETFs in the literature
(see Section \ref{subsec:Intuition:-lump-sum investment scenario}),
the closed form results of this section - illustrated using parametric
dynamics calibrated to empirical market data - suggest that we should
not be entirely surprised that LETFs might have substantial appeal
to investors \textit{despite} these shortcomings. However, we emphasize
that none of the trading strategies illustrated - not even the lump-sum
investment scenario results - advocate for simplistic strategies like
buy-and-hold positions in the LETF over indefinite time horizons,
so a certain degree of sophistication on the part of the investor
is implicitly assumed. In addition, while the closed-form results
provide valuable intuition, they were derived under stylized assumptions,
which we relax in the subsequent sections to model the potential of
LETFs under more realistic conditions. 

\section{Numerical solutions\label{sec:Numerical-solutions}}

{\PVSedit{To allow for more general and practical conclusions than available in
the stylized setting of Section \ref{sec:Closed-form-solutions},
we use a data-driven neural network approach based on stationary block
bootstrap resampling of historical data (including proxy LETF returns)
since 1926. This ensures that the investment strategies and performance
analysis will incorporate all empirical aspects of actual returns,
including potentially sophisticated volatility dynamics, which may
not be reflected in the closed-form solutions of Section \ref{sec:Closed-form-solutions}. 

We start by formulating a more realistic investment setting with the
following characteristics: (i) Restrictions on the maximum leverage
and limitations on short-selling. (ii) An optional borrowing premium
applicable when short-selling an asset. (iii) Prohibition of trading
in insolvency. (iv) Infrequent (discrete) rebalancing of the portfolio. This is followed by a brief overview of a neural network-based numerical
solution approach to solve the IR problem (\ref{eq: IR objective USED})
in this setting. Indicative investment results obtained by implementing
these techniques on empirical market data are discussed in Section
\ref{sec:Indicative-investment-results}. }}

\subsection{Investment constraints and discrete rebalancing\label{subsec:Investment-constraints-and discrete rebal}}

Recall from Section \ref{sec:Problem-formulation} that the investor's
strategy is based on investing in a set of $N_{a}$ candidate assets
indexed by $i\in\left\{ 1,..,N_{a}\right\} $, while the benchmark
is defined in terms of $\hat{N}_{a}$ potentially different underlying
assets indexed by $j\in\left\{ 1,..,\hat{N}_{a}\right\} $. With investor
and benchmark strategies of the form (\ref{eq:Investor and benchmark investment strategies}),
and the benchmark strategy satisfying only the general assumptions
outlined in Assumption \ref{assu: General benchmark assumptions},
we assume that both the investor and benchmark portfolios are rebalanced
at each of $N_{rb}$ discrete rebalancing times during the investment
time horizon $\left[t_{0}=0,T\right]$. As a result, $\mathcal{\mathcal{T}}$
is now of the form

\begin{eqnarray}
\mathcal{\mathcal{T}} & = & \left\{ \left.t_{n}=n\Delta t\right|n=0,...,N_{rb}-1\right\} ,\qquad\Delta t=T/N_{rb}.\label{eq: Set of rebal times}
\end{eqnarray}
The assumption of equally-spaced rebalancing times in (\ref{eq: Set of rebal times})
is only for convenience, and can be relaxed without difficulty. At
each rebalancing time $t_{n}\in\mathcal{T}$, we assume a pre-specified
cash contribution $q\left(t_{n}\right)$ is made to the investor portfolio,
with the contribution also being added to the benchmark portfolio
to ensure the comparison in performance remains appropriate.

There is no need to specify any parametric dynamics for the underlying
assets in the numerical solution approach, which only requires the
availability of empirical market data for deriving the optimal strategy
(see Subsection \ref{subsec: NN solution approach} below). Specifically,
we assume that at each time $t_{n+1}\in\mathcal{\mathcal{T}}\cup T$
we can observe $R_{i}\left(t_{n}\right)$ and $\hat{R}_{j}\left(t_{n}\right)$,
the returns on investor asset $i\in\left\{ 1,..,N_{a}\right\} $ and
benchmark asset $j\in\left\{ 1,..,\hat{N}_{a}\right\} $, respectively,
over the time interval $\left[t_{n},t_{n+1}\right]$. Note that these
returns might be inflation-adjusted and might include a borrowing
premium applicable to assets that have been shorted (see e.g. Assumption
\ref{assu: Leverage assumptions} below). As a result, for the purposes
of numerical solutions, the investor and benchmark wealth dynamics
are respectively of the form

\begin{eqnarray}
W\left(t_{n+1}^{-}\right) & = & \left[W\left(t_{n}^{-}\right)+q\left(t_{n}\right)\right]\cdot\sum_{i=1}^{N_{a}}p_{i}\left(t_{n},\boldsymbol{X}\left(t_{n}^{-}\right)\right)\cdot\left[1+R_{i}\left(t_{n}\right)\right],\label{eq: W dynamics discrete - INVESTOR}\\
\hat{W}\left(t_{n+1}^{-}\right) & = & \left[\hat{W}\left(t_{n}^{-}\right)+q\left(t_{n}\right)\right]\cdot\sum_{j=1}^{\hat{N}_{a}}\hat{p}_{j}\left(t_{n},\hat{\boldsymbol{X}}\left(t_{n}^{-}\right)\right)\cdot\left[1+\hat{R}_{j}\left(t_{n}\right)\right],\label{eq: W dynamics discrete - BENCHMARK}
\end{eqnarray}
where $W\left(t_{0}^{-}\right)=\hat{W}\left(t_{0}^{-}\right)\coloneqq w_{0}>0$
and $n=0,...,N_{rb}-1$. 

Since active funds often have restrictions on leverage and short-selling
(see for example \cite{ForsythVetzalWestmacott2019,NiLiForsyth2023_LFNN}),
these constraints are included in the formulation. 

The investor's candidate asset $i=1$, assumed to be 30-day T-bills
in the indicative investment results of Section \ref{sec:Indicative-investment-results},
plays a special role in leveraged portfolios. The investor is assumed
to be able to short-sell this asset with a potential borrowing premium
payable, i.e. the investor can borrow funds at an approximation of
the prevailing short-term interest rate plus a borrowing premium to
fund leveraged investments in the other assets. {\PVSnew{In addition, in the
case of insolvency, defined as occurring when the investor wealth
is negative, $W\left(t_{n}\right)<0$, we will assume that the negative wealth (i.e. the outstanding debt) is placed
in asset $i=1$, where it grows at the rate of return of this asset
with an addition of a possible borrowing 
premium. Note that this effectively implies that trading ceases when $W<0$, either until maturity $T$ or until such a time where the cash injections pay off the debt resulting in $W>0$, in which case trading can resume.}}
A maximum leverage ratio at a portfolio level of $p_{max}$ is also
assumed, where typical values are in the range $p_{max}\in\left[1.0,1.5\right]$.
Assumption \ref{assu: Leverage assumptions} outlines the details
more formally. 

\begin{assumption}

\label{assu: Leverage assumptions}(Investor strategy: Leverage restrictions,
borrowing premium and no trading in insolvency) The following assumptions
and restrictions apply to the investor strategy, where the investor
considers investment in $N_{a}\geq2$ candidate assets. As discussed,
the investor's set of candidate assets may not correspond to the assets
included in the benchmark strategy.
\begin{enumerate}
\item \textbf{Shortable and long-only assets}: Only investor candidate asset
$i=1$ will (potentially) be shorted, with the remaining investor
candidate assets $i\in\left\{ 2,..,N_{a}\right\} $ being long only.
In other words, at any rebalancing time $t_{n}\in\mathcal{T}$, we
have 
\begin{eqnarray}
\left(\textrm{Shortable asset }i=1\right): &  & p_{1}\left(t_{n},\boldsymbol{X}\left(t_{n}^{-}\right)\right)\in\mathbb{R},\qquad,t_{n}\in\mathcal{T},\label{eq: Ass - shortable asset}\\
\left(\textrm{Long-only assets }i\in\left\{ 2,..,N_{a}\right\} \right): &  & p_{i}\left(t_{n},\boldsymbol{X}\left(t_{n}^{-}\right)\right)\geq0,\qquad i\in\left\{ 2,..,N_{a}\right\} ,t_{n}\in\mathcal{T},\label{eq: Assu - long only assets}\\
\left(\textrm{All wealth invested}\right): &  & \sum_{i=1}^{N_{a}}p_{i}\left(t_{n},\boldsymbol{X}\left(t_{n}^{-}\right)\right)=1,\qquad t_{n}\in\mathcal{T}.\label{eq: Assu - all wealth invested}
\end{eqnarray}
\item \textbf{Borrowing premium} $b\geq0$: If investor candidate asset
$i=1$ is shorted at time $t_{n}$ (i.e. if $p_{1}\left(t_{n},\boldsymbol{X}\left(t_{n}^{-}\right)\right)<0$),
then a constant borrowing premium $b\geq0$ is added to the returns
on asset $i=1$ over the time interval $\left[t_{n},t_{n+1}\right]$
to be paid by the investor. In other words, for asset $i=1$, the
return $R_{1}\left(t_{n}\right)$ incorporated in (\ref{eq: W dynamics discrete - INVESTOR})
is of the form 
\begin{eqnarray}
\left(\textrm{Borrowing premium}\right): &  & R_{1}\left(t_{n}\right)=\begin{cases}
\overline{R}_{1}\left(t_{n}\right), & \textrm{if }p_{1}\left(t_{n},\boldsymbol{X}\left(t_{n}^{-}\right)\right)\geq0\\
\overline{R}_{1}\left(t_{n}\right)+b, & \textrm{if }p_{1}\left(t_{n},\boldsymbol{X}\left(t_{n}^{-}\right)\right)<0,
\end{cases}\label{eq: Assu - Return on Asset 1 with borrowing premium}
\end{eqnarray}
where $\overline{R}_{1}\left(t_{n}\right)$ is the (possibly inflation-adjusted)
return on underlying asset $i=1$ over $\left[t_{n},t_{n+1}\right]$
without any added premiums. For long-only assets, we simply have $R_{i}\left(t_{n}\right)=\overline{R}_{i}\left(t_{n}\right)$,
$i\in\left\{ 2,..,N_{a}\right\} ,t_{n}\in\mathcal{T}$. Note that
in the case of the benchmark strategy, no borrowing premium is applicable
to any asset due to Assumption \ref{assu: Benchmark leverage assumptions}
below.
\item \textbf{Maximum leverage ratio} $p_{max}$: The total allocated proportion
of wealth to the long-only assets $i\in\left\{ 2,..,N_{a}\right\} $
cannot exceed the maximum leverage ratio $p_{max}$ , 
\begin{eqnarray}
\left(\textrm{Maximum leverage ratio}\right): &  & \sum_{i=2}^{N_{a}}p_{i}\left(t_{n},\boldsymbol{X}\left(t_{n}^{-}\right)\right)\leq p_{max},\qquad t_{n}\in\mathcal{T}.\label{eq: Assu - Max leverage ratio}
\end{eqnarray}
\item \textbf{No trading in insolvency}: If the investor wealth is negative,
i.e. if $W\left(t_{n}\right)<0$ at any $t_{n}\in\mathcal{T}$, then
all long asset positions (\ref{eq: Assu - long only assets}) are
liquidated and the total debt (the amount $W\left(t_{n}\right)<0$)
is allocated to the shortable asset (\ref{eq: Ass - shortable asset}).
In such a scenario, no further trading occurs for the remainder of
the investment time horizon ($t_{m}\in\mathcal{T},t_{m}>t_{n}$), unless
cash injections pay off the debt, and the portfolio wealth becomes positive.
Total debt accumulates at a rate (\ref{eq: Assu - Return on Asset 1 with borrowing premium})
which possibly includes a borrowing premium. 
More formally, 
\begin{eqnarray}
\left(\textrm{No trading in insolvency}\right): &  & \textrm{If }W\left(t_{n}^{-}\right)<0\quad\Rightarrow\quad\boldsymbol{p}\left(t_{n},\boldsymbol{X}\left(t_{n}^{-}\right)\right)=\boldsymbol{e}_{1},\quad t_{n}\in\mathcal{T},\label{eq: Assu - No trading in insolvency}
\end{eqnarray}
where $\boldsymbol{e}_{1}=\left(1,0,...,0\right)\in\mathbb{R}^{N_{a}}$
is the standard basis vector $\mathbb{R}^{N_{a}}$ with 1 in the first
position (corresponding to $i=1$, the shortable asset as per (\ref{eq: Ass - shortable asset}))
and all other entries are zero. \qed
\end{enumerate}
\end{assumption}

Note that (\ref{eq: Assu - No trading in insolvency}) also implies
$\boldsymbol{p}\left(t_{m},\boldsymbol{X}\left(t_{m}^{-}\right)\right)=\boldsymbol{e}_{1}$
for all $t_{m}>t_{n}$, so that no further trading does indeed occur
in the case of insolvency as required by Assumption \ref{assu: Leverage assumptions}(iv). 

Recalling that $\mathcal{A}$ denotes the set of admissible controls
and $\mathcal{Z}$ denoting the admissible control space, Assumption
\ref{assu: Leverage assumptions} implies that we have the following
form for $\mathcal{Z}$ and $\mathcal{A}$, respectively: 

\begin{eqnarray}
\mathcal{Z} & = & \left\{ \boldsymbol{z}\in\mathbb{R}^{N_{a}}\left|\begin{array}{c}
z_{1}\in\mathbb{R},\\
z_{i}\geq0,\forall i\in\left\{ 2,..,N_{a}\right\} ,\\
\sum_{i=1}^{N_{a}}z_{i}=1,\\
\sum_{i=2}^{N_{a}}z_{i}\leq p_{max}.
\end{array}\right.\right\} ,\label{eq: Set Z with multiple constraints}
\end{eqnarray}
and 
\begin{eqnarray}
\mathcal{A} & = & \left\{ \mathcal{P}=\left\{ \boldsymbol{p}\left(t_{n},\boldsymbol{X}\left(t_{n}\right)\right),t_{n}\in\mathcal{T}\left|\begin{array}{c}
\boldsymbol{p}\left(t_{n},\boldsymbol{X}\left(t_{n}\right)\right)\in\mathcal{Z},\textrm{ if }W\left(t_{n}^{-}\right)\geq0,\\
\boldsymbol{p}\left(t_{n},\boldsymbol{X}\left(t_{n}\right)\right)=\boldsymbol{e}_{1},\textrm{ if }W\left(t_{n}^{-}\right)<0.
\end{array}\right.\right\} \right\} .\label{eq: Set A with multiple constraints}
\end{eqnarray}

Note that with slight abuse of notation in (\ref{eq: Set A with multiple constraints}),
$\mathcal{Z}$ is the admissible control space in the case of solvency
only.

Finally, in order ensure that the benchmarks align with typical investment
benchmarks used in practice (see Remark \ref{rem: Clarifying general assumptions benchmarks})
and to avoid pathological examples, Assumption \ref{assu: Benchmark leverage assumptions}
below specifies that no short-selling is allowed in the case of the
benchmark strategy.

\begin{assumption}

\label{assu: Benchmark leverage assumptions} (Benchmark: leverage
restrictions) The benchmark strategy does not engage in the short-selling
of any asset.
\begin{eqnarray}
\left(\textrm{Long-only benchmark}\right) &  & \hat{p}_{j}\left(t_{n},\hat{\boldsymbol{X}}\left(t_{n}^{-}\right)\right)\geq0,\qquad\forall j=1,...,\hat{N}_{a}.\label{eq: Assu - benchmark long only}
\end{eqnarray}
As a result, the benchmark strategy has an implicit maximum leverage
ratio of $p_{max}=1$, with no borrowing premium being applicable,
while benchmark insolvency is ruled out in the sense that $\hat{W}\left(t_{n}\right)\geq0$
for all $t_{n}\in\mathcal{T}$ given dynamics (\ref{eq: W dynamics discrete - BENCHMARK}).
\qed

\end{assumption}

\subsection{Neural network solution approach\label{subsec: NN solution approach}}

The objective function in the case of the numerical solutions remains
of the form (\ref{eq: IR objective USED}), 
\begin{eqnarray}
\left(IR\left(\gamma\right)\right): &  & \inf_{\mathcal{P}\in\mathcal{A}}E_{\mathcal{P}}^{t_{0},w_{0}}\left[\left(W\left(T\right)-\left[\hat{W}\left(T\right)+\gamma\right]\right)^{2}\right],\qquad\gamma>0,\label{eq: IR objective for numerical}
\end{eqnarray}
with the main differences from the treatment in Section \ref{sec:Closed-form-solutions}
being the following: (i) The set of admissible controls $\mathcal{A}$
is now given by (\ref{eq: Set A with multiple constraints}). (ii)
Rebalancing occurs at a strict discrete subset of times $t_{n}\in\mathcal{T}\subset\left[t_{0}=0,T\right]$.
(iii) As discussed below, we no longer need the assumption of parametric
models for the underlying assets, but can use market data directly.

To solve (\ref{eq: IR objective for numerical}) numerically to obtain
the optimal investment strategy $\mathcal{P}^{\ast}\in\mathcal{A}$,
we follow the neural network-based solution approach of \cite{NiLiForsyth2023_LFNN},
where a ``leverage-feasible neural network'' (LFNN) is constructed
to approximate the investment strategy directly as a feedback control
$\left(t_{n},\boldsymbol{X}\left(t_{n}\right)\right)\rightarrow\mathcal{P}\left(t_{n},\boldsymbol{X}\left(t_{n}\right)\right)\coloneqq\boldsymbol{p}\left(t_{n},\boldsymbol{X}\left(t_{n}\right)\right),\forall t_{n}\in\mathcal{T}$
in the case of admissible sets of the form (\ref{eq: Set Z with multiple constraints})-(\ref{eq: Set A with multiple constraints}).
This approach forms part of a class of methods (see, for example,
\cite{HanWeinan2016,BuehlerGononEtAl2018,ReppenSoner2023,
ReppenSonerEtAl2023,PvsForsythLi2023_NN,PvSForsythLi2022_stochbm,Jari_2024},
)
that does not require dynamic programming to solve problems such as
(\ref{eq: IR objective USED}), thereby avoiding the typical challenges
such as evaluating high-dimensional conditional expectations and error
amplifications over time-stepping.

Since more detailed information, including a convergence analysis,
can be found in \cite{NiLiForsyth2023_LFNN}, we give only a very
short overview of the application of the LFNN approach in our setting.
In this approach, the control \textit{function} $\left(t_{n},\boldsymbol{X}\left(t_{n}\right)\right)\rightarrow\boldsymbol{p}\left(t_{n},\boldsymbol{X}\left(t_{n}\right)\right)$
is approximated by a single neural network (NN) with at least 3 features
(inputs), namely $\left(t_{n},\boldsymbol{X}\left(t_{n}\right)\right)=\left(t_{n},W\left(t_{n}\right),\hat{W}\left(t_{n}\right)\right)$.
Note that additional features such as trading signals can be incorporated
in the NN inputs in settings where this is considered valuable. Let
$\boldsymbol{F}(t,\boldsymbol{X}(t);\boldsymbol{\theta})\equiv\boldsymbol{F}(\cdot,\boldsymbol{\theta})$
denote the NN, where $\boldsymbol{\theta}\in\mathbb{R}^{\eta_{\theta}}$
is the NN parameters, i.e. the NN weights and biases. Since the time
$t_{n}$ is used is an input into the NN, a single parameter vector
$\boldsymbol{\theta}$ (equivalently, a single NN) is applicable to
all rebalancing times, identifying this as a ``global-in-time''
approach in the taxonomy of \cite{HuLauriere2023}. One of the key
contributions of \cite{NiLiForsyth2023_LFNN} is to construct the
NN $\boldsymbol{F}(\cdot,\boldsymbol{\theta})$ with an output layer
that guarantees, for all inputs $\left(t,\boldsymbol{X}\left(t\right)\right)=\left(t,W\left(t\right),\hat{W}\left(t\right)\right)$,
that
\begin{eqnarray}
\left(t,\boldsymbol{X}\left(t\right)\right) & \rightarrow & \boldsymbol{F}(t,\boldsymbol{X}(t);\boldsymbol{\theta})\left|\begin{array}{c}
\boldsymbol{F}(t,\boldsymbol{X}(t);\boldsymbol{\theta})\in\mathcal{Z},\textrm{ if }W\left(t\right)\geq0,\\
\boldsymbol{F}(t,\boldsymbol{X}(t);\boldsymbol{\theta})=\boldsymbol{e}_{1},\textrm{ if }W\left(t\right)<0.
\end{array}\right.\label{eq: NN automatically satisfies constraints}
\end{eqnarray}
As a result of (\ref{eq: Set A with multiple constraints}), by using
the approximation 
\begin{eqnarray}
\boldsymbol{p}(t,\boldsymbol{X}(t)) & \simeq & \boldsymbol{F}(t,\boldsymbol{X}(t);\boldsymbol{\theta})\equiv\boldsymbol{F}(\cdot,\boldsymbol{\theta}),\label{eq: NN control approx intuitive}
\end{eqnarray}
we can therefore approximate the investor strategies as $\mathcal{P}=\left\{ \boldsymbol{F}(t_{n},\boldsymbol{X}(t_{n});\boldsymbol{\theta}),~t_{n}\in\mathcal{T}\right\} $
while being assured that $\mathcal{P}\in\mathcal{A}$ where $\mathcal{A}$
is as per (\ref{eq: Set A with multiple constraints}), without the
need to impose constraints on the optimization problem itself. As
a result, (\ref{eq: IR objective for numerical}) can be solved as
an \textit{unconstrained} optimization problem over $\boldsymbol{\theta}\in\mathbb{R}^{\eta_{\theta}}$,
\begin{equation}
\inf_{\boldsymbol{\theta}\in\mathbb{R}^{\eta_{\theta}}}E_{\boldsymbol{F}\left(\cdot;\boldsymbol{\theta}\right)}^{t_{0},w_{0}}\left[\left(W\left(T;\boldsymbol{\theta}\right)-\left[\hat{W}\left(T\right)+\gamma\right]\right)^{2}\right],\label{eq: Unconstrained optimization}
\end{equation}
where the approximation (\ref{eq: NN control approx intuitive}) is
used to obtain the asset allocation in the investor wealth dynamics
(\ref{eq: W dynamics discrete - INVESTOR}) which depend on $\boldsymbol{\theta}$. 

Parametric models for the underlying asset dynamics are no longer
required. Instead, we use a finite set of samples from the set $Y=\left\{ Y^{\left(j\right)}:j=1,...,N_{d}\right\} $,
where each element $Y^{\left(j\right)}$ denotes a time series of
\textit{joint} asset return observations $R_{i},i\in\left\{ 1,..,N_{a}\right\} $,
possibly adjusted for inflation and the application of a borrowing
premium, observed at each $t_{n}\in\mathcal{T}$. $Y$ represents
the training data of the NN, so any $\boldsymbol{\theta}\in\mathbb{R}^{\eta_{\theta}}$
and returns path $Y^{\left(j\right)}\in Y$, the wealth dynamics (\ref{eq: W dynamics discrete - INVESTOR})
with approximation (\ref{eq: NN control approx intuitive}) generates
a terminal wealth outcome $W^{\left(j\right)}\left(T;\boldsymbol{\theta}\right)$.
The expectation in (\ref{eq: Unconstrained optimization}) is then
approximated simply by 

\begin{equation}
\min_{\boldsymbol{\theta}\in\mathbb{R}^{\eta_{\theta}}}\left\{ \frac{1}{N_{d}}\sum_{j=1}^{N_{d}}\left(W^{\left(j\right)}\left(T;\boldsymbol{\theta}\right)-\left[\hat{W}^{\left(j\right)}\left(T\right)+\gamma\right]\right)^{2}\right\} ,\label{eq: Approx of unconstrained optimization}
\end{equation}
where the optimal parameter vector $\boldsymbol{\theta}^{\ast}$ is
obtained using stochastic gradient descent. The resulting IR-optimal
strategy for (\ref{eq: IR objective for numerical}) consistent with
the constraints as outlined in Assumption \ref{assu: Leverage assumptions}
is therefore given by $\boldsymbol{p}^{\ast}(\cdot,\boldsymbol{X}(\cdot))\simeq\boldsymbol{F}(\cdot,\boldsymbol{\theta}^{\ast})$. 

While the details underlying the construction of the data set $Y$
are clearly of practical significance, we note that the approach of
\cite{NiLiForsyth2023_LFNN} remains agnostic as to the how $Y$ is
constructed. It can be obtained using for example GAN-generated data
sets (see e.g. \cite{YoonTimeGAN2019,PvsForsythLi2023_NN}), or using
Monte Carlo simulations if the underlying dynamics are specified for
ground truth analysis purposes (see e.g. \cite{PvSForsythLi2022_stochbm}),
or a version of bootstrap resampling of empirical market data, as
we now discuss.

In practical applications, the use of empirical market data might
be preferred for the construction of $Y$. However, since only a single
historical path of asset returns is available, some form of data augmentation
is typically used to obtain sufficiently rich training and testing
data. For illustrative purposes, in Section \ref{sec:Indicative-investment-results}
we use stationary block bootstrap resampling (\cite{politis1994})
to construct $Y$. This technique, designed for weakly stationary
time series with serial dependence, is both popular in academic settings
(\cite{Cederburg_2022}) and practical applications (\cite{CogneauZakalmouline2013,dichtl2016,Scott_2017,Scott_2022,Simonian_2022}).
Note that bootstrap resampling methods have been proposed for non-stationary
time series (\cite{Politis_2003_a}, \cite{Politis_1999_b}), but
this is not used in the illustrative investment results of Section
\ref{sec:Indicative-investment-results}. 

\section{Indicative investment results\label{sec:Indicative-investment-results}}

{\PVSedit{This section demonstrates the potential of including LETFs within long-term, diversified, dynamic, IR-optimal investment strategies subject to the investment constraints outlined in Section \ref{sec:Numerical-solutions}. To ensure that the results are realistic and practical, we focus entirely on using bootstrapped historical returns (including the proxy LETF returns series dating back to 1926, see Appendix \ref{sec:Appendix - Source-data}), and we apply the numerical approach discussed in Subsection \ref{subsec: NN solution approach} to obtain the IR-optimal strategies based on the combination of T-bills and T-bonds with a LETF or a VETF on a broad equity market index.}}

\subsection{Investment scenarios\label{subsec:Investment-scenarios}}

The key investment parameters used for illustrative purposes throughout
this section are outlined in Table \ref{tab: Ind Inv Results - investment params}.
Note in particular that we use a relatively long investment time horizon
(10 years) coupled with relatively infrequent (quarterly) rebalancing,
and that the same implicit outperformance target $\gamma$ is used
in each of the scenarios to facilitate a fair comparison. This value
of $\gamma$ is chosen for general illustrative purposes only, and
the conclusions remain qualitatively similar for different choices
of $\gamma$.

\noindent 
\begin{table}[!tbh]
\caption{Key investment parameters for the illustrative results of Section
\ref{sec:Indicative-investment-results}. \label{tab: Ind Inv Results - investment params}}

\centering{}%
\begin{tabular}{|>{\centering}p{2cm}|>{\centering}p{2cm}|>{\centering}p{3.5cm}|>{\centering}p{2cm}|>{\centering}p{2cm}|>{\centering}p{2cm}|}
\hline 
{\footnotesize{}Parameter} & {\footnotesize{}$T$} & {\footnotesize{}\# rebalancing events}{\footnotesize\par}

{\footnotesize{}($N_{rb}$)} & {\footnotesize{}Initial wealth ($w_{0}$)} & {\footnotesize{}Contributions}{\footnotesize\par}

{\footnotesize{}($q_{n}$)} & {\footnotesize{}Target}{\footnotesize\par}

{\footnotesize{}($\gamma$)}\tabularnewline
\hline 
{\footnotesize{}Value} & {\footnotesize{}10 years} & {\footnotesize{}40}{\footnotesize\par}

{\footnotesize{}(quarterly rebalancing)} & {\footnotesize{}\$ 100} & {\footnotesize{}\$ 5 per year}{\footnotesize\par}

{\footnotesize{}(\$1.25/quarter)} & {\footnotesize{}125}\tabularnewline
\hline 
\end{tabular}
\end{table}

Table \ref{tab: Ind Inv Results - Candidate assets and benchmark}
provides an overview of the benchmark and the investor's candidate
assets. A 70/30 benchmark strategy is again used, since it aligns
to the definition of popular investment benchmarks used in practice
(see Remark \ref{rem: Clarifying general assumptions benchmarks}).
Note that the benchmark is defined in terms of the broad equity market
index (``Market'') with 70\% of the wealth allocation, with the
remaining 30\% split between 30-day T-bills and 10-year T-bonds. As
in Section \ref{sec:Closed-form-solutions}, we assume that the investor
cannot invest directly in the broad equity market index (``Market''),
but can gain exposure to this index via a VETF (expense ratio $c_{v}=0.06\%$)
or a LETF with multiplier $\beta=2$ (expense ratio $c_{\ell}=0.89\%$). 

\noindent 
\begin{table}[!tbh]
\caption{Candidate assets and benchmark for the illustrative results of Section
\ref{sec:Indicative-investment-results}. A mark \textquotedblleft$\checkmark$\textquotedblright{}
indicates that an asset is available for inclusion. Note that the
investor cannot invest directly in the market portfolio (\textquotedblleft Market\textquotedblright ),
but only indirectly via either the VETF or LETF, whereas the benchmark
is defined directly in terms of \textquotedblleft Market\textquotedblright{}
in alignment with popular investment benchmarks. \label{tab: Ind Inv Results - Candidate assets and benchmark}}

\begin{tabular}{|>{\centering}p{1.2cm}|>{\raggedright}p{7.5cm}|>{\centering}p{1.5cm}||>{\centering}p{2cm}|>{\centering}p{2cm}|}
\hline 
\multicolumn{2}{|>{\centering}p{7.5cm}|}{{\footnotesize{}Underlying assets}} & \multirow{2}{1.5cm}{{\footnotesize{}Benchmark}} & \multicolumn{2}{>{\centering}p{4cm}|}{{\footnotesize{}Investor candidate assets}}\tabularnewline
\cline{1-2} \cline{2-2} \cline{4-5} \cline{5-5} 
{\footnotesize{}Label} & {\footnotesize{}Asset description} &  & {\footnotesize{}Using VETF} & {\footnotesize{}Using LETF}\tabularnewline
\hline 
{\footnotesize{}T30} & {\footnotesize{}30-day Treasury bill} & {\footnotesize{}15\%} & {\footnotesize{}$\checkmark$} & {\footnotesize{}$\checkmark$}\tabularnewline
\hline 
{\footnotesize{}B10} & {\footnotesize{}10-year Treasury bond} & {\footnotesize{}15\%} & {\footnotesize{}$\checkmark$} & {\footnotesize{}$\checkmark$}\tabularnewline
\hline 
{\footnotesize{}Market} & {\footnotesize{}Market portfolio (broad equity market index)} & {\footnotesize{}70\%} & {\footnotesize{}-} & {\footnotesize{}-}\tabularnewline
\hline 
{\footnotesize{}VETF} & {\footnotesize{}Vanilla (unleveraged) ETF replicating the returns
of the market portfolio, with expense ratio $c_{v}=0.06\%$} & {\footnotesize{}-} & {\footnotesize{}$\checkmark$} & {\footnotesize{}-}\tabularnewline
\hline 
{\footnotesize{}LETF} & {\footnotesize{}Leveraged ETF with daily returns replicating $\beta=2$
times the daily returns of the market portfolio, with expense ratio
$c_{\ell}=0.89\%$} & {\footnotesize{}-} & {\footnotesize{}-} & {\footnotesize{}$\checkmark$}\tabularnewline
\hline 
\end{tabular}
\end{table}

Table \ref{tab: Ind Inv Results - Leverage and borrowing premium scenarios}
provides more detail on the leverage and borrowing premium scenarios
considered, where we highlight the following:
\begin{itemize}
\item Investor portfolios formed with a LETF are never leveraged ($p_{max}=1.0$),
whereas portfolios formed with a VETF can use leverage up to a portfolio
maximum of $p_{max}\in\left\{ 1.0,1.2,1.5,2.0\right\} $ via the short-selling
of 30-day T-bills (i.e. borrowing funds to invest in the VETF) with
a borrowing premium $b\in\left\{ 0,0.03\right\} $ potentially being
applicable. This is done in order to compare the performance of an
IR-optimal portfolio with a LETF and no portfolio-level leverage with
that of an IR-optimal portfolio formed with a (potentially) leveraged
VETF under various leverage assumptions.
\item In terms of the selection of values for $p_{max}\in\left\{ 1.0,1.2,1.5,2.0\right\} $
in the case of the VETF investor, note that Regulation T of the US
Federal Reserve board requires at least 50\% of the initial price
of a stock position to be available on deposit, while brokerage firms
are free to establish more stringent requirements. For the VETF investor,
for illustrative purposes we will therefore mostly focus on the cases
of $p_{max}=1.0$ (no leverage) or $p_{max}=1.5$, and for comparison
purposes provide the additional examples using $p_{max}=1.2$ and
$p_{max}=2.0$ in Appendix \ref{sec:Appendix - Additional numerical results}. 
\item In terms of the selection of borrowing premiums $b\in\left\{ 0,0.03\right\} $
for the VETF investor, we first note that all returns are inflation-adjusted
(see Appendix \ref{sec:Appendix - Source-data}), and so these quantities
should be interpreted net of inflation. The case of zero borrowing
premium ($b=0$) is provided for comparison purposes only, while the
value of $b=3\%$ is obtained from the examples in \cite{NiLiForsyth2023_LFNN},
where it is based on a consideration of the average real return for
T-bills and the average inflation-adjusted corporate bond yields for
Moody's Aaa and Baa-rated bond issues.
\end{itemize}
\noindent 
\begin{table}[!tbh]
\caption{Maximum leverage and borrowing premium scenarios for the indicative
investment results of Section \ref{sec:Indicative-investment-results}.
\label{tab: Ind Inv Results - Leverage and borrowing premium scenarios}}

\begin{tabular}{|>{\raggedright}p{7cm}|>{\centering}p{2cm}||>{\centering}p{3cm}|>{\centering}p{3cm}|}
\hline 
\multirow{2}{7cm}{{\footnotesize{}Component of leverage scenario}} & \multirow{2}{2cm}{{\footnotesize{}Benchmark}} & \multicolumn{2}{>{\centering}p{6cm}|}{{\footnotesize{}Investor candidate assets}}\tabularnewline
\cline{3-4} \cline{4-4} 
 &  & {\footnotesize{}Using VETF} & {\footnotesize{}Using LETF}\tabularnewline
\hline 
{\footnotesize{}Maximum portfolio-level leverage ratio $p_{max}$} & {\footnotesize{}No leverage allowed ($p_{max}=1.0$)} & {\footnotesize{}No leverage allowed ($p_{max}=1.0$) }{\footnotesize\par}

{\footnotesize{}as well as scenarios}{\footnotesize\par}

{\footnotesize{}$p_{max}\in\left\{ 1.2,1.5,2.0\right\} $} & {\footnotesize{}No leverage allowed ($p_{max}=1.0$)}\tabularnewline
\hline 
{\footnotesize{}Shortable asset to fund leveraged position (if applicable)} & {\footnotesize{}-} & {\footnotesize{}T30} & {\footnotesize{}-}\tabularnewline
\hline 
{\footnotesize{}Borrowing premiums: Scenarios for premium $b$ over
T30 return on leveraged positions (if applicable)} & {\footnotesize{}N/a} & {\footnotesize{}$b=0$ or $b=0.03$} & {\footnotesize{}N/a}\tabularnewline
\hline 
\end{tabular}
\end{table}

The underlying data sets for the training and testing of the neural
network giving the IR-optimal investment strategies using stationary
block bootstrap resampling of empirical market data (see Section \ref{sec:Numerical-solutions}
and Appendix \ref{sec:Appendix - Source-data}) instead of calibrated
process dynamics. In particular, we use all available inflation-adjusted
market data over the time period January 1926 to December 2023, together
with an expected block size of 3 months, to obtain 500,000 jointly
bootstrapped asset return paths (see \cite{PvsForsythLi2023_NN,LiForsyth2019}
and Appendix \ref{sec:Appendix - Source-data} for more information).
As in \cite{NiLiForsyth2023_LFNN}, we use a shallow NN (2 hidden
layers) with only the minimal input features $\left(t,\boldsymbol{X}\left(t\right)\right)=\left(t,W\left(t\right),\hat{W}\left(t\right)\right)$,
since that has been found sufficient to obtain a stable and accurate
IR-optimal investment strategy in a setting where no additional market
signals are used as inputs.

For illustrative purposes, we also present the investment results
obtained from investing according to the IR-optimal investment strategy
on selected historical data paths. This is discussed in more detail
in Remark \ref{rem: Performance on historical data path}. 

\begin{brem}

\label{rem: Performance on historical data path}(Performance on single
historical data paths) Since the future evolution of asset returns
are not expected to replicate the past evolution of returns \textit{precisely},
we consider illustrative investment results based on bootstrapped
data sets as significantly more informative than using a single historical
data path of asset returns to illustrate performance. 

However, for purposes of concreteness and intuition, we do show the
evolution of the LETF and VETF investor wealth obtained by implementing
the corresponding IR-optimal portfolios and the benchmark on four
historical data paths each spanning a period equal to the investment
time horizon of 10 years:
\begin{enumerate}
\item January 2000 until December 2009, which illustrates the impact on
the portfolio wealth of both the DotCom bubble crash as well as the
GFC period.
\item January 2005 until December 2014, which focuses on the GFC and the
subsequent period of relatively slow market recovery.
\item January 2010 until December 2019, which illustrates the performance
during the bull market of the 2010s, a period of very low interest
rates and therefore cheap leverage.
\item January 2014 until December 2023, which combines an initial period
of strong growth and low interest rates with the Covid-19 period and
subsequent recovery, only to be followed by the bear market for stocks
lasting from January to October 2022 and higher interest rates.
\end{enumerate}
Note that while the historical path of returns enter the training
data of the NN indirectly via bootstrap resampling, the probability
that the actual historical data path itself appearing in the resulting
bootstrapped data sets is vanishingly small (see \cite{NiLiForsyth2020}
for a proof), so that the historical data paths can themselves be
considered as effectively ``out-of-sample'' for testing purposes.
However, we emphasize that in this section the main focus remains
on the investment results based on the much richer bootstrapped data
set of returns data, which ensures a meaningful discussion of the
implications for wealth \textit{distributions}, for example, rather
than individual wealth values from a single historical path. \qed

\end{brem}

\subsection{Comparison of investment results\label{subsec: Comparison of investment results}}

{\PVSedit{Figure \ref{fig: NumSolns Sc1 CDFs} illustrates the distributions
of the IR-optimal terminal investor wealth $W_{k}^{\ast}\left(T\right),k\in\left\{ v,\ell\right\} $
for different portfolios formed under the leverage scenarios
as per Table \ref{tab: Ind Inv Results - Leverage and borrowing premium scenarios},
as well as the distribution of the benchmark terminal wealth $\hat{W}\left(T\right)$.
With regards to Figure \ref{fig: NumSolns Sc1 CDFs}(a), we see that
the IR-optimal strategy using the LETF (and bonds) achieves \textit{partial
stochastic dominance} (\cite{Atkinson1987,PvSDangForsyth2019_Distributions,NiLiForsyth2023_LFNN}) over the IR-optimal
strategies using the VETF (and bonds), even if the VETF investment
can be leveraged\footnote{Note that IR-optimal strategies incorporating either the VETF or the LETF achieve partial
stochastic dominance over the benchmark, which is to be expected considering
the results of \cite{PvSForsythLi2022_stochbm}.}. The main driver of the difference in performance of the LETF-based strategy relative to that of the VETF-based strategy in this setting is a combination of (i) the call-like payoff of the LETF as underlying asset over a relatively short time horizon (e.g. 1 quarter) and (ii) the contrarian nature of the discretely-rebalanced IR-optimal investment strategy locking in the gains from rebalancing. The observation that holding a
LETF position for a quarter amounts to holding a ``continuously rebalanced''
position in the equity index and bonds which results in a power law-type
payoff (see Section \ref{subsec:Intuition:-lump-sum investment scenario}), also holds for the historical data underlying these results (see Appendix \ref{subsec: Appendix Bootstrapped quarterly returns} for additional analyses).}}

\noindent 
\begin{figure}[!tbh]
\noindent \begin{centering}
\subfloat[CDFs of $W^{\ast}\left(T\right),$$W_{k}^{\ast}\left(T\right),k\in\left\{ v,\ell\right\} $]{\includegraphics[scale=0.75]{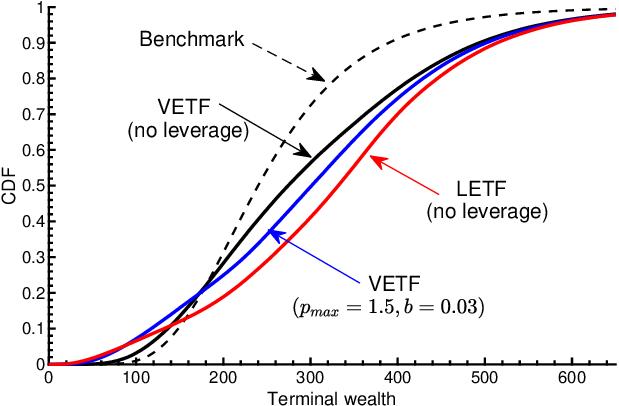}

}$\quad$$\quad$\subfloat[Extreme left tail: CDFs of $\hat{W}\left(T\right)$, $W_{k}^{\ast}\left(T\right),k\in\left\{ v,\ell\right\} $]{\includegraphics[scale=0.75]{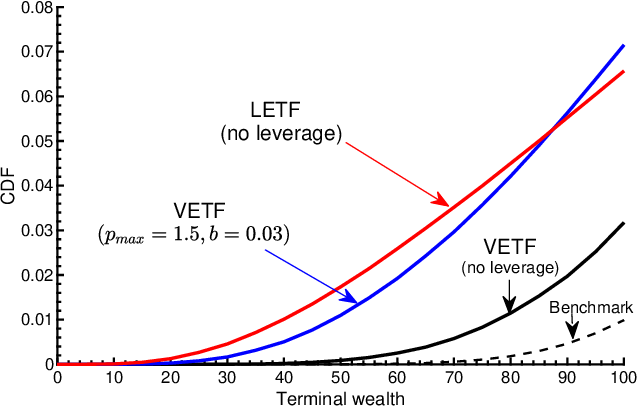}

}
\par\end{centering}
\caption{{\PVSedit{CDFs of IR-optimal terminal wealth $W_{k}^{\ast}\left(T\right),k\in\left\{ v,\ell\right\} $
and the benchmark terminal wealth $\hat{W}\left(T\right)$.}}\label{fig: NumSolns Sc1 CDFs}}
\end{figure}

However, as Figure \ref{fig: NumSolns Sc1 CDFs}(b) illustrates, there
is no free lunch with regards to leverage, similar to what we observed
in Figures \ref{fig: Intuition - no jumps}, \ref{fig: Intuition - with jumps}
and \ref{fig: Single period discrete rebalancing}. In more detail,
when considering the extreme left tails of the IR-optimal terminal
wealth CDFs, whether leveraging an investment implicitly (via the
LETF) or explicitly (via a leveraged VETF investment), the downside
wealth outcomes are worse than using the VETF with no leverage (or
simply the benchmark). Note that this is based on the distribution
of empirical market data together with the implementation of IR-optimal
investment strategies, and is consistent with the observations
regarding the downside protection offered by the LETFs in truly extreme
cases (illustrated in Figures \ref{fig: Single period discrete rebalancing}). 

{\PVSedit{
While Figure \ref{fig: NumSolns Sc1 CDFs} considers a maximum leverage
ratio of $p_{max}=1.5$ with a borrowing premium for short-selling
of $b=3\%$ (applicable to the leveraged VETF position), the results
of Figure \ref{fig: Appendix NumSolns Sc1 - W_T CDFs - Lev and borrowing costs}
in Appendix \ref{sec:Appendix - Additional numerical results} show
that qualitatively similar results are obtained even if leverage is
allowed to increase to $p_{max}=2.0$. However, Appendix \ref{sec:Appendix - Additional numerical results}
shows that in the unrealistic case where the borrowing premium on
short-selling is reduced to zero (i.e. if the investor pays interest
on short positions at exactly the T-bill rate), then a leveraged VETF-based
strategy with $p_{max}=2.0$ generates results that are comparable to (though \textit{slightly} better than) the LETF-based strategy that uses no additional leverage. These
results are to be expected given the difference in VETF and LETF expense ratios, since in the case of continuous rebalancing,
no investment constraints and zero costs, Proposition \ref{prop: Special case ZERO costs}
shows that the IR-optimal investment results are identical, regardless
of whether the investor uses a LETF with multiplier $\beta$ or a
VETF leveraged $\beta$ times. While the underlying assumptions of
Proposition \ref{prop: Special case ZERO costs} are clearly violated in the
setting of this section and the results in Appendix \ref{sec:Appendix - Additional numerical results},
this observation underscores the practical relevance of the closed-form
solutions in the stylized setting of Section \ref{sec:Closed-form-solutions}.
}}

Figure \ref{fig: NumSolns Sc1 BM outperformance} focuses on different
measures of benchmark outperformance rather than investor wealth,
with Figure \ref{fig: NumSolns Sc1 BM outperformance}(a) illustrating
the CDF of the terminal 
pathwise wealth ratio $W_{k}^{\ast}\left(T\right)/\hat{W}\left(T\right),k\in\left\{ v,\ell\right\} $
and Figure \ref{fig: NumSolns Sc1 BM outperformance}(b) illustrating
the probability of benchmark outperformance over time. It is clear
that IR-optimal portfolios formed using the LETF and no further leverage
significantly improves the benchmark outperformance characteristics
of the resulting strategy. {\PVSedit{Note that the results of Appendix \ref{sec:Appendix - Additional numerical results}
(Figure \ref{fig: Appendix NumSolns Sc1 - Ratio CDFs - Lev and borrowing costs}
and Figure \ref{fig: Appendix NumSolns Sc1 BM outperformance - LEVERAGE and COSTS})
show that the conclusions of Figure \ref{fig: NumSolns Sc1 BM outperformance}
remain qualitatively applicable for different leverage scenarios for the VETF strategy, with the LETF strategy being slightly outperformed by a VETF-based strategy only in the specific and unrealistic case of a zero borrowing premium
and maximum leverage of $p_{max}=2.0$.}}

\noindent 
\begin{figure}[!tbh]
\noindent \begin{centering}
\subfloat[CDF of ratio $W_{k}^{\ast}\left(T\right)/\hat{W}\left(T\right),k\in\left\{ v,\ell\right\} $]{\includegraphics[scale=0.75]{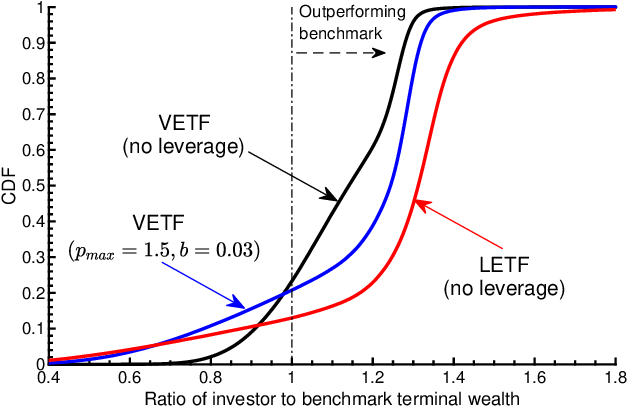}

}$\quad$$\quad$\subfloat[Probability $W_{k}^{\ast}\left(t\right)>\hat{W}\left(t\right),k\in\left\{ v,\ell\right\} $
as a function of $t$]{\includegraphics[scale=0.75]{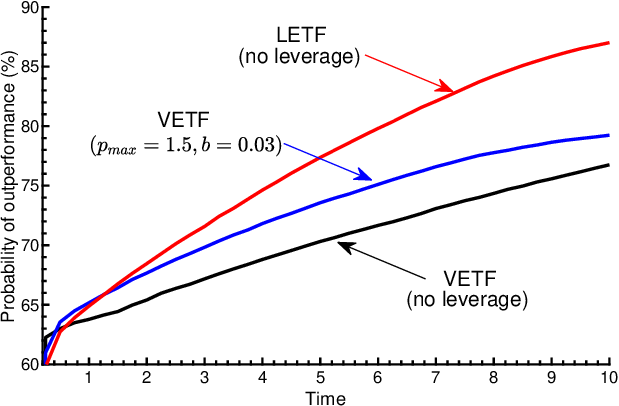}

}
\par\end{centering}
\caption{{\PVSedit{CDFs of IR-optimal terminal wealth ratios $W_{k}^{\ast}\left(T\right)/\hat{W}\left(T\right),k\in\left\{ v,\ell\right\} $,
and probability $W_{k}^{\ast}\left(t\right)>\hat{W}\left(t\right),k\in\left\{ v,\ell\right\} $
of benchmark outperformance as a function of time $t$.}} \label{fig: NumSolns Sc1 BM outperformance}}
\end{figure}

Figure \ref{fig: NumSolns Sc1 Pctiles in ETFs} compares selected
percentiles of the IR-optimal proportion of wealth in the LETF (no
leverage) or leveraged VETF $\left(p_{max}=1.5,b=0.03\right)$ over
time, using the same scale in Figure \ref{fig: NumSolns Sc1 Pctiles in ETFs}(a)
and Figure \ref{fig: NumSolns Sc1 Pctiles in ETFs}(b) for illustrative
purposes. Figure \ref{fig: NumSolns Sc1 Pctiles in ETFs}(a) shows
that the LETF investor initially allocates around 70\% of wealth allocated
to LETF, which falls to around 40\% or less for both the 20th and 50th percentiles.
In the case of the portfolio with a leveraged position in the VETF
$\left(p_{max}=1.5\right)$, Figure \ref{fig: NumSolns Sc1 Pctiles in ETFs}(b)
shows that the median allocation to the VETF exceeds 100\% of wealth
for more than half the investment time horizon. 

\noindent 
\begin{figure}[!tbh]
\noindent \begin{centering}
\subfloat[Proportion of wealth in LETF (no leverage)]{\includegraphics[scale=0.75]{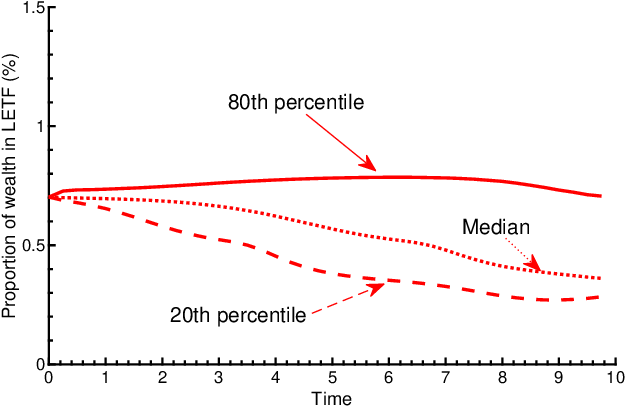}

}$\quad$$\quad$\subfloat[Proportion of wealth in VETF $\left(p_{max}=1.5,b=0.03\right)$]{\includegraphics[scale=0.75]{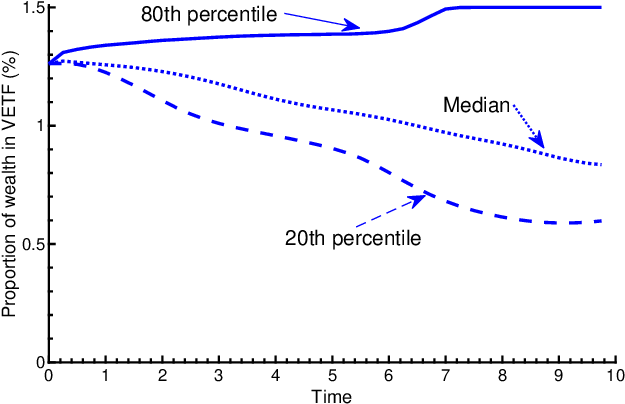}

}
\par\end{centering}
\caption{{\PVSedit{Selected percentiles of the IR-optimal proportion of wealth in the
LETF (no leverage) compared to the corresponding percentiles invested in the leveraged VETF $\left(p_{max}=1.5,b=0.03\right)$
over time. Note the same scale on the y-axis has been used to facilitate
comparison.}} \label{fig: NumSolns Sc1 Pctiles in ETFs}}
\end{figure}

Figure \ref{fig: NumSolns Sc1 Pctiles in ETFs} shows that executing
the IR-optimal strategy using a LETF offers the investor more flexibility:
with lower levels of wealth tied up in the LETF compared to the allocation
to a VETF, together with the higher volatility of LETF-based returns,
the investor can ``lock in'' periods of good past returns by implementing
a systematic de-risking of the portfolio to a lower allocation to
LETF over time, increasing the allocation to bonds. While the leveraged
VETF-based strategy essentially follows the same contrarian pattern,
the 80th percentile in Figure \ref{fig: NumSolns Sc1 Pctiles in ETFs}(b)
shows that it significantly harder for the leveraged VETF strategy
to recover from periods of poor past returns in a setting of maximum
leverage restrictions, borrowing costs and no trading in the event
of bankruptcy. 

{\PVSnew{In addition to the preceding results which are based on the bootstrap resampling of historical data,
for illustrative purposes  we consider the investment performance on
 selected single historical paths (see Remark \ref{rem: Performance on historical data path}) illustrated in Figure \ref{fig: NumSolns Sc1 Historical Path}.}}
Figures \ref{fig: NumSolns Sc1 Historical Path}(a) and (b) show that
both the LETF and VETF investors (regardless of VETF leverage) underperform
the benchmark during the lowest points of the DotCom and GFC crashes,
with the LETF investor experiences larger peak-to-trough declines
but also faster post-crash recovery. Figure \ref{fig: NumSolns Sc1 Historical Path}(c)
illustrate that the LETF-based strategy remains only slightly ahead
of the VETF-based strategies during periods of strong equity market
performance and low interest rates, while Figure \ref{fig: NumSolns Sc1 Historical Path}(d)
shows that the LETF investor stays ahead despite the significant impact
on portfolio wealth of the Covid-19 period and subsequent bear market
of 2022.

\noindent 
\begin{figure}[!tbh]
\noindent \begin{centering}
\subfloat[Investing Jan 2000 - Dec 2009]{\includegraphics[scale=0.75]{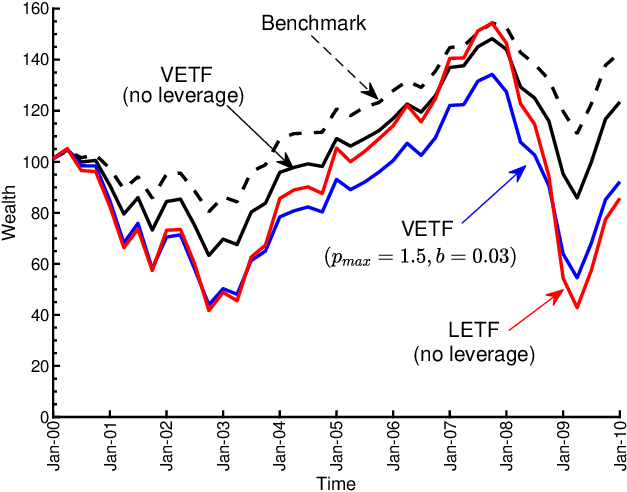}

}$\quad$$\quad$\subfloat[Investing Jan 2005 - Dec 2014]{\includegraphics[scale=0.75]{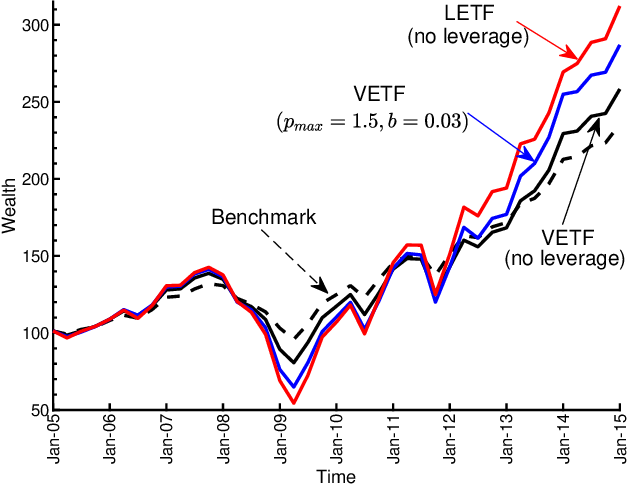}

}
\par\end{centering}
\noindent \begin{centering}
\subfloat[Investing Jan 2010 - Dec 2019]{\includegraphics[scale=0.75]{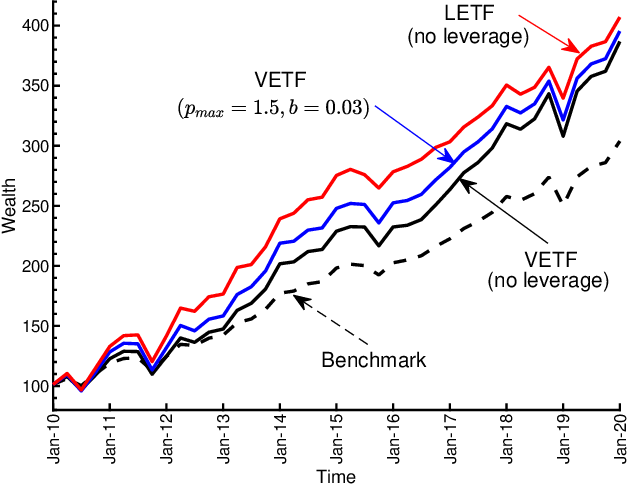}

}$\quad$$\quad$\subfloat[Investing Jan 2014 - Dec 2023]{\includegraphics[scale=0.75]{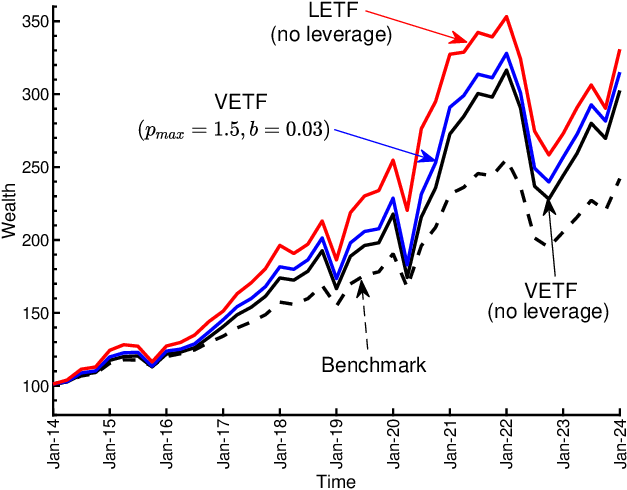}

}
\par\end{centering}
\caption{{\PVSedit{Evolution of portfolio wealth over time when investing according to
the corresponding IR-optimal investment strategies on historical paths
selected. See Remark \ref{rem: Performance on historical data path} for a discussion regarding the paths selected.}}
\label{fig: NumSolns Sc1 Historical Path}}
\end{figure}

\section{Conclusion\label{sec:Conclusion}}

In this paper, we investigated the potential of including a broad
stock market index-based leveraged ETF (LETF) in long-term, dynamically-optimal investment
strategies designed to maximize the outperformance over standard performance
benchmarks in terms of the information ratio (IR). 

Using both closed-form and numerical solutions, we showed that an
investor can exploit the observation that LETFs offer call-like payoffs,
and therefore could be a convenient way to add inexpensive leverage
to the portfolio while providing extreme downside protection. 

Under stylized assumptions including continuous rebalancing and no investment constraints, we derived the closed-form IR-optimal investment strategy for the LETF investor, which provided valuable intuition as to the contrarian nature of the strategy. 

{\PVSedit{To allow for more general and practical conclusions, we use a data-driven neural network approach based on stationary block bootstrap resampling of historical data (including proxy LETF returns) since 1926. This ensures that the investment strategies and performance analysis incorporate all empirical aspects of actual returns, including potentially sophisticated volatility dynamics. We derive IR-optimal strategies that allow for quarterly trading, leverage restrictions, no trading in the event of insolvency and the presence of margin costs on borrowing. Our findings show that unleveraged IR-optimal strategies with a broad stock market LETF not only outperform the benchmark more often than possibly leveraged IR-optimal strategies derived using a VETF, but can achieve partial stochastic dominance over the benchmark and (leveraged or unleveraged) VETF-based strategies in terms of terminal wealth.}}

Two important caveats are to be kept in mind regarding our results
demonstrating the potential of LETFs: (i) The results and conclusions
are associated with dynamic IR-optimal investment strategies, which
are most emphatically \textit{not} naive strategies like the buy-and-hold
strategies over long time horizons often considered in the literature
(see the Introduction for a discussion). In particular, critical to
the investment outcomes are the rebalancing of the portfolio within
the context of a contrarian investment strategy. (ii) The results
emphasize that there is no free lunch with regards to leverage. Specifically,
the extreme left tails of the IR-optimal terminal wealth CDFs confirm
that whether leveraging an investment implicitly (via the LETF) or
explicitly (via a leveraged VETF investment), the downside wealth
outcomes are worse than using the VETF without any leverage, and therefore
the upside outcomes of leverage is not without significant risks. Nevertheless, bootstrap resampling tests indicate
that use of an optimal strategy using LETFs outperforms the benchmark
$>95\%$ of the time, which may make the extreme tail risk acceptable.

Despite the controversy surrounding the uses of LETFs for investment
purposes in the literature, our results  help to explain
the empirical appeal of LETFs to institutional and retail investors
alike, and encourage a reconsideration of the role of broad stock
market LETFs within the context of more sophisticated investment strategies.

\section{Declaration}
Forsyth's work was supported by the Natural Sciences and Engineering Research Council of Canada (NSERC)
grant RGPIN-2017-03760.  Li's work was supported by NSERC grant RGPIN-2020-04331.

The authors have no conflicts of interest to report.

\noindent \setlength{\bibsep}{1pt plus 0.3ex} 
\small

\bibliographystyle{mynatbib}
\bibliography{References_v29}

\noindent \begin{appendices}

\normalsize

\section{Proofs of main results\label{sec:Appendix Proofs of key results}}

The proofs of the key results of Section \ref{sec:Closed-form-solutions}
are presented in this appendix.

\subsection{Proof of Theorem \ref{thm: Verification theorem}\label{subsec: Appendix proof of Verification thm}}

For a given deterministic benchmark strategy $\hat{\varrho}_{s}\left(t\right)$,
consider an arbitrary admissible investor strategy $\varrho_{\ell}\left(t\right)\coloneqq\varrho_{\ell}\left(t,\boldsymbol{X}_{\ell}\left(t\right)\right)\in\mathcal{A}_{0}$,
where we omit the dependence of $\varrho_{\ell}$ on $\boldsymbol{X}_{\ell}\left(t\right)=\left(W_{\ell}\left(t\right),\hat{W}\left(t\right),\hat{\varrho}_{s}\left(t\right)\right)$
for notational simplicity. Considering the objective functional of
the IR problem (\ref{eq: IR problem for analytical}) at a given point
$\left(t,w,\hat{w}\right)\in\left[t_{0},T\right]\times\mathbb{R}^{2}$
for a given and fixed value of $\gamma>0$, define 
\begin{eqnarray}
J\left(t,w,\hat{w};\varrho_{\ell}\right) & = & E_{\varrho_{\ell}}^{t,w,\hat{w}}\left[\left(W_{\ell}\left(T\right)-\left[\hat{W}\left(T\right)+\gamma\right]\right)^{2}\right].\label{eq: Appendix Objective functional}
\end{eqnarray}

If we proceed informally and assume that $J$ sufficiently smooth,
then the application of Itô's lemma for jump processes (\cite{OksendalSulemBook})
gives 
\begin{eqnarray}
 &  & E_{\varrho_{\ell}}^{t,w,\hat{w}}\left[\int_{t}^{t+h}dJ\left(u,W_{\ell}\left(u\right),\hat{W}\left(u\right);\varrho_{\ell}\right)\right]\nonumber \\
 & = & E_{\varrho_{\ell}}^{t,w,\hat{w}}\left[\int_{t}^{t+h}\left(\frac{\partial J}{\partial t}+\frac{\partial J}{\partial w}\cdot\left\{ W_{\ell}\left(u\right)\cdot\left[r+\varrho_{\ell}\left(u\right)\left\{ \beta\left(\mu-\lambda\kappa_{1}^{s}-r\right)-c_{\ell}\right\} \right]+q\right\} \right)\cdot du\right]\nonumber \\
 &  & +E_{\varrho_{\ell}}^{t,w,\hat{w}}\left[\int_{t}^{t+h}\frac{\partial J}{\partial\hat{w}}\cdot\left\{ \hat{W}\left(u\right)\cdot\left[r+\left(\mu-\lambda\kappa_{1}^{s}-r\right)\hat{\varrho}_{s}\left(u\right)\right]+q\right\} \cdot du\right]\nonumber \\
 &  & +E_{\varrho_{\ell}}^{t,w,\hat{w}}\left[\int_{t}^{t+h}\frac{1}{2}\left(\frac{\partial^{2}J}{\partial\hat{w}^{2}}\cdot\left[\hat{\varrho}_{s}\left(u\right)\hat{W}\left(u\right)\sigma\right]^{2}+\frac{\partial^{2}J}{\partial w^{2}}\cdot\left[\varrho_{\ell}\left(u\right)W_{\ell}\left(u\right)\beta\sigma\right]^{2}\right)\cdot du\right]\nonumber \\
 &  & +E_{\varrho_{\ell}}^{t,w,\hat{w}}\left[\int_{t}^{t+h}\frac{\partial^{2}J}{\partial w\partial\hat{w}}\cdot\left[\varrho_{\ell}\left(u\right)W_{\ell}\left(u\right)\beta\sigma\right]\left[\hat{\varrho}_{s}\left(u\right)\hat{W}\left(u\right)\sigma\right]\cdot du\right]\nonumber \\
 &  & +E_{\varrho_{\ell}}^{t,w,\hat{w}}\left[\lambda\int_{t}^{t+h}\left[\int_{0}^{\infty}\phi\left(u,W_{\ell}\left(u^{-}\right),\hat{W}\left(u^{-}\right),\xi^{s};\varrho_{\ell}\right)G\left(\xi^{s}\right)d\xi^{s}-J\left(W_{\ell}\left(u^{-}\right),\hat{W}\left(u^{-}\right),u;\varrho_{\ell}\right)\right]du\right],\label{eq: Appendix dJ2}
\end{eqnarray}
where all partial derivatives are evaluated at $\left(u,W_{\ell}\left(u\right),\hat{W}\left(u\right);\varrho_{\ell}\right)$,
and
\begin{eqnarray}
 &  & \phi\left(u,W_{\ell}\left(u^{-}\right),\hat{W}\left(u^{-}\right),\xi^{s};\varrho_{\ell}\right)\nonumber \\
 & = & J\left(W_{\ell}\left(u^{-}\right)+\varrho_{\ell}\left(u\right)W_{\ell}\left(u^{-}\right)\beta\left(\xi^{\ell}-1\right),\hat{W}\left(u^{-}\right)+\hat{\varrho}_{s}\left(u\right)\hat{W}\left(u^{-}\right)\left(\xi^{s}-1\right),u;\varrho_{\ell}\right).  \\ \nonumber 
   \label{eq: Appendix phi for dJ2}
\end{eqnarray}
Recall that the LETF jump multiplier $\xi^{\ell}$ is a function (\ref{eq: LETF jump multiplier})
of the underlying index $S$ jump multiplier $\xi^{s}$, so $\phi$
in (\ref{eq: Appendix phi for dJ2}) can be interpreted as a function
of $\xi^{s}$ if all other values are held fixed. 

Continuing to proceed informally, dividing (\ref{eq: Appendix dJ2})
by $h>0$, taking limits as $h\downarrow0$ and assuming the limits
and expectations could be interchanged, an application of the dynamic
programming principle results in the PIDE (\ref{eq: IR pide})-(\ref{eq: IR pide terminal condition}).

While providing the necessary intuition, the preceding arguments are
merely informal. However, since similar arguments (see \cite{OksendalSulemBook,ApplebaumBook})
can be applied to a suitably smooth test function instead of the objective
functional in order to formally prove (\ref{eq: IR pide})-(\ref{eq: IR pide terminal condition}),
the details are therefore omitted.

\subsection{Proof of Proposition \ref{prop: Closed-form Optimal control for LETF}\label{subsec: Appendix proof of Optimal control for LETF}}

The quadratic terminal condition (\ref{eq: IR pide terminal condition})
suggests an ansatz for the value function $V$ of the form 
\begin{eqnarray}
V\left(t,w,\hat{w},\hat{\varrho}_{s}\right) & = & A\left(t\right)w^{2}+\hat{A}\left(t\right)\hat{w}^{2}+D\left(t\right)w\hat{w}+F\left(t\right)w+\hat{F}\left(t\right)\hat{w}+C\left(t\right),\label{eq: V ansatz}
\end{eqnarray}
where $A,\hat{A},D,F,\hat{F}$ and $C$ are deterministic but unknown
functions of time. Since (\ref{eq: V ansatz}) implies partial derivatives
of the form 
\begin{equation}
\frac{\partial V}{\partial w}=2A\left(t\right)w+F\left(t\right)+D\left(t\right)\hat{w},\qquad\frac{\partial^{2}V}{\partial w^{2}}=2A\left(t\right),\qquad\textrm{and }\qquad\frac{\partial^{2}V}{\partial w\partial\hat{w}}=D\left(t\right),\label{eq: partial derivatives in terms of unknowns}
\end{equation}
substituting (\ref{eq: V ansatz})-(\ref{eq: partial derivatives in terms of unknowns})
into the HJB PIDE (\ref{eq: IR pide}) results in the pointwise supremum
$\varrho_{\ell}^{\ast}=\varrho_{\ell}^{\ast}\left(t,w,\hat{w},\hat{\varrho}_{s}\right)$
obtained from the first-order condition that satisfies the relationship
\begin{eqnarray}
\varrho_{\ell}^{\ast}\cdot w & = & -\left(\frac{\beta\left[\mu+\lambda\left(\kappa_{1}^{\ell}-\kappa_{1}^{s}\right)-r\right]-c_{\ell}}{\beta^{2}\left(\sigma^{2}+\lambda\kappa_{2}^{\ell}\right)}\right)\left[w+\frac{F\left(t\right)}{2A\left(t\right)}+\frac{D\left(t\right)}{2A\left(t\right)}\cdot\hat{w}\right]\nonumber \\
 &  & -\left[\frac{\sigma^{2}+\lambda\kappa_{\chi}^{\ell,s}}{\beta\left(\sigma^{2}+\lambda\kappa_{2}^{\ell}\right)}\right]\frac{D\left(t\right)}{2A\left(t\right)}\cdot\hat{\varrho}_{s}\hat{w}.\label{eq: p* implied by ansatz}
\end{eqnarray}
Substituting (\ref{eq: p* implied by ansatz}) into (\ref{eq: Ham_IR})
to obtain $\mathcal{H}\left(\varrho_{\ell}^{\ast};t,w,\hat{w},\hat{\varrho}_{s}\right)$,
the PIDE (\ref{eq: IR pide})-(\ref{eq: IR pide terminal condition})
implies the following set of ordinary differential equations (ODEs)
for the unknown functions $A,D$ and $F$ on $t\in\left[t_{0},T\right]$,
\begin{eqnarray}
\frac{d}{dt}A\left(t\right) & = & -\left(2r-\frac{\left(\beta\left[\mu+\lambda\left(\kappa_{1}^{\ell}-\kappa_{1}^{s}\right)-r\right]-c_{\ell}\right)^{2}}{\beta^{2}\left(\sigma^{2}+\lambda\kappa_{2}^{\ell}\right)}\right)A\left(t\right),\qquad A\left(T\right)=1,\label{eq: ODE for A}\\
\frac{d}{dt}D\left(t\right) & = & -\left(2r-\frac{\left(\beta\left[\mu+\lambda\left(\kappa_{1}^{\ell}-\kappa_{1}^{s}\right)-r\right]-c_{\ell}\right)^{2}}{\beta^{2}\left(\sigma^{2}+\lambda\kappa_{2}^{\ell}\right)}+K_{\beta}^{\ell,s}\cdot\hat{\varrho}_{s}\left(t\right)\right)D\left(t\right),\qquad D\left(T\right)=-2,\label{eq: ODE for D}\\
\frac{d}{dt}F\left(t\right) & = & -2qA\left(t\right)-\left(r-\frac{\left(\beta\left[\mu+\lambda\left(\kappa_{1}^{\ell}-\kappa_{1}^{s}\right)-r\right]-c_{\ell}\right)^{2}}{\beta^{2}\left(\sigma^{2}+\lambda\kappa_{2}^{\ell}\right)}\right)F\left(t\right)-qD\left(t\right),\qquad F\left(T\right)=-2\gamma,\label{eq: ODE for F}
\end{eqnarray}
where the constant $K_{\beta}^{\ell,s}$ is given by (\ref{eq: Constant K beta for LETF})

Note that the derivation of (\ref{eq: ODE for D}) as an ODE requires
the benchmark strategy $\hat{\varrho}_{s}$ to be deterministic (in
the case of closed-form solutions) as per Assumption \ref{assu: Extra stylized-assumptions-for continuous rebalancing}.
Solving ODEs (\ref{eq: ODE for A})-(\ref{eq: ODE for F}), we obtain
\begin{eqnarray}
A\left(t\right) & = & \exp\left\{ \left(2r-\frac{\left(\beta\left[\mu+\lambda\left(\kappa_{1}^{\ell}-\kappa_{1}^{s}\right)-r\right]-c_{\ell}\right)^{2}}{\beta^{2}\left(\sigma^{2}+\lambda\kappa_{2}^{\ell}\right)}\right)\left(T-t\right)\right\} ,\label{eq: Function A}\\
D\left(t\right) & = & -2\cdot\exp\left\{ \left(2r-\frac{\left(\beta\left[\mu+\lambda\left(\kappa_{1}^{\ell}-\kappa_{1}^{s}\right)-r\right]-c_{\ell}\right)^{2}}{\beta^{2}\left(\sigma^{2}+\lambda\kappa_{2}^{\ell}\right)}\right)\left(T-t\right)+K_{\beta}^{\ell,s}\cdot\int_{t}^{T}\hat{\varrho}_{s}\left(u\right)du\right\} ,\label{eq: Function D}\\
F\left(t\right) & = & 2\exp\left\{ \left(r-\frac{\left(\beta\left[\mu+\lambda\left(\kappa_{1}^{\ell}-\kappa_{1}^{s}\right)-r\right]-c_{\ell}\right)^{2}}{\beta^{2}\left(\sigma^{2}+\lambda\kappa_{2}^{\ell}\right)}\right)\left(T-t\right)\right\} \times\nonumber \\
 &  & \left[-\gamma+\frac{q}{r}\left(e^{r\left(T-t\right)}-1\right)-q\cdot\int_{t}^{T}\exp\left\{ r\left(T-v\right)+K_{\beta}^{\ell,s}\cdot\int_{v}^{T}\hat{\varrho}_{s}\left(u\right)du\right\} dv\right].\label{eq: Function F}
\end{eqnarray}
Substituting (\ref{eq: Function A})-(\ref{eq: Function F}) into
(\ref{eq: p* implied by ansatz}) and simplifying, we obtain the optimal
fraction of wealth $\varrho_{\ell}^{\ast}$ to invest in the LETF
(\ref{eq: LETF optimal strategy}) as per Proposition \ref{prop: Closed-form Optimal control for LETF}.

\subsection{Expressions for $\kappa_{1}^{\ell}$, $\kappa_{2}^{\ell}$ and $\kappa_{\chi}^{\ell,s}$\label{subsec: Appendix expressions for kappas}}

For the purposes of illustrating the closed-form solutions of Section
\ref{sec:Closed-form-solutions}, the broad equity market index $S$
is assumed to have dynamics (\ref{eq: S dynamics}) with jumps as
modelled in \cite{KouOriginal}. As a result, with $p_{up}$ denoting
the probability of an upward jump given that a jump occurs, $y=\log\xi^{s}$
is assumed in \cite{KouOriginal} to follow an asymmetric double-exponential
distribution with PDF $g\left(y\right)$, 
\begin{align}
g\left(y\right)= & p_{up}\eta_{1}e^{-\eta_{1}y}\mathbb{I}_{\left\{ y\geq0\right\} }+\left(1-p_{up}\right)\eta_{2}e^{\eta_{2}y}\mathbb{I}_{\left\{ y<0\right\} },\label{eq: log jump pdf - KOU}
\end{align}
where $p_{up}\in\left[0,1\right]\textrm{ and }\eta_{1}>1,\eta_{2}>0$.
Equivalently, the PDF of $\xi^{s}$ is given by
\begin{eqnarray}
G\left(\xi^{s}\right) & = & p_{up}\eta_{1}\left(\xi^{s}\right)^{-\eta_{1}-1}\mathbb{I}_{\left[\xi^{s}\geq1\right]}\left(\xi^{s}\right)+\left(1-p_{up}\right)\eta_{2}\left(\xi^{s}\right)^{\eta_{2}-1}\mathbb{I}_{\left[0\leq\xi^{s}<1\right]}\left(\xi^{s}\right).\label{eq: Jump pdf - KOU}
\end{eqnarray}

Recall that we have defined $\kappa_{1}^{s}$ and $\kappa_{1}^{s}$
in (\ref{eq: Kappa and Kappa2 for S dynamics}), repeated here for
convenience,
\begin{equation}
\kappa_{1}^{s}=\mathbb{E}\left[\xi^{s}-1\right],\qquad\kappa_{2}^{s}=\mathbb{E}\left[\left(\xi^{s}-1\right)^{2}\right].\label{eq: Appendix Kappa and Kappa2 for S dynamics}
\end{equation}
From the results in \cite{KouOriginal}, we can obtain (\ref{eq: Appendix Kappa and Kappa2 for S dynamics})
for the distribution (\ref{eq: Jump pdf - KOU}) using the results
\begin{equation}
\mathbb{E}\left[\xi^{s}\right]=\frac{p_{up}\eta_{1}}{\eta_{1}-1}+\frac{\left(1-p_{up}\right)\eta_{2}}{\eta_{2}+1},\qquad\mathbb{E}\left[\left(\xi^{s}\right)^{2}\right]=\frac{p_{up}\eta_{1}}{\eta_{1}-2}+\frac{\left(1-p_{up}\right)\eta_{2}}{\eta_{2}+2}.\label{eq: Appendix expectation for S kappas}
\end{equation}
However, since the LETF experiences slightly different jumps as per
(\ref{eq: LETF jump multiplier}), which we repeat here for convenience,
\begin{eqnarray} \xi^{\ell} & = & \begin{dcases} \xi^{s} & \textrm{if }\xi^{s}>\left(\beta-1\right)/\beta,\\ \frac{\left(\beta-1\right)}{\beta} & \textrm{if }\xi^{s}\leq\left(\beta-1\right)/\beta, \end{dcases}\label{eq: Appendix LETF jump multiplier} \end{eqnarray}we
cannot use the results (\ref{eq: Appendix Kappa and Kappa2 for S dynamics})
directly. Instead, expressions for $\kappa_{1}^{\ell}$, $\kappa_{2}^{\ell}$
and $\kappa_{\chi}^{\ell,s}$ are obtained using the results the following
lemma.
\begin{lem}
\label{lem: Deriving LETF kappas for Kou model}($\kappa_{1}^{\ell}$,
$\kappa_{2}^{\ell}$ and $\kappa_{\chi}^{\ell,s}$ in the \cite{KouOriginal}
model) Suppose the jump multiplier $\xi^{s}$ in the $S$-dynamics
(\ref{eq: S dynamics}) has PDF $G\left(\xi^{s}\right)$ given by
(\ref{eq: Jump pdf - KOU}), and a LETF with returns multiplier $\beta>1$
and dynamics (\ref{eq: LETF dynamics}) has jump multiplier $\xi^{\ell}$,
which is defined in terms of $\xi^{s}$ as per (\ref{eq: Appendix LETF jump multiplier}).
Then the quantities 
\begin{equation}
\kappa_{1}^{\ell}=\mathbb{E}\left[\xi^{\ell}-1\right],\qquad\kappa_{2}^{\ell}=\mathbb{E}\left[\left(\xi^{\ell}-1\right)^{2}\right],\qquad\kappa_{\chi}^{\ell,s}=\mathbb{E}\left[\left(\xi^{\ell}-1\right)\left(\xi^{s}-1\right)\right],\label{eq: Appendix Kappas for LETF dynamics}
\end{equation}
required by the IR-optimal investment strategy in Proposition \ref{prop: Closed-form Optimal control for LETF}
can be obtained using the following expressions:
\begin{eqnarray}
\mathbb{E}\left[\xi^{\ell}\right] & = & \frac{p_{up}\eta_{1}}{\eta_{1}-1}+\left(1-p_{up}\right)\eta_{2}\cdot\left[\frac{\vartheta^{\eta_{2}+1}}{\eta_{2}}+\left(\frac{1-\vartheta^{\eta_{2}+1}}{\eta_{2}+1}\right)\right],\label{eq: Appendix E_ell}\\
\mathbb{E}\left[\left(\xi^{\ell}\right)^{2}\right] & = & \frac{p_{up}\eta_{1}}{\eta_{1}-2}+\left(1-p_{up}\right)\eta_{2}\cdot\left[\frac{\vartheta^{\eta_{2}+2}}{\eta_{2}}+\left(\frac{1-\vartheta^{\eta_{2}+2}}{\eta_{2}+2}\right)\right],\label{eq: Appendix E_ell_2}\\
\mathbb{E}\left[\xi^{\ell}\xi^{s}\right] & = & \frac{p_{up}\eta_{1}}{\eta_{1}-2}+\left(1-p_{up}\right)\eta_{2}\cdot\left[\frac{\vartheta^{\eta_{2}+2}}{\eta_{2}+1}+\left(\frac{1-\vartheta^{\eta_{2}+2}}{\eta_{2}+2}\right)\right],\label{eq: Appendix E_ell_E_s}
\end{eqnarray}
where $\vartheta=\left(\beta-1\right)/\beta$.
\end{lem}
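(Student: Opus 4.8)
The plan is to obtain the three expectations $\mathbb{E}\left[\xi^{\ell}\right]$, $\mathbb{E}\left[\left(\xi^{\ell}\right)^{2}\right]$ and $\mathbb{E}\left[\xi^{\ell}\xi^{s}\right]$ by integrating the relevant functional of $\xi^{\ell}$ against the Kou density $G\left(\xi^{s}\right)$ in (\ref{eq: Jump pdf - KOU}), after which the quantities in (\ref{eq: Appendix Kappas for LETF dynamics}) follow by elementary algebra, since $\kappa_{1}^{\ell}=\mathbb{E}\left[\xi^{\ell}\right]-1$, $\kappa_{2}^{\ell}=\mathbb{E}\left[\left(\xi^{\ell}\right)^{2}\right]-2\mathbb{E}\left[\xi^{\ell}\right]+1$ and $\kappa_{\chi}^{\ell,s}=\mathbb{E}\left[\xi^{\ell}\xi^{s}\right]-\mathbb{E}\left[\xi^{\ell}\right]-\mathbb{E}\left[\xi^{s}\right]+1$, with $\mathbb{E}\left[\xi^{s}\right]$ given by the first identity in (\ref{eq: Appendix expectation for S kappas}).

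The key observation that organizes the computation is that the threshold in the definition (\ref{eq: Appendix LETF jump multiplier}) of $\xi^{\ell}$ satisfies $\vartheta=\left(\beta-1\right)/\beta\in\left(0,1\right)$ because $\beta>1$. Consequently, on the support $\xi^{s}\geq1$ of the upward-jump component of $G$ one always has $\xi^{s}>\vartheta$, so $\xi^{\ell}=\xi^{s}$ there and the upward-jump contribution to each of the three expectations is literally the same as the corresponding integral computed for $\xi^{s}$ itself. Only the downward-jump component, supported on $\left[0,1\right)$ with density $\left(1-p_{up}\right)\eta_{2}\left(\xi^{s}\right)^{\eta_{2}-1}$, is modified, and it splits into the sub-region $\xi^{s}\in\left(\vartheta,1\right)$, on which $\xi^{\ell}=\xi^{s}$, and the sub-region $\xi^{s}\in\left[0,\vartheta\right]$, on which $\xi^{\ell}\equiv\vartheta$ by (\ref{eq: Appendix LETF jump multiplier}).

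I would then evaluate, for each of the functionals $\xi^{\ell}$, $\left(\xi^{\ell}\right)^{2}$ and $\xi^{\ell}\xi^{s}$, the three resulting elementary power-law integrals: the integral over $\left[1,\infty\right)$, which reproduces the term $p_{up}\eta_{1}/\left(\eta_{1}-1\right)$ resp. $p_{up}\eta_{1}/\left(\eta_{1}-2\right)$ already seen in (\ref{eq: Appendix expectation for S kappas}) and requires $\eta_{1}>1$ resp. $\eta_{1}>2$ for convergence; the integral over $\left(\vartheta,1\right)$ of a monomial in $\xi^{s}$, producing the factors $\left(1-\vartheta^{\eta_{2}+1}\right)/\left(\eta_{2}+1\right)$ resp. $\left(1-\vartheta^{\eta_{2}+2}\right)/\left(\eta_{2}+2\right)$; and the integral over $\left[0,\vartheta\right]$ of $\vartheta\left(\xi^{s}\right)^{\eta_{2}-1}$, $\vartheta^{2}\left(\xi^{s}\right)^{\eta_{2}-1}$ and $\vartheta\left(\xi^{s}\right)^{\eta_{2}}$ respectively, producing the terms $\vartheta^{\eta_{2}+1}/\eta_{2}$, $\vartheta^{\eta_{2}+2}/\eta_{2}$ and $\vartheta^{\eta_{2}+2}/\left(\eta_{2}+1\right)$. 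Summing the three pieces in each case yields precisely (\ref{eq: Appendix E_ell})--(\ref{eq: Appendix E_ell_E_s}).

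There is no real obstacle here; the argument is pure bookkeeping, amounting to splitting the integration domain at $\vartheta$ and evaluating monomial integrals. The one point deserving a moment's care is keeping track of the exponents in the sub-integrals — in particular, for $\mathbb{E}\left[\xi^{\ell}\xi^{s}\right]$ the integrand on $\left[0,\vartheta\right]$ is $\vartheta\left(\xi^{s}\right)^{\eta_{2}}$ rather than $\vartheta^{2}\left(\xi^{s}\right)^{\eta_{2}-1}$ as it would be for $\mathbb{E}\left[\left(\xi^{\ell}\right)^{2}\right]$, which is exactly what replaces the denominator $\eta_{2}$ of (\ref{eq: Appendix E_ell_2}) by $\eta_{2}+1$ in (\ref{eq: Appendix E_ell_E_s}) — together with recording the parameter restrictions $\eta_{1}>1$ and $\eta_{1}>2$ needed for the first and second moments of $\xi^{s}$ (and hence of $\xi^{\ell}$) to be finite.
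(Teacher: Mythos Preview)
Your proposal is correct and follows essentially the same approach as the paper: both split the integration over $\xi^{s}$ at the points $\vartheta$ and $1$, use $\xi^{\ell}=\xi^{s}$ on $(\vartheta,\infty)$ and $\xi^{\ell}=\vartheta$ on $[0,\vartheta]$, and evaluate the resulting elementary integrals. The only cosmetic difference is that the paper passes to the log-variable $y=\log\xi^{s}$ for the downward-jump integrals, whereas you integrate the power laws directly in $\xi^{s}$.
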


\begin{proof}
Consider (\ref{eq: Appendix E_ell_E_s}). Since $\beta>1$ and $\vartheta=\left(\beta-1\right)/\beta$,
we have $0<\vartheta<1$. Therefore, using the definition of the LETF
jump multiplier (\ref{eq: Appendix LETF jump multiplier}) and the
PDF $G\left(\xi^{s}\right)$, we have 
\begin{eqnarray}
\mathbb{E}\left[\xi^{\ell}\xi^{s}\right] & = & \int_{0}^{\infty}\xi^{\ell}\xi^{s}\cdot G\left(\xi^{s}\right)d\xi^{s}\nonumber \\
 & = & \vartheta\cdot\int_{0}^{\vartheta}\xi^{s}\cdot G\left(\xi^{s}\right)d\xi^{s}+\int_{\vartheta}^{1}\left(\xi^{s}\right)^{2}\cdot G\left(\xi^{s}\right)d\xi^{s}+\int_{1}^{\infty}\left(\xi^{s}\right)^{2}\cdot G\left(\xi^{s}\right)d\xi^{s}.\label{eq: Appendix deriving E_ell_E_s step 1}
\end{eqnarray}
Standard results (see (\ref{eq: Appendix expectation for S kappas})
and \cite{KouOriginal}) for the Kou model gives 
\begin{eqnarray}
\int_{1}^{\infty}\left(\xi^{s}\right)^{2}\cdot G\left(\xi^{s}\right)d\xi^{s} & = & \frac{p_{up}\eta_{1}}{\eta_{1}-2}.\label{eq: Appendix deriving E_ell_E_s step 2}
\end{eqnarray}
Using (\ref{eq: Appendix deriving E_ell_E_s step 2}) and writing
the first two terms of (\ref{eq: Appendix deriving E_ell_E_s step 1})
in terms of the log jump multiplier $y=\log\xi^{s}$ with PDF $g\left(y\right)$
as per (\ref{eq: log jump pdf - KOU}), we have 
\begin{eqnarray}
\mathbb{E}\left[\xi^{\ell}\xi^{s}\right] & = & \vartheta\cdot\int_{-\infty}^{\log\vartheta}e^{y}g\left(y\right)dy+\int_{\log\vartheta}^{0}e^{2y}g\left(y\right)dy+\frac{p_{up}\eta_{1}}{\eta_{1}-2}\nonumber \\
 & = & \left(1-p_{up}\right)\eta_{2}\vartheta\cdot\int_{-\infty}^{\log\vartheta}e^{\left(\eta_{2}+1\right)y}dy+\left(1-p_{up}\right)\eta_{2}\int_{\log\vartheta}^{0}e^{\left(\eta_{2}+2\right)y}dy+\frac{p_{up}\eta_{1}}{\eta_{1}-2}.\label{eq: Appendix deriving E_ell_E_s step 3}
\end{eqnarray}
Simplifying (\ref{eq: Appendix deriving E_ell_E_s step 3}) gives
(\ref{eq: Appendix E_ell_E_s}). Since (\ref{eq: Appendix E_ell})
and (\ref{eq: Appendix E_ell_2}) can be obtained using similar arguments,
the details are omitted.
\end{proof}

\subsection{Proof of Corollary \ref{cor: Closed-form Optimal control for VETF}\label{subsec: Appendix proof of Optimal control for VETF}}

For purposes of intuition, we first give informal arguments as to
how the results of Corollary \ref{cor: Closed-form Optimal control for VETF}
relate to the results of Proposition \ref{prop: Closed-form Optimal control for LETF}.
Recall that the VETF has returns multiplier $\beta=1$ (i.e. the VETF
simply aims to replicate the returns of $S$ before costs) and expense
ratio $c_{v}>0$. Note that if we let $\beta\downarrow1$ in (\ref{eq: LETF jump multiplier}),
we have 
\begin{eqnarray}
\lim_{\beta\downarrow1}\xi^{\ell} & = & \xi^{s}\qquad\textrm{a.s.},\label{eq: Appendix VETF opt proof multipliers converge}
\end{eqnarray}
from which it follows that 
\begin{equation}
\lim_{\beta\downarrow1}\kappa_{1}^{\ell}=\kappa_{1}^{s},\quad\lim_{\beta\downarrow1}\kappa_{2}^{\ell}=\kappa_{2}^{s},\quad\lim_{\beta\downarrow1}\kappa_{\chi}^{\ell,s}=\kappa_{2}^{s}.\label{eq: Appendix VETF opt proof kappas converge}
\end{equation}

Therefore, comparing the VETF and LETF investor wealth dynamics (\ref{eq: SDE W VETF})-(\ref{eq: SDE W LETF})
in the case of identical expense ratios (i.e. $c_{\ell}=c_{v}$),
identical but not necessarily optimal investment strategies ($\varrho_{\ell}=\varrho_{v}$)
and the identical initial wealth, we have 
\begin{eqnarray}
\lim_{\beta\downarrow1}W_{\ell}\left(t\right) & = & W_{v}\left(t\right)\quad\textrm{a.s. }\forall t\in\left[t_{0},T\right],\quad\textrm{if }W_{\ell}\left(t_{0}\right)=W_{v}\left(t_{0}\right),c_{\ell}=c_{v},\quad\textrm{and }\varrho_{\ell}=\varrho_{v}.\label{eq: Appendix VETF opt proof Wealth converges}
\end{eqnarray}

In other words, if we let $\beta\downarrow1$ in the LETF investor
wealth dynamics (\ref{eq: SDE W LETF}), we recover the VETF investor
wealth dynamics (\ref{eq: SDE W VETF}). Continuing to proceed informally,
the results of Corollary \ref{cor: Closed-form Optimal control for VETF}
can therefore be obtained by letting $\beta\downarrow1$ in the results
of Proposition \ref{prop: Closed-form Optimal control for LETF},
provided we use the VETF expense ratio $c_{v}$ in both (\ref{eq: SDE W LETF})
and (\ref{eq: SDE W VETF}). Note that the definition (\ref{eq: Constant K beta for LETF})
of $K_{\beta}^{\ell,s}$, results (\ref{eq: Appendix VETF opt proof kappas converge})
and setting $c_{\ell}=c_{v}$ imply that 
\begin{eqnarray}
\lim_{\beta\downarrow1}K_{\beta}^{\ell,s} & = & \lim_{\beta\downarrow1}\left[\mu-r-\frac{\left(\beta\left[\mu+\lambda\left(\kappa_{1}^{\ell}-\kappa_{1}^{s}\right)-r\right]-c_{v}\right)\left(\sigma^{2}+\lambda\kappa_{\chi}^{\ell,s}\right)}{\beta\left(\sigma^{2}+\lambda\kappa_{2}^{\ell}\right)}\right]\nonumber \\
 & = & c_{v},\label{eq: Appendix VETF opt proof constant K converges to c_v}
\end{eqnarray}
which confirms that the functions $g_{v}$ and $h_{v}$ ((\ref{eq: function g for VETF})-(\ref{eq: function h for VETF}))
can be obtained from the functions the functions $g_{\ell}$ and $h_{\ell}$
((\ref{eq: function g for LETF})-(\ref{eq: function h for LETF}))
if identical expense ratios are used.

The preceding discussions were merely informal. More formally, the
proof of Corollary \ref{cor: Closed-form Optimal control for VETF}
proceeds along the same lines as the proof of Proposition \ref{prop: Closed-form Optimal control for LETF},
except that VETF investor wealth dynamics (\ref{eq: SDE W VETF})
is used instead of (\ref{eq: SDE W LETF}), and details are therefore
omitted.

\subsection{Proof of Proposition \ref{prop: Special case ZERO costs}\label{subsec: Appendix proof of optimal strategies ZERO COSTS}}

Suppose we have zero expense ratios, i.e. $c_{v}=c_{\ell}=0$, and
there are no jumps in the underlying $S$-dynamics (i.e. $\lambda=0$
in (\ref{eq: S dynamics})). Substituting these values in the deterministic
functions $g_{\ell}$ and $h_{\ell}$ ((\ref{eq: function g for LETF})
and (\ref{eq: function h for LETF})) in the case of a LETF and the
deterministic functions $g_{v}$ and $h_{v}$ ((\ref{eq: function g for VETF})
and (\ref{eq: function h for VETF})) in the case of the VETF, we
have
\begin{equation}
g_{\ell}\left(t\right)=g_{v}\left(t\right)=1,\label{eq: Appendix g1 g_beta for zero costs}
\end{equation}
and 
\begin{equation}
h_{\ell}\left(t\right)=h_{v}\left(t\right)=0.\label{eq: Appendix h1 h_beta for zero costs}
\end{equation}

In the case of the LETF investor, the optimal control (\ref{eq: LETF optimal strategy})
now satisfies
\begin{eqnarray}
\beta\cdot W_{\ell}^{\ast}\left(t\right)\cdot\varrho_{\ell}^{\ast}\left(t,\boldsymbol{X}_{\ell}^{\ast}\left(t\right)\right) & = & \left(\frac{\mu-r}{\sigma^{2}}\right)\cdot\left[\gamma e^{-r\left(T-t\right)}-\left(W_{\ell}^{\ast}\left(t\right)-\hat{W}\left(t\right)\right)\right]+\hat{\varrho}_{s}\left(t\right)\hat{W}\left(t\right),\label{eq: Appendix LETF optimal control zero costs}
\end{eqnarray}
whereas in the case of the VETF investor, the optimal control (\ref{eq: LETF optimal strategy})
now becomes 
\begin{eqnarray}
W_{v}^{\ast}\left(t\right)\cdot\varrho_{v}^{\ast}\left(t,\boldsymbol{X}_{v}^{\ast}\left(t\right)\right) & = & \left(\frac{\mu-r}{\sigma^{2}}\right)\cdot\left[\gamma e^{-r\left(T-t\right)}-\left(W_{v}^{\ast}\left(t\right)-\hat{W}\left(t\right)\right)\right]+\hat{\varrho}_{s}\left(t\right)\hat{W}\left(t\right).\label{eq: Appendix SETF optimal control zero costs}
\end{eqnarray}

Using (\ref{eq: Appendix LETF optimal control zero costs}) and (\ref{eq: Appendix SETF optimal control zero costs}),
define the auxiliary process $Q\left(t\right)$ as
\begin{eqnarray}
Q\left(t\right) & = & e^{-rt}\cdot\left[W_{\ell}^{\ast}\left(t\right)-W_{v}^{\ast}\left(t\right)\right],\qquad t\in\left[t_{0}=0,T\right],\label{eq: Appendix Q as discounted wealth difference}
\end{eqnarray}
with $Q\left(t_{0}\right)=e^{-rt_{0}}\left[W_{\ell}^{\ast}\left(t_{0}\right)-W_{v}^{\ast}\left(t_{0}\right)\right]=w_{0}-w_{0}=0.$

Substituting the optimal controls in this special case ((\ref{eq: Appendix LETF optimal control zero costs})
and (\ref{eq: Appendix SETF optimal control zero costs})) into the
wealth dynamics (\ref{eq: SDE W VETF})-(\ref{eq: SDE W LETF}) and
recalling that there are no jumps, we obtain the dynamics
\begin{align}
\frac{dQ\left(t\right)}{Q\left(t\right)}= & \left(\frac{\mu-r}{\sigma}\right)^{2}\cdot dt-\left(\frac{\mu-r}{\sigma}\right)\cdot dZ\left(t\right).\label{eq: Appendix Q dynamics}
\end{align}
Since $Q\left(t_{0}\right)=0$, dynamics (\ref{eq: Appendix Q dynamics})
therefore imply that $Q\left(t\right)=0,\forall t\geq t_{0}$, so
that in the special case of zero costs, we have 
\begin{eqnarray}
W_{\ell}^{\ast}\left(t\right) & = & W_{v}^{\ast}\left(t\right),\qquad\forall t\in\left[t_{0},T\right],\label{eq:Appendix wealth equal for zero costs}
\end{eqnarray}
which confirms (\ref{eq: prop zero cost - W equal}).

Using (\ref{eq:Appendix wealth equal for zero costs}) to write $W^{\ast}\left(t\right)\coloneqq W_{\ell}^{\ast}\left(t\right)=W_{v}^{\ast}\left(t\right)$
in this special case, the difference in controls (\ref{eq: Appendix LETF optimal control zero costs})
and (\ref{eq: Appendix SETF optimal control zero costs}) satisfy
\begin{align}
\left[\beta\cdot\varrho_{\ell}^{\ast}\left(t,\boldsymbol{X}_{\ell}^{\ast}\left(t\right)\right)-\varrho_{v}^{\ast}\left(t,\boldsymbol{X}_{v}^{\ast}\left(t\right)\right)\right]\cdot W^{\ast}\left(t\right)= & \beta\cdot W_{\ell}^{\ast}\left(t\right)\cdot\varrho_{\ell}^{\ast}\left(t,\boldsymbol{X}_{\ell}^{\ast}\left(t\right)\right)-W_{v}^{\ast}\left(t\right)\cdot\varrho_{v}^{\ast}\left(t,\boldsymbol{X}_{v}^{\ast}\left(t\right)\right)\nonumber \\
= & -\left(\frac{\mu-r}{\sigma^{2}}\right)\left[W_{\ell}^{\ast}\left(t\right)-W_{v}^{\ast}\left(t\right)\right]\nonumber \\
= & 0,\qquad\forall t\geq t_{0},\label{eq: Appendix control difference zero costs}
\end{align}
thereby verifying (\ref{eq: prop zero cost - LETF control a beta multiple}).

Finally, given the form of the optimal control (\ref{eq: Appendix SETF optimal control zero costs})
for the VETF in this special case, together with Assumption \ref{assu: Stylized-assumptions-for closed-form}
and wealth dynamics (\ref{eq: SDE W VETF}), imply that we can obtain
the optimal Information Ratio in the case of the VETF as given by
(\ref{eq: Prop zero cost - IR optimal equal}) (see \cite{PvSForsythLi2022_stochbm}).
However, since (\ref{eq:Appendix wealth equal for zero costs}), this
is also the optimal IR using the LET in this special case, thereby
confirming (\ref{eq: Prop zero cost - IR optimal equal}) and completing
the proof of Proposition \ref{prop: Special case ZERO costs}.

\section{Source data and parameters\label{sec:Appendix - Source-data}}

In this appendix, we provide details regarding the source data used
to obtain the indicative investment results presented in Section \ref{sec:Closed-form-solutions}
and Section \ref{sec:Indicative-investment-results}.

Returns data for US Treasury bills and bonds, as well as the broad
equity market index, were obtained from the CRSP\footnote{Calculations were based on data from the Historical Indexes 2024©,
Center for Research in Security Prices (CRSP), The University of Chicago
Booth School of Business. Wharton Research Data Services was used
in preparing this article. This service and the data available thereon
constitute valuable intellectual property and trade secrets of WRDS
and/or its third party suppliers.}. In more detail, the historical time series are as follows:
\begin{enumerate}
\item T30 (30-day Treasury bill): CRSP, monthly returns for 30-day Treasury
bill.
\item B10 (10-year Treasury bond): CRSP, monthly returns for 10-year Treasury
bond.
\item Market (broad equity market index): CRSP, monthly and daily returns,
including dividends and distributions, for a capitalization-weighted
index consisting of all domestic stocks trading on major US exchanges
(the VWD index).
\end{enumerate}
CRSP data was obtained for the historical time period 1926:01 to 2023:12.
All time series were inflation-adjusted using inflation data from
the US Bureau of Labor Statistics\footnote{The annual average CPI-U index, which is based on inflation data for
urban consumers, were used - see \texttt{http://www.bls.gov.cpi}}.

\subsection{Constructing VETF and LETF returns time series\label{subsec: Appendix Constructing-SETF-and LETF returns time series}}

LETFs were only introduced in 2006 (\cite{BansalMarshall2015}), whereas
the first VETFs were listed in the US in the 1990s. In order to obtain
longer time series of returns for indicative investment results of
Section \ref{sec:Indicative-investment-results}, we construct a proxy
returns time series for a VETF and LETF referencing a broad equity
market index as follows:
\begin{enumerate}
\item Obtain daily returns for the underlying broad equity market index
referenced by the VETF and LETF. For this purpose, we used daily returns
for the CRSP capitalization-weighted index consisting of all domestic
stocks trading on major US exchanges (the VWD index - see above),
with historical data that is available since January 1926. We prefer
to use a time series that is as long as possible, since this would
include additional periods of exceptional market volatility such as
1929-1933.
\item Multiply each daily return by the returns multiplier $\beta$, where
we used $\beta=2$ for the LETF and $\beta=1$ for the VETF, and construct
a time series of monthly returns.
\item Adjust the time series of VWD returns using (\ref{eq: Approx LETF dynamics ito S dynamics})
to reflect the ETF expense ratios $c_{k},k\in\left\{ v,\ell\right\} $
and the observed T-bill rate $r$. As per Table \ref{tab: Closed-form solns - Candidate assets and benchmark},
we assumed expense ratios of $c_{\ell}=0.89\%$ p.a. for the LETF
and $c_{v}=0.06\%$ p.a. for the VETF to reflect typical values observed
in the market.
\item Inflation-adjust the time series using inflation data from the US
Bureau of Labor Statistics.
\end{enumerate}

Although we can construct a synthetic LETF that achieves the desired leveraged returns by explicitly assuming borrowing at the risk-free rate plus a spread, in practice, LETFs are managed by large institutional fund managers who typically achieve leverage more efficiently through total return swaps. Due to their significant scale and high creditworthiness, these institutions can negotiate swap agreements priced at levels very close to the risk-free (T-bill) rate. Consequently, the effective borrowing cost embedded within the LETF fee remains minimal (e.g. captured by the T-bill rate and the expense ratio $c_{\ell}=0.89\%$), which is significantly lower than the direct borrowing spreads (e.g. $b = 3\%$) typically faced by other investors. This institutional advantage therefore justifies modelling the synthetic historical LETF returns as if borrowing effectively occurs at the T-bill rate with a modest LETF fee.

Note that a proxy time series of LETF returns is similarly constructed
in \cite{BansalMarshall2015}, although a number of details (such
as inflation adjustment of returns and choice of underlying index)
differ. As noted in \cite{BansalMarshall2015}, the construction of
such a proxy returns time series assumes that the ETF managers achieve
a negligible tracking error with respect to the underlying index.
We observe that these assumptions are often made out of necessity
in the literature concerning LETFs (e.g. \cite{BansalMarshall2015},
\cite{LeungSircar2015}). Furthermore, given improvements in designing
replication strategies for LETFs that remain robust even during periods
of market volatility (see for example \cite{GuasoniMayerhofer2023}),
this appears to be a reasonable assumption for ETFs written on major
stock market indices as considered in this paper.

We emphasize that the proxy time series for VETF and LETF returns
are only used for bootstrapping the data sets for the numerical solutions
implementing the data-driven neural network approach (Section \ref{sec:Numerical-solutions}
and Section \ref{sec:Indicative-investment-results}), and \textit{not}
for the closed-form solutions of Section \ref{sec:Closed-form-solutions}.
This follows since closed-form solutions in Section \ref{sec:Closed-form-solutions}
assume parametric dynamics for the underlying assets including the
broad equity market index, from which the LETF and VETF dynamics can
be constructed using (\ref{eq: LETF dynamics}) and (\ref{eq: VETF dynamics}),
respectively. 

\subsection{Calibrated parameters for closed-form solutions\label{sec: Appendix Calibrated-parameters-for closed-form} }

For the closed-form solutions of Section \ref{sec:Closed-form-solutions},
using the CRSP data for 30-day T-bills and the broad equity market
index (VWD index) for the period 1926:01 to 2023:12 as outlined above,
the filtering technique as per \cite{DangForsyth2016,ForsythVetzal2016}
for calibrating inflation-adjusted \cite{KouOriginal} jump-diffusion
processes resulted in the calibrated process parameters as presented
in Table \ref{tab:Calibrated asset parameters for analytical solutions}.
Given the specified dynamics (\ref{eq: B dynamics})-(\ref{eq: S dynamics})
of the risk-free asset $B$ and equity market index $S$ (with parameters
as in Table \ref{tab:Calibrated asset parameters for analytical solutions}),
we can obtain the dynamics of the LETF (\ref{eq: LETF dynamics})
and VETF (\ref{eq: VETF dynamics}).

\noindent 
\begin{table}[!hbt]
\caption{\label{tab:Calibrated asset parameters for analytical solutions}Closed-form
solutions: Calibrated, inflation-adjusted parameters for asset dynamics
(\ref{eq: B dynamics}) and (\ref{eq: S dynamics}), assuming the
\cite{KouOriginal} jump-diffusion model with $G\left(\xi^{s}\right)$
given by (\ref{eq: Jump pdf - KOU}). The calibration methodology
of \cite{DangForsyth2016,ForsythVetzal2016} is used with a jump threshold
parameter value of 3.}

\noindent \centering{}{\footnotesize{}}%
\begin{tabular}{|>{\raggedright}m{4cm}||>{\centering}p{0.9cm}|c|c|c|c|c|c|}
\hline 
\multirow{2}{4cm}{{\footnotesize{}Assumption for $S$-dynamics}} & \multicolumn{7}{c|}{{\footnotesize{}Calibrated parameters}}\tabularnewline
\cline{2-8} \cline{3-8} \cline{4-8} \cline{5-8} \cline{6-8} \cline{7-8} \cline{8-8} 
 & {\footnotesize{}$r$} & {\footnotesize{}$\mu$} & {\footnotesize{}$\sigma$} & {\footnotesize{}$\lambda$} & {\footnotesize{}$p_{up}$} & {\footnotesize{}$\eta_{1}$} & {\footnotesize{}$\eta_{2}$}\tabularnewline
\hline 
{\footnotesize{}Jump-diffusion (Kou model)} & {\footnotesize{}0.0031} & {\footnotesize{}0.0873} & {\footnotesize{}0.1477} & {\footnotesize{}0.3163} & {\footnotesize{}0.2258} & {\footnotesize{}4.3591} & {\footnotesize{}5.5337}\tabularnewline
\hline 
{\footnotesize{}GBM dynamics (no jumps)} & {\footnotesize{}0.0031} & {\footnotesize{}0.0819} & {\footnotesize{}0.1850} & {\footnotesize{}-} & {\footnotesize{}-} & {\footnotesize{}-} & {\footnotesize{}-}\tabularnewline
\hline 
\end{tabular}{\footnotesize\par}
\end{table}

Note that the values of the remaining parameters for the parametric
dynamics can be calculated by substituting the values of Table \ref{tab:Calibrated asset parameters for analytical solutions}
into  the results of Appendix \ref{subsec: Appendix expressions for kappas}.
This gives $\kappa_{1}^{s}=-0.0513$, $\kappa_{2}^{s}=0.0884$, $\kappa_{1}^{\ell}=-0.0500$,
$\kappa_{2}^{\ell}=0.0870$ and $\kappa_{\chi}^{\ell,s}=0.0876$.

{\PVSedit{
\section{Additional numerical results\label{sec:Appendix - Additional numerical results}}

Additional leverage and borrowing cost scenarios are analyzed as a
supplement to the results of Section \ref{sec:Indicative-investment-results}. The results of this appendix confirm that the conclusions of Section \ref{sec:Indicative-investment-results} do not appear to be sensitive in a qualitative sense to the specific maximum leverage or borrowing costs parameters used in the numerical
analysis, provided these values are within a reasonable range.

\subsection{Scenarios: Leverage and borrowing costs\label{subsec: Appendix scenarios leverage and borrowing costs}}
 We consider scenarios where the maximum leverage allowed decreases from
$p_{max}=1.5$ to $p_{max}=1.2$, or increases to $p_{max}=2.0$,
and where zero borrowing costs might be applicable (as opposed to
borrowing costs of $b=0.03$ throughout Section \ref{sec:Indicative-investment-results}).

As discussed in Section \ref{sec:Indicative-investment-results},
we observe that the IR-optimal portfolio of the LETF investor achieves
partial stochastic dominance over the corresponding IR-optimal portfolio
of the VETF investor with the same outperformance target $\gamma$,
even if the VETF investment can be leveraged 2x, provided borrowing premiums are applicable - see Figure \ref{fig: Appendix NumSolns Sc1 - W_T CDFs - Lev and borrowing costs}(b). However, when borrowing premiums drop to zero (i.e. the unrealistic scenario where the investor can borrow at the T-bill rate), a leveraged IR-optimal VETF-based strategy with $p_{max}=2.0$ performs similarly, though \textit{slightly} better, than the LETF-based strategy with no leverage (see Figure \ref{fig: Appendix NumSolns Sc1 - W_T CDFs - Lev and borrowing costs}(a)). Figure \ref{fig: Appendix NumSolns Sc1 - Ratio CDFs - Lev and borrowing costs} demonstrates that similar observations also hold if we consider benchmark outperformance instead of portfolio wealth. For a further explanation of the results of Figure \ref{fig: Appendix NumSolns Sc1 - W_T CDFs - Lev and borrowing costs} and Figure \ref{fig: Appendix NumSolns Sc1 - Ratio CDFs - Lev and borrowing costs}, see Appendix \ref{subsec: Appendix Bootstrapped quarterly returns}.

\noindent 
\begin{figure}[!tbh]
\noindent \begin{centering}
\subfloat[CDF of $W\left(T\right)$ - Zero borrowing costs]{\includegraphics[scale=0.75]{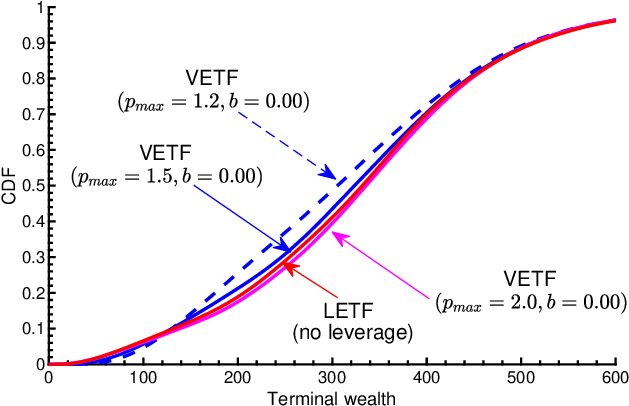}

}$\quad$$\quad$\subfloat[CDF of $W\left(T\right)$ - With borrowing costs]{\includegraphics[scale=0.75]{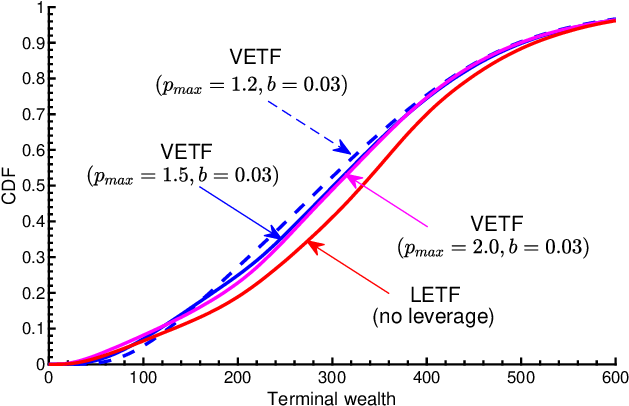}

}
\par\end{centering}
\caption{Effect of leverage and borrowing cost assumptions on CDFs of IR-optimal
terminal wealth $W_{k}^{\ast}\left(T\right),k\in\left\{ v,\ell\right\}$, with $T = 10$ years.\label{fig: Appendix NumSolns Sc1 - W_T CDFs - Lev and borrowing costs}}
\end{figure}

\noindent 
\begin{figure}[!tbh]
\noindent \begin{centering}
\subfloat[CDF of ratio $W^{\ast}\left(T\right)/\hat{W}\left(T\right)$ - Zero
borrowing costs]{\includegraphics[scale=0.75]{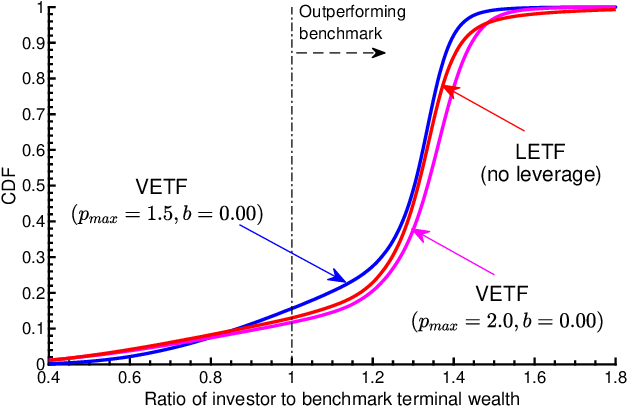}

}$\quad$$\quad$\subfloat[CDF of ratio $W\left(T\right)/\hat{W}\left(T\right)$ - With borrowing
costs]{\includegraphics[scale=0.75]{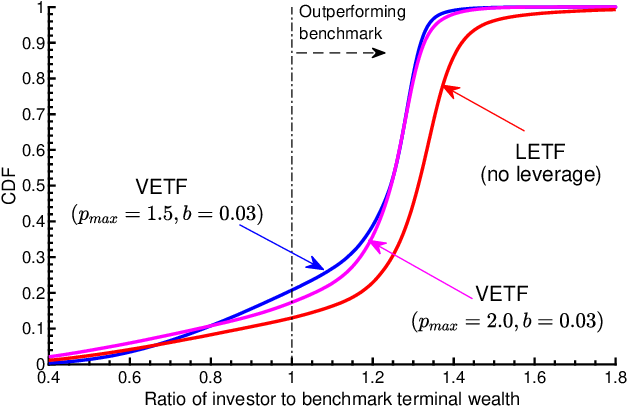}

}
\par\end{centering}
\caption{Effect of leverage and borrowing cost assumptions on the CDFs of IR-optimal
terminal wealth ratios $W_{k}^{\ast}\left(T\right)/\hat{W}\left(T\right),k\in\left\{ v,\ell\right\}$, with $T = 10$ years.
\label{fig: Appendix NumSolns Sc1 - Ratio CDFs - Lev and borrowing costs}}
\end{figure}
\noindent 
\begin{figure}[!tbh]
\noindent \begin{centering}
\subfloat[Zero costs]{\includegraphics[scale=0.75]{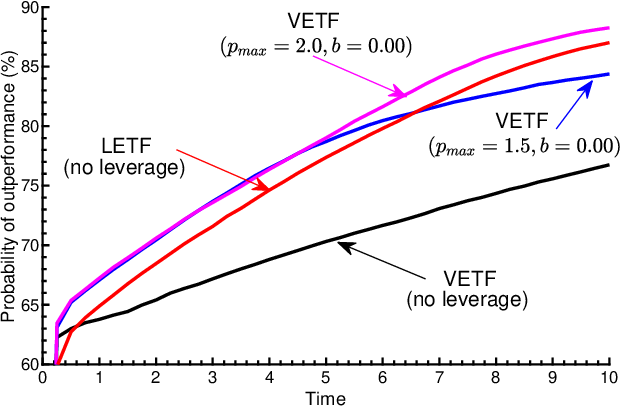}

}$\quad$$\quad$\subfloat[With costs]{\includegraphics[scale=0.75]{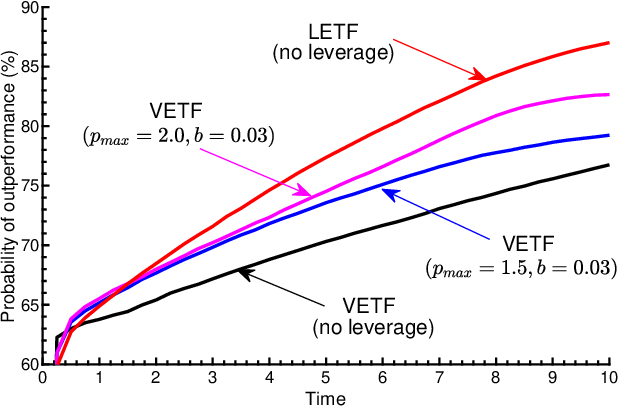}

}
\par\end{centering}
\caption{Effect of leverage and borrowing cost assumptions on the probability
$W_{k}^{\ast}\left(t\right)>\hat{W}\left(t\right),k\in\left\{ v,\ell\right\} $
of benchmark outperformance as a function of time $t$, over the time horizon of $T = 10$ years. \label{fig: Appendix NumSolns Sc1 BM outperformance - LEVERAGE and COSTS}}
\end{figure}

\subsection{Historical data: Bootstrapped quarterly returns\label{subsec: Appendix Bootstrapped quarterly returns} }

We demonstrate that the observations from Section \ref{subsec:Intuition:-lump-sum investment scenario}
(obtained using calibrated parametric models for the underlying assets) also
hold in the case of the historical returns used to obtain the results
of Section \ref{sec:Indicative-investment-results} and Appendix \ref{subsec: Appendix scenarios leverage and borrowing costs}.
Recall that in Section \ref{subsec:Intuition:-lump-sum investment scenario}
we observed that holding a LETF position for one quarter amounts to
holding a ``continuously rebalanced'' position in the equity index
and bonds, which results in a power law-type payoff from holding the
LETF.

Figure \ref{fig: NumSolns bootstrapped pathwise return differences_b0}
and Figure \ref{fig: NumSolns bootstrapped pathwise return differences_b3}
compare the pathwise quarterly returns of two simple strategies using
the bootstrapped historical data. The strategies consist of (i) investing
all wealth in the LETF at the start of a quarter, and (ii) investing
200\% of wealth in the VETF at the start of a quarter funded by borrowing
100\% of wealth at the T-bill rate plus a borrowing premium (where applicable). Outcomes are compared at the
end of the quarter with no intermediate trading. The only difference
between Figure \ref{fig: NumSolns bootstrapped pathwise return differences_b0}
and Figure \ref{fig: NumSolns bootstrapped pathwise return differences_b3}
is that Figure \ref{fig: NumSolns bootstrapped pathwise return differences_b0}
applies a zero borrowing premium ($b=0.00$) to fund short positions
in the VETF, whereas Figure \ref{fig: NumSolns bootstrapped pathwise return differences_b3}
applies a borrowing premium of $b=0.03$ to fund short positions in
the VETF.

Figure \ref{fig: NumSolns bootstrapped pathwise return differences_b0}(a)
and Figure \ref{fig: NumSolns bootstrapped pathwise return differences_b3}(a)
illustrate the quarterly returns of the simple strategies (y-axis)
for a given level of equity index quarterly return (x-axis) over the
quarter. Figure \ref{fig: NumSolns bootstrapped pathwise return differences_b0}(b)
and and Figure \ref{fig: NumSolns bootstrapped pathwise return differences_b3}(b)
illustrate the distribution of pathwise quarterly return differences
where, for each value of the x-axis in the corresponding Figure \ref{fig: NumSolns bootstrapped pathwise return differences_b0}(a)
and Figure \ref{fig: NumSolns bootstrapped pathwise return differences_b3}(a),
and therefore for a particular given path of (joint) asset returns,
we calculate the vertical difference between the returns of the two
strategies.

\noindent 
\begin{figure}[!tbh]
\noindent \begin{centering}
\subfloat[Pathwise payoffs of simple strategies]{\includegraphics[scale=0.75]{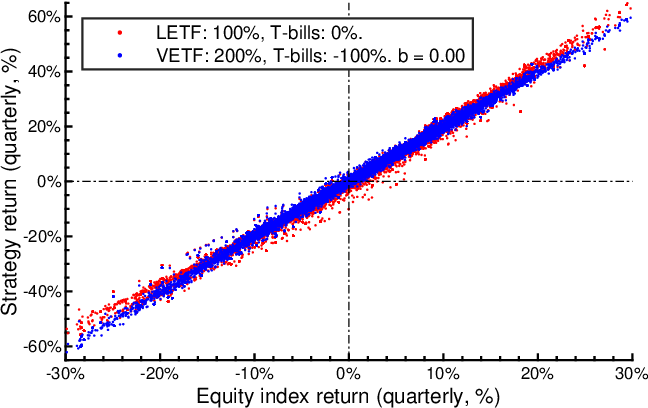}

}$\quad$$\quad$\subfloat[Distribution - pathwise return differences]{\includegraphics[scale=0.75]{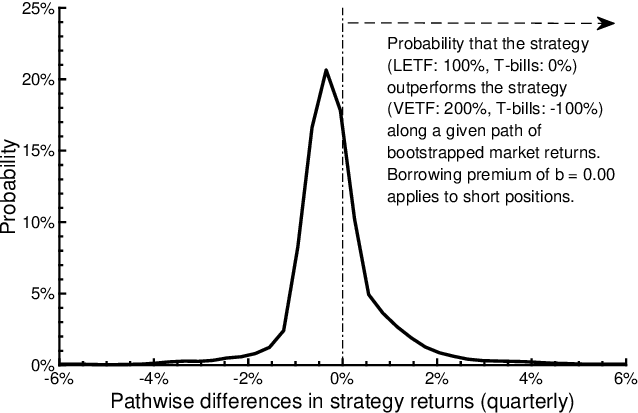}

}
\par\end{centering}
\caption{Zero borrowing premium ($b = 0.00$): Pathwise comparison of the bootstrapped historical quarterly inflation-adjusted returns of two simple strategies.See Appendix \ref{subsec: Appendix Bootstrapped quarterly returns} for more detail.  
\label{fig: NumSolns bootstrapped pathwise return differences_b0}}
\end{figure}
\noindent 
\begin{figure}[!tbh]
\noindent \begin{centering}
\subfloat[Pathwise payoffs of simple strategies]{\includegraphics[scale=0.75]{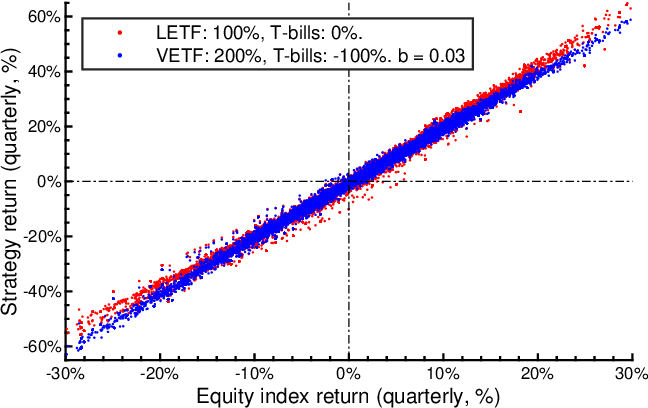}

}$\quad$$\quad$\subfloat[Distribution - pathwise return differences]{\includegraphics[scale=0.75]{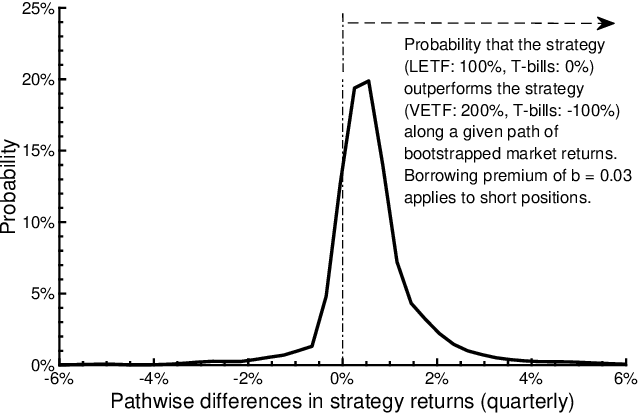}

}
\par\end{centering}
\caption{Positive borrowing premium ($b = 0.03$): Pathwise comparison of the bootstrapped historical quarterly inflation-adjusted returns of two simple strategies. See Appendix \ref{subsec: Appendix Bootstrapped quarterly returns} for more detail.
\label{fig: NumSolns bootstrapped pathwise return differences_b3}}
\end{figure}
Figure \ref{fig: NumSolns bootstrapped pathwise return differences_b0}(a)
and Figure \ref{fig: NumSolns bootstrapped pathwise return differences_b3}(a)
confirm that the call-like payoff of the LETF also holds in the bootstrapped
historical data. However, this is translated into a slight advantage for the LETF
relative to the 2x leveraged VETF strategy only when borrowing premiums
are positive (compare Figure \ref{fig: NumSolns bootstrapped pathwise return differences_b0}(b)
and Figure \ref{fig: NumSolns bootstrapped pathwise return differences_b3}(b)),
i.e. when the investor cannot fund short positions at only the T-bill
rate. 

Given these payoff structures of the LETF,
the IR-optimal LETF strategy responds to gains by reducing exposure
to the LETF, thus locking in the results of prior quarters of good
performance while reducing exposure to future possible losses by having
lower exposure to the LETF (see Section \ref{sec:Closed-form-solutions} and Section \ref{sec:Indicative-investment-results}). The compounding effect of applying the
contrarian IR-optimal investment strategy quarter after quarter given
returns as per Figure \ref{fig: NumSolns bootstrapped pathwise return differences_b0}(a)
and Figure \ref{fig: NumSolns bootstrapped pathwise return differences_b3}(b)
ultimately results in the terminal wealth results reported in Figures
\ref{fig: Appendix NumSolns Sc1 - W_T CDFs - Lev and borrowing costs},
\ref{fig: Appendix NumSolns Sc1 - Ratio CDFs - Lev and borrowing costs}
and \ref{fig: Appendix NumSolns Sc1 BM outperformance - LEVERAGE and COSTS}.
}}

\end{appendices}

\end{document}